\definecolor{darkblue}{rgb}{0,0,0.5}
\DeclareMathOperator*\argmax{arg\,max}
\DeclareMathOperator\E{E}
\declaretheorem{theorem}
\declaretheorem[sibling=theorem]{corollary}
\declaretheorem[sibling=theorem,style=definition]{definition}
\declaretheorem[sibling=theorem]{lemma}
\declaretheorem[sibling=theorem,style=definition]{problem}
\declaretheorem[sibling=theorem]{proposition}
\declaretheorem[sibling=theorem,style=remark]{remark}
\newcommand\draft[1]{}
\newcommand\longshort[2]{#1} % #1 = long, #2 = short
\newcommand*\defeq{\mathrel{:=}}
\newcommand*\df[1]{\emph{\textbf{#1}}}
\newcommand*\Iff{\Leftrightarrow}
\newcommand*\Implies{\Rightarrow}
\newcommand*\maxsat{\textsc{MaxSAT}}
\newcommand*\randomG{{\bf G}}
\newcommand*\randomH{{\bf H}}
\newcommand*\reach[2]{{\cal R}_{#1}#2}
\newcommand*\R{\mathbb{R}}
\renewcommand\iff{\leftrightarrow}
\renewcommand\implies{\to}
\newcommand\bigcomment[2]{\draft{%
  \smallskip%
  \par\vbox{\hrule height1pt\hbox to\hsize{\advance\hsize by-10pt%
    \vrule width1.4pt%
    \hfil\vbox{\smallskip\noindent\textcolor{red}{\bf#1:} #2\smallskip}\hfil%
    \vrule width1.4pt%
  }\hrule depth1pt}%
  \smallskip}%
} %RG: you can say "\RG{Foo? \HY{Bar!}}" and it looks like a thread in a forum
\newcommand{\RG}[1]{\bigcomment{RG}{#1}}
\title{Abstraction Refinement Guided by a Learnt Probabilistic Model}
\begin{document}
\toappear{}
\maketitle

\begin{abstract}
The core challenge in designing an effective static program
analysis is to find a good program abstraction -- one that
retains only details relevant to a given query.
In this paper,
  we present a new approach for automatically finding such an abstraction.
Our approach uses a pessimistic strategy,
  which can optionally use guidance from a probabilistic model.
%The probabilistic model is tuned by observing previous runs of the analysis.
Our approach applies to parametric static analyses implemented in Datalog,
and is based on counterexample-guided abstraction refinement.
For each untried abstraction, our probabilistic model provides a probability of success,
while the size of the abstraction provides an estimate of its cost in terms of analysis time.
Combining these two metrics, probability and cost,
our refinement algorithm picks an optimal abstraction.
Our probabilistic model is a variant of the Erd\H{o}s--R\'enyi random graph model,
  and it is tunable by what we call hyperparameters.
We present a method to learn good values for these hyperparameters,
  by observing past runs of the analysis on an existing codebase.
We evaluate our approach on an object sensitive
  pointer analysis for Java programs, with two client analyses (PolySite and Downcast).

% vim:spell spelllang=en_gb:

\end{abstract}

\category{D.2.4}
  {Software Engineering}
  {Software\slash Program Verification}
\keywords
  Datalog,
  Horn,
  hypergraph,
  probability

\section{Introduction}\label{sec:introduction}
We wish that static program analyses would become better as
they see more code. Starting from this motivation, we designed
an abstraction refinement algorithm that incorporates knowledge
learnt from observing its own previous runs, on an existing
codebase. For a given query about a program, this
knowledge guides the algorithm towards a good abstraction that
retains only the details of the program relevant to the query.
Similar guidance also features in existing abstraction refinement
algorithms~\cite{slam:popl02,clarke:jacm03,blast:popl04},
but is based on nontrivial heuristics that are developed manually by
analysis designers.
These heuristics are often suboptimal and difficult to transfer from one analysis to another.
Our algorithm has the potential to improve itself by learning from past runs,
  and it applies to almost any analysis implemented in Datalog.

Prior work on abstraction refinement for Datalog~\cite{zhang:pldi14}
  implicitly uses an optimistic strategy:
  the search is geared towards finding an abstraction
    that would show the current counterexample to be spurious.
We take the complimentary approach:
  our search is geared towards finding an abstraction
    that would show the current counterexample to be unavoidable.
Furthermore, we bias the search by using a probabilistic model,
  which is tuned using information from previous runs of the analysis.

%In this paper we present our abstraction refinement algorithm
%and its companion probabilistic model. Our algorithm applies to
%any parametric static analysis implemented in Datalog,
%provided that its precision increases when the values of the
%parameters increase. Such analyses are typically run in a loop that
%iteratively refines the parameter setting. Our idea is to equip such
%an analysis with a probabilistic model
%that can predict, for every untried parameter setting, what would happen
%if the analysis were run with the setting.
%The model makes this prediction
%using information found by the analysis in the failed
%iterative process so far, and guides the analysis when it chooses
%a next parameter setting.
%The probabilistic model itself is parametrised.
%To distinguish the parameters of the analysis from those of the probabilistic model,
%  we call the latter hyperparameters.
%Good values for
%the hyperparameters are learnt by observing runs of the analysis
%on an existing codebase, and are later used when new unseen programs are analysed.

In other approaches to program analysis that are based on learning%
  ~\cite{yi:ipl07,Raychev-popl15},
  the analysis designer must choose appropriate features.
A feature is a measurable property of the program, usually a numeric one.
Choosing features that are effective for program analysis is nontrivial,
  and involves knowledge of both the analysis and the probabilistic model.
In our approach,
  the analysis designer does not need to choose appropriate features.

Instead of observing features,
  our models observe directly the internal representations of analysis runs.
Parametric static analyses implemented in Datalog
consist of universally quantified Horn clauses, and work by instantiating
the universal quantification of these clauses, while respecting the constraints on
instantiation imposed by a given parameter setting. These instantiated
Horn clauses are typically implications of the form
$$
h \leftarrow t_1,t_2,\ldots,t_n
$$
and can be understood as a directed (hyper) arc from the source
vertices $t_1,\ldots,t_n$ to the target vertex $h$.
Thus, the instantiated Horn clauses taken altogether form a hypergraph.
This hypergraph changes when we try the analysis again
with a different parameter setting. Given a hypergraph obtained under one
parameter setting, we build a probabilistic model that predicts how
the hypergraph would change if a new and more precise parameter setting were used.
In particular, the probabilistic model estimates how likely it is that the new
parameter setting will end the refinement process, which happens when the new
hypergraph includes evidence that the analysis will never prove a query.
Technically, our probabilistic model is a variant of the Erd\H{o}s--R\'enyi random graph model%
  ~\cite{erdos-renyi}:
given a template hypergraph~$G$, each of its subhypergraphs~$H$ is assigned a probability,
which depends on the values of the hyperparameters.
Intuitively, this probability quantifies the chance that
$H$~correctly describes the changes in~$G$ when the analysis is run with
the new and more precise parameter settings. The hyperparameters quantify how much approximation
occurs in each of the quantified Horn clauses of the analysis.
We provide an efficient method for learning hyperparameters
from prior analysis runs.
Our method uses certain analytic bounds in order to
avoid the combinatorial explosion of a naive learning method
based on maximum likelihood;
the explosion is caused by $H$~being a latent variable,
  which can be observed only indirectly.

The next parameter setting to try is chosen by our refinement algorithm
  based on predictions of the probabilistic model
  but also based on an estimate of the runtime cost.
For each parameter setting,
  the probability of successfully handling the query is evaluated by our model,
  and the runtime is estimated to increase with the precision of the parameter setting.
We prove that our method of integrating these two metrics is optimal,
  under reasonable assumptions.

The paper starts with an informal overview of our approach (\autoref{sec:overview})
and a review of notations from probability theory (\autoref{sec:preliminaries}),
and is followed by a description of our probabilistic model
(\autoref{sec:model})
and its learning algorithm (\autoref{sec:learning}).
The probabilistic model is then used to implement a refinement loop
  that optimally chooses the next parameter setting (\autoref{sec:refinement}).
The experimental evaluation (\autoref{section:experiments})
  shows the value of the pessimistic strategy,
  but suggests we need better optimisers
    in order to take full advantage of the probabilistic model.
\autoref{sec:related} positions our work in the various
attempts to combine probabilistic reasoning and static analyses, and
\autoref{sec:conclusions} concludes the paper.
Most proofs are in \longshort{appendices}{the full version}.

% vim:spell spelllang=en_gb:

\section{Overview}\label{sec:overview} % <<<

\begin{figure} % fig:architecture <<<
\begin{center}
\begin{tikzpicture}[align=center,>=stealth',yscale=0.9375]
\node (program) at (0,2) {program};
\node (query) at (0,-2) {query};
\node (OK) at (-2,2) {\textcolor{green}{\ding{51}}};
\node (NOK) at (5,2) {\textcolor{red}{\ding{55}}};
\node[draw] (datalog) at (0,0) {Static analyser\\(Datalog solver)};
\node[draw] (maxsat) at (3,0) {Optimiser\\ ($\maxsat$ solver)};
\node[fill=black!10,rounded corners] (model) at (3,-2) {Learnt probabilistic\\ model};
\draw[thick,->] (datalog) to[bend left=50] node[above]{counterexample} (maxsat);
\draw[thick,->] (maxsat) to[bend left=50] node[below]{abstraction} (datalog);
\draw[thick,->] (program) -- (datalog);
\draw[thick,->] (query) -- (datalog);
\draw[line width=3pt,<->,draw=black!50] (maxsat) -- (model);
\draw[thick,->,red] (maxsat) -- (NOK);
\draw[thick,->,green] (datalog) -- (OK);
\end{tikzpicture}
\end{center}
\caption{Architecture}
\label{fig:architecture}
\end{figure}
% >>>a

\autoref{fig:architecture} gives a high level overview of our abstraction
refinement algorithm, and in particular it shows the role of our probabilistic model.
The refinement loop
is standard, with analysis on one side and refinement on the other. Our contribution
lies in the refinement part, which receives guidance from a learnt probabilistic
model and chooses the next abstraction by balancing the model's prediction
and the estimated cost of running the analysis under each abstraction.

We assume that the analysis is given and obeys two constraints.  The
first is that the analysis is implemented in Datalog -- it is specified in terms of universally quantified
Horn clauses, such as
\begin{align}
\begin{aligned}
  {\tt pointsto}(\alpha, \ell)
    \gets\;
    &
    {\tt precise}(\alpha),
    {\tt pointsto}(\beta,\ell),
    \\
    &
    {\tt assignTo}(\beta,\alpha)
\end{aligned}
\label{eqn:rule-example0}
\end{align}
in which all the free variables $\alpha,\beta,\ell$ are implicitly universally quantified.
We call these clauses \df{Datalog rules}.
The analysis works by instantiating the quantification of these rules,
  and thus deriving new facts.
A query is a particular fact such as ${\tt pointsto}(x,h)$,
  which is an instantiation of the left side of the rule~\eqref{eqn:rule-example0},
  with $\alpha := x$ and $\ell := h$.
The query represents an undesirable situation in the program being analysed.
The analysis could derive the query because the undesirable situation really occurs at runtime.
But, the analysis could also derive the query because it approximates the runtime semantics.
Our task is to decide whether it is possible to avoid deriving the query by approximating less.
If the query is derived,
  then the set of all instances of Datalog rules constitute a counterexample,
  which is then used for refinement.

The second constraint is that the analysis is parametric. For instance, it might have a
parameter for each program variable, which specifies whether the variable should be
tracked precisely or not. The analysis would encode a setting of these parameters
in Datalog by using relations ${\tt cheap}$ and ${\tt precise}$.
In fact, the Datalog rule~\eqref{eqn:rule-example0}
assumes such parametrisation and fires only when the parameter
setting dictates the precise tracking of the variable~$\alpha$.
For a parametric analysis, an abstraction can be specified by a parameter setting,
  and so we use these two terms interchangeably.

The refinement part analyses a counterexample,
  and suggests a new promising parameter setting.
If the counterexample derives the query without relying on approximations,
  then the refinement part reports impossibility and stops%
  ~\cite{zhang:pldi13,zhang:pldi14,terauchi:sas15}.
If the counterexample derives the query by relying on approximations,
  then the refinement part sets itself the goal
    to find a similar counterexample that does not rely on approximations.
This is a pessimistic goal.
To find such a similar counterexample,
  the analysis must be run with a different parameter setting.
Which one?
On the one hand, the parameter setting should be likely to uncover a similar counterexample.
On the other hand, the parameter should be as cheap as possible.
The refinement part uses a $\maxsat$ solver to balance these desiderata.
% RG: The above is supposed to be the same as below, but emphasizing 'pessimistic'.
%The refinement part analyses a counterexample,
%  and suggests a new promising parameter setting.
%When it receives a counterexample from the
%analysis, it first checks whether
%using a more precise parameter settings could remove this counterexample. If not,
%the refinement part reports
%impossibility and stops~\cite{zhang:pldi13,zhang:pldi14,terauchi:sas15}.
%Otherwise, it moves on to its main task of choosing a next parameter setting.
%The refinement part consults our probabilistic model, which uses the counterexample
%and predicts, for each untried parameter setting, the probability that
%the analysis under that setting ends the refinement loop. Based on the outcome of this
%consultation, the refinement part formulates an optimisation goal that balances
%these probabilities and the estimated costs of running the analysis under
%untried parameter settings. The resulting optimisation problem is then solved by a weighted $\maxsat$
%solver, and its solution becomes the next parameter setting that the analysis tries.

\begin{figure} % fig:running-example <<<
\small
\begin{verbatim}
    object x, y, z, v
    assume x.dirty
    x.value := 10
0:  smudge2(x, y)
0': y.value := y.value + 2 * x.value
1:  smudge3(y, z)
    if z.dirty && y.value > 5
      v.value := x.value + y.value
2:  smudge3(z, v)
    ...
3:  smudge5(x, y)
    ...
4:  smudge7(y, v)
    assert !v.dirty
\end{verbatim}
\caption{Example program to analyse}
\label{fig:running-example}
\end{figure}
% >>>

Consider now the example program in \autoref{fig:running-example}.
The language is idiosyncratic, and so will be the analysis.
The language and the analysis are chosen to allow a concise rendering of the main ideas.
In this toy language, each object has two fields,
  the boolean $\it dirty$ and the integer $\it value$.
Initially, all $\it value$ fields are~$0$.
Object $x$ is dirty at the beginning,
  and we are interested in whether object $v$ is dirty at the end.
Dirtiness is propagated from one object to another
  only by the primitive commands $\tt smudgeK$.
The effect of the command ${\tt smudgeK}(x,y)$
  is equivalent to the following pseudocode:
\begin{align*}
  &\text{if $(x.{\it value}+y.{\it value})\bmod K=0$} \\
  &\qquad\text{$y.{\it dirty} := x.{\it dirty} \lor y.{\it dirty}$}
%  &\qquad\text{$d := (x.{\it dirty}\lor y.{\it dirty})$};\; \text{$x.{\it dirty} := d;\; y.{\it dirty} = d$}
\end{align*}
That is,
  if the sum of the values of objects $x$ and $y$ is a multiple of~$K$,
  then dirt propagates from~$x$ to~$y$.

To decide whether object $v$ is dirty at the end,
  an analysis may need to track the values of multiple objects.
The values can be changed by guarded assignments.
The guard of an assignment can be any boolean expression;
  the right hand side of an assignment can be any integer expression.
In short, tracking values and relations between values could be expensive.

\begin{figure} % fig:running-example-picture <<<
\begin{center}
\begin{tikzpicture}[thick,>=stealth',shorten <=1pt,shorten >=1pt,auto]
  \draw[line width=2mm, black!20] (0,5) -- (1,4) -- (1,1) -- (3,0);
  \foreach \x/\n in {0/x, 1/y, 2/z, 3/v} {
    \node[anchor=south,text height=1.5ex,text depth=.5ex] at (\x,5.5) {$\n$};
  }
  \foreach \x in {0,1,2,3} {
    \draw[dashed] (\x,0) -- (\x,5);
  }
  \foreach \x/\xx/\y in {0/1/4, 1/2/3, 2/3/2, 0/1/1, 1/3/0} {
    \draw[shorten <=3pt,shorten >=3pt] (\x,\y+1) -- (\xx,\y);
%    \draw[shorten <=3pt,shorten >=3pt] (\xx,\y+1) -- (\x,\y);
  }
  \foreach \x in {0,1,2,3} {
    \foreach \y in {0,1,...,5} {
      \draw[fill=blue] (\x,\y) circle (2pt);
    }
  }
  \foreach \y in {0,1,2,3,4} {
    \node[anchor=east] at (-0.5,4.5-\y) {$\y$};
  }
  \node[anchor=west] at (3.5,4.5) {$1/2$};
  \node[anchor=west] at (3.5,3.5) {$1/3$};
  \node[anchor=west] at (3.5,2.5) {$1/3$};
  \node[anchor=west] at (3.5,1.5) {$1/5$};
  \node[anchor=west] at (3.5,0.5) {$1/7$};

  \draw[->,shorten >=3pt] (0,5.5) -- (0,5);
  \draw[->,shorten <=3pt] (3,0) -- (3,-0.5);

  \node[rotate=-90,anchor=south] at (4.5,2.5) {probabilities};
  \node[rotate=90,anchor=south] at (-1,2.5) {labels (program points)};
\end{tikzpicture}
\end{center}
\caption{
  Abstract view of the program in \autoref{fig:running-example}.
  Each label on the left identifies a smudge command.
  The dashed, vertical lines signify that once an object is dirty it remains dirty.
  The solid, oblique lines signify that smudge commands might propagate dirtiness.
  Depending on the values of the objects,
    a $\tt smudgeK$ command propagates dirtiness with probability~$1/K$.
  The highlighted path illustrates one way in which dirtiness could
    propagate from object~$x$ to object~$v$,
    thus violating the assertion.
}
\label{fig:running-example-picture}
\end{figure}
% >>>

However, tracking all values may also be unnecessary.
In the first iteration, the analysis treats all non-smudge commands as $\tt skip$.
As a result, the analysis knows nothing about the {\it value\/} fields.
To remain sound, it assumes that smudge commands always propagate dirtiness;
  that is, it treats the command ${\tt smudgeK}(x,y)$
  as equivalent to the following pseudocode, dropping the guard:
$$
  y.{\it dirty} := x.{\it dirty} \lor y.{\it dirty}
$$
If, using these approximate semantics, the analysis concluded that $v$~is clean at the end,
  then it would stop.
But, in our example,
  $v$~could be dirty at the end,
  for example because of the smudge commands on lines $0$~and~$4$:
  the smudge on line~$0$ propagates dirtiness from~$x$ to~$y$,
  and the smudge on line~$4$ propagates dirtiness from~$y$ to~$v$.
This scenario corresponds to the highlighted path in \autoref{fig:running-example-picture}.

Before seeing what happens in the next iteration,
  let us first describe the analysis in more detail.
%We consider a static analysis implemented in Datalog.
The approximate semantics of the command $\tt smudge2$
  are modelled by the following Datalog rule:
\begin{align}
\begin{aligned}
  {\tt dirty}(\ell', \beta)
    \gets\;
    &{\tt cheap}(\ell),
    {\tt dirty}(\ell,\alpha),
    {\tt flow}(\ell,\ell')\\
    &{\tt smudge2}(\ell,\alpha,\beta)
\end{aligned}
  \label{eq:cheaprule-smudge2}
\end{align}
The rule makes use of the following relations:
\begin{align*}
{\tt flow}(\ell,\ell')\quad
  &\text{the control flow goes from $\ell$ to $\ell'$} \\
{\tt smudge2}(\ell,\alpha,\beta)\quad
  &\text{the command at $\ell$ is ${\tt smudge2}(\alpha,\beta)$}\\
{\tt cheap}(\ell)\quad
  &\text{the command at $\ell$ should be approximated} \\
{\tt dirty}(\ell,\alpha)\quad
  &\text{$\alpha.{\it dirty}$ is $\bf true$ before the command at~$\ell$}
\end{align*}
The relations {\tt flow} and {\tt smudge2} encode the program that is being analysed.
The relation {\tt cheap} parametrises the analysis,
  by allowing it or disallowing it to approximate the semantics of particular commands.
Finally,
  the relation {\tt dirty} expresses facts about executions of the program
  that is being analysed.
From the point of view of the analysis,
  {\tt flow}, {\tt smudge2}, and {\tt cheap} are part of the input,
  while {\tt dirty} is part of the output.
The relations {\tt flow} and {\tt smudge2} are simply a transliteration of the program text.
The relation {\tt cheap} is computed by a refinement algorithm, which we will see later.

The precise semantics of {\tt smudge2} can also be encoded with a Datalog rule,
  albeit a more complicated one.
\begin{align}
\begin{aligned}
  {\tt dirty}(\ell', \beta)
    \gets\;
    &{\tt precise}(\ell),
    {\tt dirty}(\ell,\alpha),
    {\tt flow}(\ell,\ell'),\\
    &{\tt smudge2}(\ell,\alpha,\beta),
    {\tt value}(\ell,\alpha,a),\\
    &{\tt value}(\ell,\beta,b),(a + b) \bmod 2 = 0
\end{aligned}
  \label{eq:preciserule-smudge2}
\end{align}
This rule makes use of two further relations:
\begin{align*}
{\tt precise}(\ell)\quad
  &\text{the command at $\ell$ should not be approximated} \\
{\tt value}(\ell,\alpha,a)\quad
  &\text{$\alpha.{\it value}=a$ holds before the command at $\ell$}
\end{align*}
Like {\tt cheap}, the relation {\tt precise} is part of the input.
If the input relation {\tt precise} activates rules like the one above,
  then the analysis takes longer not only because the rule is more complicated,
  but also because it needs to compute more facts about the relation {\tt value}.

The refinement algorithm ensures that
  for each program point~$\ell$
    exactly one of ${\tt cheap}(\ell)$ and ${\tt precise}(\ell)$ holds.
In the first iteration,
  ${\tt cheap}(\ell)$ holds for all~$\ell$, and $\tt precise$ holds for no~$\ell$.
In each of the next iterations,
  the refinement algorithm switches some program points from cheap to precise semantics.

Let us see what happens when one program point is switched from cheap to precise.
In the first iteration, ${\tt cheap}(0)$ is part of the input,
  and the following rule instance derives ${\tt dirty}(0',{\tt y})$:
\begin{align*}
\begin{aligned}
  {\tt dirty}(0', {\tt y})
    \gets\;
    &{\tt cheap}(0),
    {\tt dirty}(0,{\tt x}),
    {\tt flow}(0,0')\\
    &{\tt smudge2}(0,{\tt x},{\tt y})
\end{aligned}
\end{align*}
Let us now look at the scenario in which for the second iteration
  the fact ${\tt cheap}(0)$ is replaced by the fact ${\tt precise}(0)$.
In this case, ${\tt dirty}(0',{\tt y})$ is still derived,
  this time by the following rule instance:
\begin{align*}
\begin{aligned}
  {\tt dirty}(0', {\tt y})
    \gets\;
    &{\tt precise}(0),
    {\tt dirty}(0,{\tt x}),
    {\tt flow}(0,0'),\\
    &{\tt smudge2}(0,{\tt x},{\tt y}),
    {\tt value}(0,{\tt x},10),\\
    &{\tt value}(0,{\tt y},0),
    (10 + 0) \bmod 2 = 0
\end{aligned}
\end{align*}
To be able to apply this rule, the analysis had to work harder,
  to derive the intermediate results
  ${\tt value}(0,{\tt x},10)$ and ${\tt value}(0,{\tt y},0)$.
Using ${\tt precise}(0)$ influences other Datalog rules as well,
and forces the analysis to derive these intermediate results,
so that ${\tt dirty}(0', {\tt y})$ is still derived.
This is not always the case.
For example,
  the {\tt smudge3} command at program point~$1$ will not propagate dirtiness
  if the precise semantics is used.

Let us now step back and see which parts of the example generalise.

\paragraph{Model.}
If we replace ${\tt cheap}(\ell)$ by ${\tt precise}(\ell)$,
  then the set of Datalog rule instances could change unpredictably.
Yet, we observe empirically that the change is confined to one of two cases:
\begin{enumerate}
\item[(a)]
  ${\tt precise}(\ell)$ eventually derives facts
    similar to those facts that ${\tt cheap}(\ell)$ derives, but with more work; or
\item[(b)]
  ${\tt precise}(\ell)$ no longer derives the facts that ${\tt cheap}(\ell)$ derived.
\end{enumerate}
This dichotomy is by no means necessary.
Intuitively, it holds because the Datalog rules are not arbitrary:
  they are implementing a program analysis.
In our example,
  case~(a) occurs when ${\tt cheap}(0)$ is replaced by ${\tt precise}(0)$,
  and case~(b) occurs when ${\tt cheap}(1)$ is replaced by ${\tt precise}(1)$.
In general,
  we formalise this dichotomy by requiring that a certain predictability condition holds.
The condition is flexible,
  in that it allows one to choose the meaning of `similar' in case~(a)
  by defining a so called projection function.
In our example, no projection is necessary.
In context sensitive analyses, projection corresponds to truncating contexts.
In general,
  by adjusting the definition of the projection function
  we can exploit more knowledge about the analysis, if we so wish.
If we do not, then it is always possible to choose a trivial projection for which the meaning
  of `similar' is `exactly the same'.

Provided that the predictability condition holds,
  which is a formal way of saying that the dichotomy between cases (a)~and~(b) holds,
  it is natural to define the probabilistic model
    as a variant of the Erd\H{o}s--R\'{e}nyi random graph model.
Our sets of Datalog rule instances are seen as sets of arcs of a hypergraph.
Each arc of the hypergraph is either selected or not, with a certain probability.
Being selected corresponds to case~(a) -- having a counterpart in the precise hypergraph;
being unselected corresponds to case~(b) -- not having a counterpart in the precise hypergraph.

For the predictability condition and for the projection function,
  we drew inspiration from abstract interpretation~\cite{Cousot77}.
Intuitively,
  our projection functions correspond to concretisation maps,
  and our predictability condition corresponds to correctness of approximation.
However, we did not formalise this intuitive correspondence.

\paragraph{Learning.}
The model predicts that each rule instance is selected (that is, has a precise counterpart)
  with some probability.
How to pick this probability?
\autoref{fig:running-example-picture} gives an intuitive representation of a set of instances.
In particular, each dashed arc and each solid arc represents some rule instance.
We assume that instances represented by dashed arcs are selected with probability~$1$.
These are instances of some rule which says that a dirty object remains dirty.
We also assume that instances represented by solid arcs are selected with probability~$1/K$.
These are instances of rules of the form~\eqref{eq:cheaprule-smudge2},
  which describe the semantics of ${\tt smudgeK}$ commands.
These probabilities make intuitive sense.
In particular,
  it is reasonable to expect that a number is a multiple of~$K$ with probability~$1/K$.

But, how can we design an algorithm to find these probabilities,
  without appealing to intuition and knowledge about arithmetic?
The answer is that we run the analysis on many programs,
  and observe whether rule instances have precise counterparts or not.
In our example, if the training sample is large enough,
  we would observe that instances of the form~\eqref{eq:cheaprule-smudge2}
  do indeed have counterparts of the form~\eqref{eq:preciserule-smudge2}
  in about $1/K$~of cases.
In general, it is not possible to observe directly which rules have precise counterparts.
It is difficult to decide which rule is a counterpart of which rule.
Instead, we make indirect observations based on which similar facts are derived.
%This complicates the algorithm that learns probabilities,
%  but we have found an efficient solution.

\paragraph{Refinement.}
In terms of \autoref{fig:running-example-picture},
  refinement can be understood intuitively as follows.
We are interested in whether
  there is a path from the input on the top left to the output on the bottom right.
We know the dashed arcs are really present:
  they have a precise counterpart with probability~$1$.
We do not know if the solid arcs are really present:
  we see them only because we used a cheap parameter setting,
  and they have a precise counterpart only with probability~$1/K$.
We can find out whether the solid arcs are really present or just an illusion,
  by running the analysis with a more precise parameter setting.
But, we have to pay a price, because more precise parameter settings are also more expensive.

The question is then which of the solid arcs should we enquire about,
  such that we decide quickly whether there is a path from input to output.
There are several possible strategies,
  in particular there is an optimistic strategy and a pessimistic strategy.
The optimistic strategy hopes that there is no path, so object $v$ is clean at the end.
Accordingly,
  the optimistic strategy considers asking about those sets of solid arcs
  that could disconnect the input from the output,
  if the arcs were not really there.
The pessimistic strategy hopes that there is a path, so object $v$ is dirty at the end.
Accordingly,
  the pessimistic strategy considers asking about those sets of solid arcs
  that could connect the input to the output,
  if the arcs were really there.
The highlighted path in \autoref{fig:running-example-picture}
  corresponds to replacing ${\tt cheap}(0)$ by ${\tt precise}(0)$,
  and also ${\tt cheap}(4)$ by ${\tt precise}(4)$.
Thus, let us denote its set of arcs as~$04$.
There are two other paths that the pessimistic strategy will consider,
  whose sets of arcs are $012$ and $34$.
The path $04$ gets a probability $1/2\times 1/7$ of surviving;
  the path $012$ gets a probability $1/2\times 1/3 \times 1/3$ of surviving;
  the path $34$ gets a probability $1/5\times 1/7$ of surviving.
According to probabilities, the path $04$ has the highest chance of showing
  that $v$~is dirty at the end.

We designed an algorithm which generalises the pessimistic strategy described above
  by taking into account unions of paths
  and also the runtime cost of trying a parameter setting.
Our refinement algorithm has to work in a more general setting
  than suggested by \autoref{fig:running-example-picture}.
In particular, it must handle hypergraphs, not just graphs.

% >>>
% vim:spell spelllang=en_gb:fmr=<<<,>>>:

\section{Preliminaries and Notations}\label{sec:preliminaries} % <<<

In this section we recall several basic notions from probability theory.
At the same time, we introduce the notation used throughout the~paper.
%We shall use simplified definitions, which are sufficient for our purposes.

A \df{finite probability space} is a finite set~$\Omega$
  together with a function $\Pr : \Omega\to\R$ such that
  $\Pr(\omega) \ge 0$ for all $\omega\in\Omega$, and
  $\sum_{\omega\in\Omega}\Pr(\omega)=1$.
An \df{event} is a subset of~$\Omega$.
The \df{probability of an event}~$A$ is
\begin{align*}
  \Pr(A) \defeq \sum_{\omega\in A}\Pr(\omega)
    = \sum_{\omega\in\Omega} \Pr(\omega) [\omega\in A]
\end{align*}
The notation $[\Psi]$ is the Iverson bracket:
  if $\Psi$~is true it evaluates to~$1$,
  if $\Psi$~is false it evaluates to~$0$.
A \df{random variable} is a function ${\bf X} : \Omega\to{\cal X}$.
For each value $x \in {\cal X}$, the set ${\bf X}^{-1}(x)$ is an event,
  traditionally denoted by $({\bf X}=x)$.
In particular, we write $\Pr({\bf X}=x)$ for its probability;
  occasionally, we may write $\Pr(x={\bf X})$ for the same probability.
A \df{boolean random variable} is a function ${\bf X} : \Omega\to\{0,1\}$.
For a random variable~${\bf X}$ with ${\cal X}\subseteq\R$,
  we define its \df{expectation}~$\E {\bf X}$ by
\begin{align*}
  \E {\bf X} \defeq \sum_{x\in{\cal X}} x \Pr({\bf X}=x)
    = \sum_{\omega\in\Omega} \Pr(\omega) {\bf X}(\omega)
\end{align*}
In particular, if ${\bf X}$~is a boolean random variable, then
\begin{align*}
  \E {\bf X} = \Pr({\bf X}=1)
\end{align*}

Events $A_1,\ldots,A_n$ are said to be \df{independent} when
\begin{align*}
  \Pr(A_1\cap\ldots\cap A_n) = \prod_{i = 1}^n \Pr(A_i)
\end{align*}
Note that $n$~events could be pairwise independent, but still dependent when taken altogether.
Random variables ${\bf X}_1,\ldots,{\bf X}_n$ are said to be independent
  when the events $({\bf X}_1=x_1), \ldots,({\bf X}_n=x_n)$ are independent
  for all $x_1,\ldots,x_n$ in their respective domains.
In particular, if ${\bf X}_1,\ldots,{\bf X}_n$ are independent boolean random variables,
  then ${\bf X_1}\land\ldots\land{\bf X}_n$ is also a boolean random variable, and
\begin{align*}
  \E({\bf X}_1\land\ldots\land{\bf X}_n) = \prod_{i = 1}^n \E {\bf X}_i 
\end{align*}
Events $A$~and~$B$ are said to be \df{incompatible} when they are disjoint.
In that case, $\Pr(A \cup B)=\Pr(A)+\Pr(B)$.
In particular, if ${\bf X}_1,\ldots,{\bf X}_n$ are boolean random variables
  such that the events $({\bf X}_1=1),\ldots,({\bf X}_n=1)$ are pairwise incompatible,
  then
\begin{align*}
  \E({\bf X}_1\lor\ldots\lor{\bf X}_n) = \sum_{i = 1}^n \E{\bf X}_i
\end{align*}

% >>>
% vim:spell spelllang=en_gb:

\section{Probabilistic Model}\label{sec:model} % <<<

The probabilistic model predicts what analyses would do
  if they were run with precise parameter settings.
To make such predictions, the model relies on several assumptions:
  the analysis must be implemented in Datalog (\autoref{sec:model-datalog})
  and its precision must be configurable by parameters (\autoref{sec:model-analyses});
  furthermore, increasing precision should correspond to invalidating some derivation steps
    (\autoref{sec:model-hypothesis}).
Given probabilities that individual derivation steps survive the increase in precision,
  we compute probabilities that sets of derivation steps survive the increase in precision
  (\autoref{sec:model-probabilities}).
Given which set of derivation steps survives the increase in precision,
  we can tell whether a given query, which signifies a bug, is still reachable
  (\autoref{sec:model-use}).

\subsection{Datalog Programs and Hypergraphs}\label{sec:model-datalog} % <<<

We shall use a simplified model of Datalog programs, which is essentially a directed hypergraph.
The semantics will then be given by reachability in this hypergraph.
For readers already familiar with Datalog,
  it may help
    to think of vertices as elements of Datalog relations,
    and to think of arcs as instances of Datalog rules with non-relational constraints removed.
For readers not familiar with Datalog,
  simply thinking in terms of the hypergraph introduced below
  will be sufficient to understand the rest of the paper.

We assume a finite universe of \df{facts}.
An \df{arc} is a pair~$(h,B)$ of a head~$h$ and a body~$B$;
  the \df{head} is a fact;
  the \df{body} is a set of facts.
A \df{hypergraph} is a set of arcs.
The \df{vertices} of a hypergraph are those facts that appear in its arcs.
If a hypergraph~$G$ contains an arc $(h,B)$, then we say that $h$~is reachable from~$B$ in~$G$.
In general, given a hypergraph~$G$ and a set~$T$ of facts,
  the set $\reach{G}{T}$ of facts reachable from~$T$ in~$G$
  is defined as the least fixed-point of the following recursive equation:
$$\{\,h\mid\text{$(h,B)\in G$ and $B\subseteq\reach{G}{T}$}\,\} \cup T
  \;\subseteq\; \reach{G}{T}$$  % revert to inline if space needed
The following monotonicity properties are easy to check.
\begin{proposition}\label{prop:monotone-reachability}
  Let $G$, $G_1$ and $G_2$ be hypergraphs;
  let $T$, $T_1$ and~$T_2$ be sets of facts.
\begin{enumerate}
\item[(a)] If $T_1\subseteq T_2$, then $\reach{G}{T_1}\subseteq\reach{G}{T_2}$.
\item[(b)] If $G_1\subseteq G_2$, then $\reach{G_1}{T}\subseteq\reach{G_2}{T}$.
\end{enumerate}
\end{proposition}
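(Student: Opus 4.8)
The plan is to avoid induction entirely and argue through the universal property of the least fixed point. First I would make the defining equation explicit as a monotone operator: for a fixed hypergraph~$A$ and a fixed set of facts~$S$, put $F_{A,S}(X) = S \cup \{\,h \mid (h,B)\in A \text{ and } B \subseteq X\,\}$. Enlarging~$X$ can only enlarge the collection of arcs whose body is contained in~$X$, so $F_{A,S}$ is monotone in~$X$; hence $\reach{A}{S}$, defined as its least fixed point, exists and, by Knaster--Tarski, coincides with the least prefixed point. I would then record the one consequence I actually use: calling a set~$X$ \emph{$(A,S)$-closed} when $S\subseteq X$ and $h\in X$ for every arc $(h,B)\in A$ with $B\subseteq X$ (that is, when $F_{A,S}(X)\subseteq X$), the least-prefixed-point characterisation yields that $\reach{A}{S}$ is contained in every $(A,S)$-closed set.

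With this in hand, each part reduces to exhibiting one convenient closed set. For~(a), I would show that $\reach{G}{T_2}$ is $(G,T_1)$-closed. The arc-closure requirement is identical for the pairs $(G,T_1)$ and $(G,T_2)$, since both use the same hypergraph~$G$, and $\reach{G}{T_2}$ already satisfies it by its own definition; the only new point is $T_1\subseteq\reach{G}{T_2}$, which follows from $T_1\subseteq T_2$ together with $T_2\subseteq\reach{G}{T_2}$, the latter holding because every fixed point of $F_{G,T_2}$ contains~$T_2$. The universal property then gives $\reach{G}{T_1}\subseteq\reach{G}{T_2}$.

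For~(b), I would dually show that $\reach{G_2}{T}$ is $(G_1,T)$-closed. The seed requirement $T\subseteq\reach{G_2}{T}$ holds by definition and does not depend on the hypergraph. For arc-closure, suppose $(h,B)\in G_1$ and $B\subseteq\reach{G_2}{T}$; since $G_1\subseteq G_2$ we have $(h,B)\in G_2$, and as $\reach{G_2}{T}$ is closed under the arcs of~$G_2$ we conclude $h\in\reach{G_2}{T}$. The universal property then gives $\reach{G_1}{T}\subseteq\reach{G_2}{T}$.

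I do not expect a genuine obstacle here: the entire content sits in the opening step, namely pinning down that the least fixed point of the displayed equation really enjoys the least-prefixed-point universal property, so that merely being $(A,S)$-closed is enough to contain it. Once monotonicity of $F_{A,S}$ is noted, this is immediate. A fully elementary alternative, should one prefer to sidestep Knaster--Tarski, is to exploit finiteness of the universe of facts: write $\reach{A}{S}=\bigcup_n X_n$ with $X_0=S$ and $X_{n+1}=F_{A,S}(X_n)$, then prove each inclusion by induction on~$n$. The closed-set argument is shorter, however, and avoids choosing an enumeration.
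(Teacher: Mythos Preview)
Your argument is correct. The paper does not actually give a proof of this proposition: it only remarks that the two monotonicity properties ``are easy to check'' and moves on. Your closed-set\slash least-prefixed-point argument is exactly the standard way to discharge such a claim, and the alternative you sketch (iterating $F_{A,S}$ from $S$ and inducting on the stage) would work equally well; there is nothing further to compare.
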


Given a hypergraph~$G$ and a set~$T$ of facts,
  the \df{induced sub-hypergraph}~$G[T]$ retains those arcs that mention facts from~$T$:
\begin{align*}
  G[T] \defeq \{\,(h,B)\in G\mid\text{$h\in T$ and $B\subseteq T$}\,\}
\end{align*}

% >>>
\subsection{Analyses}\label{sec:model-analyses} % <<<

We use Datalog programs to implement static analyses that are parametric and monotone.
Thus, the Datalog programs we consider have additional properties:
\begin{enumerate}
\item
  Because the Datalog program implements a static analysis,
  a subset of facts encode queries,
  corresponding to assertions in the program being analysed.
\item
  Because the static analysis is parametric,
  a subset of facts encode parameter settings.
\item
  Because the static analysis is monotone,
  parameter settings that are more expensive are also more precise.
\end{enumerate}
For example, in \autoref{sec:overview},
  queries are facts from the relation {\tt dirty};
  parameter settings are encoded by relations {\tt cheap} and {\tt precise};
  and switching a parameter from {\tt cheap} to {\tt precise}
    makes the analysis more expensive but cannot grow the relation {\tt dirty}.

If we only assume that the analysis is parametric, monotone, and implemented in Datalog,
  then we can already make good predictions in some cases,
  such as the case of the analysis in \autoref{sec:overview}.
In other cases, we require more information about the relationship between
  what the analysis does when run in a precise mode
  and what the analysis does when run in an imprecise mode.
We assume that this information comes in the form of a partial function that projects facts.
The technical requirements on the projection function are mild,
  so the analysis designer has considerable leeway in choosing an appropriate projection.
In some cases, the choice is straightforward.
For example,
  if the analysis is $k$-object sensitive,
  meaning that it tracks calling contexts using sequences of allocation sites,
  then a good choice of projection corresponds to truncating these sequences.

An \df{analysis}~${\cal A}$ is a tuple $(G,Q,P,p_0,p_1,\pi)$, where
  $G$~is a hypergraph called the \df{global provenance},
  $Q$~is a set of facts called \df{queries},
  $P$~is a finite set of \df{parameters},
  the \df{encoding functions} $p_0$~and~$p_1$ map parameters to facts, and
  $\pi$~is a partial function from facts to facts called \df{projection}.
A \df{parameter setting}~$a$ of an analysis~${\cal A}$
  is an assignment of booleans to the parameters~$P$.
We sometimes refer to parameter settings as \df{abstractions}, for brevity.
We encode the abstraction~$a$ as two sets of facts, $P_0(a)$~and~$P_1(a)$, defined by
\begin{align*}
  P_k(a) &\defeq \{\,p_k(x)\mid\text{$x\in P$ and $a(x)=k$}\,\}
    &&\text{for $k\in\{0,1\}$}
\end{align*}
The set ${\cal A}(a)$ of facts \df{derived} by the analysis~${\cal A}$ under abstraction~$a$
  is defined to be $\reach{G}{\bigl(P_0(a)\cup P_1(a)\bigr)}$.
Abstractions form a complete lattice with respect to the pointwise order:
  $a \le a'$ iff $a(x) \le a'(x)$ for all $x\in P$.
We write $\bot$ for the \df{cheapest abstraction} that assigns~$0$ to all parameters,
  and $\top$ for the \df{most precise abstraction} that assigns~$1$ to all parameters.

For an analysis~${\cal A}$,
  we sometimes consider the restriction of its hypergraph
  to those facts derived under a given abstraction~$a$:
  $G^a \defeq G[{\cal A}(a)]$.
In particular,
  $G^{\bot}$ is called the \df{cheap provenance},
  and $G^{\top}$ is called the \df{precise provenance}.

An analysis is \df{well formed} when it obeys further restrictions:
(i)~facts derived under the cheapest abstraction are fixed-points of the projection,
  $\pi(x)=x$ for $x\in{\cal A}(\bot)$,
(ii)~the image of the projection~$\pi$ is included in ${\cal A}(\bot)$,
(iii)~only fixed-points project on queries,
  $\pi^{-1}(q) \subseteq \{q\}$ for $q \in Q$,
(iv)~the encoding functions $p_0$~and~$p_1$ are injective and have disjoint images, and
(v)~projection is compatible with parameter encoding, $\pi\circ p_1=p_0$.
From (i)~and~(ii) it follows that $\pi$~is idempotent.
These conditions are technical:
  they ease the treatment that follows, but do not restrict which analyses can be modelled.

An analysis~${\cal A}$ is said to be \df{monotone}
  when the set of derived queries decreases as a function of the abstraction:
$a \le a'$ implies $\bigl(Q \cap {\cal A}(a)\bigr) \supseteq \bigl(Q\cap {\cal A}(a')\bigr)$.

%In practice, all analyses are well formed and many are monotone.
%In what follows, all analyses are assumed to be both well formed and monotone.

We can now formally define the main problem.

\begin{problem}\label{problem}
Given are a well formed, monotone analysis~${\cal A}$, and a query~$q$ for~${\cal A}$.
Does there exist an abstraction~$a$ such that $q\notin{\cal A}(a)$?
\end{problem}

Because the analysis is monotone,
  $q\in{\cal A}(a)$ for all~$a$ if and only if $q\in{\cal A}(\top)$.
Thus, one way to solve the problem is to check if $q$~is derived by~${\cal A}$
  under the most precise abstraction~$\top$.
However, this is typically too expensive.
Instead, we consider a class of solutions called \df{monotone refinement algorithms}.
A monotone refinement algorithm evaluates the analysis
  for a sequence $a_1\le\cdots\le a_n$ of abstractions.
%Based on empirical observations,
%  we estimate the total running time of such an algorithm to be $c(a_1)+\cdots+c(a_n)$,
%  where $c(a)=\exp(\alpha (\sum_{x\in P}a(x)))$ for some $\alpha>0$. This means
%  that the cost of running an analysis under an abstraction $a$ is exponential
%  in the number of parameters set to $1$ in $a$.
Refinement algorithms terminate when one of two conditions holds:
(i)~$q\notin{\cal A}(a_n)$ or
(ii)~$q\in\reach{G^{a_n}}{\bigl(P_1(a_n)\bigr)}$.
It is easy to see why $q \notin {\cal A}(a_n)$ implies that \autoref{problem} has answer `yes'.
It is less easy to see why $q\in\reach{G^{a_n}}{\bigl(P_1(a_n)\bigr)}$
  implies that \autoref{problem} has answer `no'.
Intuitively,
  this second termination condition says that the query~$q$
  is reachable even if we rely only on precise semantics.
In other words, our abstract counterexample does not actually have any abstract step.
Formally, we rely on the following lemma:

\begin{lemma}\label{lemma:query-impossible}
Let $q$~be a query for a well formed, monotone analysis~${\cal A}$.
If $q\in\reach{G^a}{\bigl(P_1(a)\bigr)}$ for some abstraction~$a$,
  then $q\in{\cal A}(a')$ for all abstractions~$a'$.
\end{lemma}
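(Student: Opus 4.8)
The plan is to reduce the universal statement ``$q\in{\cal A}(a')$ for all~$a'$'' to the single case $a'=\top$, and then to establish that case by lifting the hypothesised reachability from the induced subhypergraph~$G^a$ to the full hypergraph~$G$.

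First I would use monotonicity of the analysis to perform the reduction. Since every abstraction satisfies $a'\le\top$, the defining inequality of monotonicity gives $\bigl(Q\cap{\cal A}(a')\bigr)\supseteq\bigl(Q\cap{\cal A}(\top)\bigr)$. As $q$~is a query, it therefore suffices to prove $q\in{\cal A}(\top)$; the conclusion for an arbitrary~$a'$ then follows immediately because $q\in Q\cap{\cal A}(\top)\subseteq Q\cap{\cal A}(a')$.

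Next I would unfold ${\cal A}(\top)$. Because $\top$~assigns~$1$ to every parameter, $P_0(\top)$~is empty and $P_1(\top)=\{\,p_1(x)\mid x\in P\,\}$, so that ${\cal A}(\top)=\reach{G}{\bigl(P_1(\top)\bigr)}$. The hypothesis supplies $q\in\reach{G^a}{\bigl(P_1(a)\bigr)}$, and I would connect the two by two containments: $G^a\subseteq G$, immediate since the induced subhypergraph $G^a=G[{\cal A}(a)]$ is by definition a subset of~$G$; and $P_1(a)\subseteq P_1(\top)$, since $P_1(a)$ collects $p_1(x)$ only for those~$x$ with $a(x)=1$ whereas $P_1(\top)$ collects them for all~$x$. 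Applying both parts of \autoref{prop:monotone-reachability} then yields $\reach{G^a}{\bigl(P_1(a)\bigr)}\subseteq\reach{G}{\bigl(P_1(a)\bigr)}\subseteq\reach{G}{\bigl(P_1(\top)\bigr)}={\cal A}(\top)$, whence $q\in{\cal A}(\top)$.

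The step I expect to be the real crux is the decision to route through~$\top$ at all, since the tempting direct route fails. One cannot argue $q\in{\cal A}(a')$ straight from $q\in\reach{G}{\bigl(P_1(a)\bigr)}$, because the seed set $P_0(a')\cup P_1(a')$ of ${\cal A}(a')$ need not contain $P_1(a)$: whenever $a(x)=1$ but $a'(x)=0$, the fact $p_1(x)\in P_1(a)$ is replaced under~$a'$ by the distinct fact $p_0(x)$ (distinct by the disjoint-image condition~(iv) on the encodings), and reachability monotonicity cannot bridge this gap. Passing through~$\top$, where no parameter is set to~$0$ and hence no such mismatch can arise, and only afterwards descending to a general~$a'$ via analysis monotonicity, is exactly what circumvents the obstacle. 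Beyond ensuring the encodings behave as described, the well-formedness conditions seem inessential here: the argument rests solely on the two monotonicity properties.
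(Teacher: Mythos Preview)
Your proof is correct and follows essentially the same route as the paper: pass from $\reach{G^a}{\bigl(P_1(a)\bigr)}$ to $\reach{G}{\bigl(P_1(\top)\bigr)}={\cal A}(\top)$ via the monotonicity of reachability, then descend to an arbitrary $a'$ via monotonicity of the analysis. The only cosmetic difference is that the paper asserts the \emph{equality} $\reach{G^a}{\bigl(P_1(a)\bigr)}=\reach{G}{\bigl(P_1(a)\bigr)}$ at the first step (true, by \autoref{prop:hypergraph-slice}), whereas you use only the inclusion via $G^a\subseteq G$ and \autoref{prop:monotone-reachability}(b); your weaker step is entirely sufficient.
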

\begin{proof}
By \autoref{prop:monotone-reachability}(a), 
$q\in\reach{G^a}{\bigl(P_1(a)\bigr)} =
 \reach{G}{\bigl(P_1(a)\bigr)} \subseteq
 \reach{G}{\bigl(P_1(\top)\bigr)}={\cal A}(\top)$.
We conclude by noting that the analysis is monotone.
\end{proof}

% >>>
\subsection{Predictability}\label{sec:model-hypothesis} % <<<

The precise provenance $G^{\top}$
  contains all the information necessary to answer \autoref{problem}.
Unfortunately, the precise provenance $G^{\top}$ is typically very large and hard to compute.
In contrast, the cheap provenance $G^{\bot}$ is typically smaller and easier to compute.
In fact, most refinement algorithms start with the cheapest abstraction, $a_1=\bot$.
Fortunately, we observed empirically that $G^{\top}$~and~$G^{\bot}$ are compatible,
  in a way made precise next.

We begin by lifting the projection~$\pi$ to sets~$T$ of facts as follows:
\begin{align*}
  \pi(T) \defeq \{\,t'\mid\text{$t'=\pi(t)$ and $t\in T$}\,\}
\end{align*}
In particular, if the partial function $\pi$ is not defined for any $t\in T$,
  then $\pi(T)=\emptyset$.
Our empirical observation is that
\begin{align}
  \pi \circ \reach{G^{\top}}{} \circ P_1 &= \reach{H}{} \circ \pi \circ P_1
  &&\text{for some $H\subseteq G^{\bot}$}
  \label{eq:compatible}
\end{align}
An analysis~${\cal A}$ that obeys condition~\eqref{eq:compatible} is said to be \df{predictable}.
A hypergraph~$H$ that witnesses condition~\eqref{eq:compatible}
  is said to be a \df{predictive provenance} of analysis~${\cal A}$.
For a predictable analysis,
  reachability and projection almost commute on the image of~$P_1$,
  except that if projection is done first, then reachability must ignore some arcs.

%All the analyses we tried turned out to be predictable, for a simple choice of projection~$\pi$.
%We do not know a simple theoretical explanation of why it should be so.
%Intuitively, though, condition~\eqref{eq:compatible} makes sense.
%When run with the cheap abstraction,
%  the analysis approximates the semantics of the programming language being analysed.
%Some of these approximations turn out to be correct, and some turn out to be wrong.
%Correspondingly,
%  some arcs of the cheap provenance are retained in the predictive provenance,
%  and some arcs of the cheap provenance are not retained in the predictive provenance.

The inspiration for condition~\eqref{eq:compatible}
  came from the notion of correct approximation, as used in abstract interpretation.
But, it is not the same.
We tested condition~\eqref{eq:compatible}
  on analyses that do not explicitly follow the abstract interpretation framework,
  and we were surprised that it holds.
Then we designed the example analysis from \autoref{sec:overview}
  so that the reason why condition~\eqref{eq:compatible} holds is apparent:
  Datalog rules come in pairs,
    one encoding precise semantics, the other encoding approximate semantics.
But, for real analyses, we could not discern any such simple reason.
Thus, we consider our empirical finding as surprising and intriguing.

Recall that refinement algorithms use two termination conditions:
  $q\notin{\cal A}(a)$ and $q\in\reach{G^a}{\bigl(P_1(a)\bigr)}$.
Predictive provenances
%, which are small compared to precise provenances,
  help us evaluate the termination conditions of refinement algorithms.

\begin{lemma}\label{lemma:predict-termination}
Let ${\cal A}$ be a well formed, monotone analysis.
Let $a$ be an abstraction, and let $H$ be a predictive provenance.
Finally, let $q$ be a query derived by~${\cal A}$ under the cheapest abstraction~$\bot$.
\begin{enumerate}
\item[(a)]
  If $q\notin{\cal A}(a)$,
  then $q\notin\reach{G^{\bot}}{(P_0(a))}$ and $q\notin\reach{H}{(\pi(P_1(a)))}$.
\item[(b)]
  Also,
  $q\in\reach{G^a}{\bigl(P_1(a)\bigr)}$
  if and only if
  $q\in \reach{H}{(\pi (P_1(a)))}$.
\end{enumerate}
\end{lemma}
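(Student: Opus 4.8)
The plan is to reduce both parts to reachability in the full hypergraph~$G$, and then to transport statements about the precise provenance~$G^{\top}$ to statements about the predictive provenance~$H$ using condition~\eqref{eq:compatible}. The single fact that makes everything work is that restricting~$G$ to the facts it derives never destroys reachability from the parameter facts. Concretely, I would first establish the auxiliary claim that for any set~$T$ and any~$S$ with $\reach{G}{T}\subseteq S$ one has $\reach{G[S]}{T}=\reach{G}{T}$. Instantiating this with $T\defeq P_1(a)$, once with $S\defeq{\cal A}(a)$ and once with $S\defeq{\cal A}(\top)$, yields
\begin{align*}
  \reach{G^a}{(P_1(a))}
    &= \reach{G}{(P_1(a))} = \reach{G^{\top}}{(P_1(a))}
\end{align*}
where the side conditions $\reach{G}{(P_1(a))}\subseteq{\cal A}(a)$ and $\reach{G}{(P_1(a))}\subseteq{\cal A}(\top)$ follow from \autoref{prop:monotone-reachability}(a) together with $P_1(a)\subseteq P_1(\top)$.

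Next I would handle the projection around the query. Since $q$~is derived under~$\bot$, well formedness~(i) gives $\pi(q)=q$, and well formedness~(iii) gives $\pi^{-1}(q)\subseteq\{q\}$. Together these show, for an arbitrary set~$X$ of facts, the equivalence $q\in\pi(X)\iff q\in X$: the forward direction uses that any $t\in X$ with $\pi(t)=q$ must equal~$q$, and the backward direction uses $\pi(q)=q$. Applying this with $X\defeq\reach{G^{\top}}{(P_1(a))}$ and invoking predictability~\eqref{eq:compatible} at the abstraction~$a$, namely $\reach{H}{(\pi(P_1(a)))}=\pi\bigl(\reach{G^{\top}}{(P_1(a))}\bigr)$, I obtain the chain
\begin{align*}
  q\in\reach{H}{(\pi(P_1(a)))}
    &\iff q\in\reach{G^{\top}}{(P_1(a))} \\
    &\iff q\in\reach{G}{(P_1(a))} \iff q\in\reach{G^a}{(P_1(a))}
\end{align*}
which is exactly part~(b).

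Part~(a) then follows with little extra work. For the first conjunct, $\reach{G^{\bot}}{(P_0(a))}\subseteq\reach{G}{(P_0(a))}\subseteq\reach{G}{(P_0(a)\cup P_1(a))}={\cal A}(a)$ by \autoref{prop:monotone-reachability}, so $q\notin{\cal A}(a)$ forces $q\notin\reach{G^{\bot}}{(P_0(a))}$. For the second conjunct, the chain above reduces $q\in\reach{H}{(\pi(P_1(a)))}$ to $q\in\reach{G}{(P_1(a))}$, and $\reach{G}{(P_1(a))}\subseteq{\cal A}(a)$, so again $q\notin{\cal A}(a)$ gives $q\notin\reach{H}{(\pi(P_1(a)))}$.

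The main obstacle is the auxiliary restriction claim $\reach{G[S]}{T}=\reach{G}{T}$, which carries the whole argument and is the only place needing genuine fixed-point reasoning. The inclusion $\reach{G[S]}{T}\subseteq\reach{G}{T}$ is immediate from $G[S]\subseteq G$ and \autoref{prop:monotone-reachability}(b); the reverse requires an induction. I would prove it by showing that $R'\defeq\reach{G[S]}{T}$ is already closed under all of~$G$'s arcs: if $(h,B)\in G$ and $B\subseteq R'$, then $B\subseteq R'\subseteq\reach{G}{T}\subseteq S$ and also $h\in\reach{G}{T}\subseteq S$, so $(h,B)\in G[S]$ and hence $h\in R'$; minimality of $\reach{G}{T}$ as the least closed set containing~$T$ then gives $\reach{G}{T}\subseteq R'$. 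Everything else is bookkeeping with \autoref{prop:monotone-reachability} and the well formedness conditions.
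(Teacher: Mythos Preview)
Your proposal is correct and follows essentially the same route as the paper: your auxiliary restriction claim is the paper's \autoref{prop:hypergraph-slice} (stated separately, later in the text), and the remaining chain through \eqref{eq:compatible}, $\pi^{-1}(q)\subseteq\{q\}$, and \autoref{prop:monotone-reachability} matches the paper's argument line for line. You are more explicit than the paper about the forward direction of~(b), correctly invoking $\pi(q)=q$ from well formedness~(i) together with $q\in{\cal A}(\bot)$, where the paper simply says the proof is ``similar''.
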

Part~(a) lets us approximate the termination condition $q\notin{\cal A}(a)$;
part~(b) lets us evaluate the termination condition $q\in\reach{G^a}{\bigl(P_1(a)\bigr)}$.
In both cases, only small parts of the global provenance $G$ are used,
  namely $G^{\bot}$~and~$H$.
The assumption $q\in{\cal A}(\bot)$ is reasonable:
  otherwise the refinement algorithm terminates after the first iteration.
% XXX
\begin{proof}
Assume that $q\in\reach{H}{(\pi(P_1(a)))}$.
We have
\begin{align*}
  & 
  \reach{H}{(\pi(P_1(a)))}
  =
  \pi\bigl( \reach{G^{\top}}{(P_1(a))} \bigr)
  &
  \mbox{by}\ \eqref{eq:compatible}
\\
  &
  q\,{\in}\, \pi\bigl(\reach{G^{\top}}{(P_1(a))}\bigr)
  \Implies
  q\,{\in}\, \reach{G^{\top}}{(P_1(a))}
  &
  \mbox{by}\ \pi^{-1}(q)\,{\subseteq}\, \{q\}
\\
  &\reach{G^{\top}}{(P_1(a))}
   =
  \reach{G^a}{(P_1(a))}
  \subseteq
  {\cal A}(a)
  &
  \mbox{by}\ \hyperref[prop:monotone-reachability]{\mbox{Prop.}~\ref*{prop:monotone-reachability}(a)}
\end{align*}
Putting these together, we conclude that $q\in{\cal A}(a)$.
Using a very similar argument we can show that
  $q\in\reach{G^{\bot}}{(P_0(a))}$ implies $q\in{\cal A}(a)$.
This concludes the proof of part~(a).

The proof of part~(b) is similar.
\end{proof}

\autoref{lemma:predict-termination}
  tells us that we could evaluate termination conditions more efficiently
  if we knew a predictive provenance.
Alas, we do not know a predictive provenance.

% >>>
\subsection{Probabilities of Predictive Provenances}\label{sec:model-probabilities} % <<<

If we do not know a predictive provenance, then a naive way forward is as follows:
  enumerate each possible predictive provenance,
  see what it predicts,
  and take an average of the predictions.
Our model is only marginally more complicated:
  it considers some possible predictive provenances as more likely than others.
On the face of it,
  enumerating all possible predictive provenances takes us back to an inefficient algorithm.
We will see later how to deal with this problem (\autoref{sec:refinement}).
Now, let us define the probabilistic model formally.

The blueprint of the probabilistic model is given by a cheap provenance~$G^{\bot}$.
To each arc $e\in G^{\bot}$,
  we associate a boolean random variable~${\bf S}_e$,
  and call it the \df{selection variable} of~$e$.
Selection variables are independent but may have different expectations.
We partition $G^{\bot}$ into \df{types} $G^{\bot}_1,\ldots,G^{\bot}_t$,
  and we do not require selection variables to have the same expectation
  unless they have the same type.
Each type $G^{\bot}_k$ has an associated \df{hyperparameter}~$\theta_k$:
  if $e\in G^{\bot}_k$, then we say that $e$~has type~$k$,
  and we require that $\E {\bf S}_e=\theta_k$.
Recall that $\E {\bf S}_e=\Pr({\bf S}_e = 1)$.
We define, in terms of the selection variables,
  a random variable~$\randomH$ whose values are predictive provenances,
  by requiring that ${\bf S}_e=[e\in\randomH]$.
%$$
%  (\randomH = H) \
%  \mbox{if and only if}\
%  (e\in H \Iff {\bf S}_e=1 \ \mbox{for all $e \in G^{\bot}$})
%$$
%In other words, $\randomH$~is defined such that ${\bf S}_e=[e\in\randomH]$.
Thus, the probability of a predictive provenance~$H$ is
\begin{align}
  \Pr(\randomH=H)=
  \prod_{k=1}^t
    \theta_k^{|G^{\bot}_k \cap H|}
    (1-\theta_k)^{|G^{\bot}_k \setminus H|}
\end{align}
For example, if all arcs have the same type,
  then the model has only one hyperparameter~$\theta$,
  and $\Pr(\randomH=H)$ is $\theta^{|H|}(1-\theta)^{|G^{\bot} \setminus H|}$.
%If $\theta=1/2$, then all predictive provenances are assigned the probability $2^{-|G^{\bot}|}$.
At the other extreme, if all arcs have their own type,
  then the model has one hyperparameter~$\theta_e$ for each arc $e\in G^{\bot}$,
  and $\Pr(\randomH=H)$ is $\prod_{e\in G^{\bot}}\theta_e^{[e\in H]}(1-\theta_e)^{[e\notin H]}$.

How many types should there be?
Few types could lead to underfitting, many types could lead to overfitting.
In the implementation, we have one type per Datalog rule.
Intuitively,
  this means that we trust the judgement of whoever implemented the analysis.

%This concludes the formal presentation of the probabilistic model.
%But, one question presents itself:
%How should we group arcs into types?
%To see why the answer is important, consider two extreme situations:
%  if all arcs have the same type,
%  then the model is very inflexible, and it will likely underfit empirical data;
%  if each arc has its own type,
%  then the model is very flexible, and it will likely overfit empirical data.
%There is a natural choice for how to define types.
%Recall that arcs are instances of Datalog rules.
%The natural choice is to define types to be sets of instances of the same Datalog rule.
%Intuitively,
%  defining types in terms of Datalog rules
%  amounts to using the same granularity as was deemed appropriate
%    by whoever implemented the analysis in Datalog.

%Finally, recall `all models are wrong, but some are useful'~\cite{Box-stat76}.

% >>>
\subsection{Use of the Model}\label{sec:model-use} % <<<

Before using the probabilistic model in a refinement algorithm,
  we must choose appropriate values for hyperparameters.
This is done offline, in a learning phase (\autoref{sec:learning}).
After learning, each Datalog rule has an associated probability -- its hyperparameter.
%To use the probabilistic model, it is also necessary to know the cheap provenance~$G^{\bot}$.
%Thus, the refinement algorithm will always start
%  by running the analysis under the cheapest abstraction~$\bot$.

After the first invocation of the analysis we know the cheap provenance~$G^{\bot}$,
  which we use as a blueprint for the probabilistic model.
Then, our model predicts whether $q\in\reach{G^a}{(P_1(a))}$,
  where $a$~is some abstraction not yet tried.
Recall that $q\in\reach{G^a}{(P_1(a))}$ is one of the termination conditions.
%  which is one of the two conditions under which refinement algorithms terminate.
The hypergraph $G^a$ is unknown,
  and thus we model it by a random variable~$\randomG^a$.
However, we do know from \autoref{lemma:predict-termination}(b)
  that $q\in\reach{G^a}{(P_1(a))}$ if and only if $q\in\reach{H}{(\pi(P_1(a)))}$.
Thus,
\begin{align*}
  \Pr\bigl(q\in\reach{\randomG^a}{(P_1(a))}\bigr)
  & 
  {}\;=\;
  \Pr\bigl(q\in\reach{\randomH}{(\pi(P_1(a)))}\bigr)
\\&
  \;=\;
  \sum_{\substack{R\\q\in R}} \Pr \bigl( \reach{\randomH}{(\pi(P_1(a)))} = R \bigr)
\end{align*}
where $R$ ranges over subsets of vertices of $G^\bot$.
It remains to compute a probability of the form $\Pr\bigl(\reach{\randomH}{T}=R\bigr)$.
Explicit expressions for such probabilities are also needed during learning,
  so they are discussed later (\autoref{sec:learning}).

Intuitively,
  one could think that the refinement algorithm runs a simulation
  in which the static analyser is approximated by the probabilistic model.
However, it would be inefficient to actually run a simulation,
  and we will have to use heuristics that have a similar effect (\autoref{sec:refinement}),
  namely to minimise the expected total runtime.

% >>>
% >>>
% vim:spell spelllang=en_gb:fmr=<<<,>>>:

\section{Learning}\label{sec:learning} % <<<

The probabilistic model (\autoref{sec:model}) lets us compute the probability
  that a given abstraction will provide a definite answer, and thus terminate the refinement.
These probabilities are computed as a function of hyperparameters.
The values of the hyperparameters, however, remain to be determined.
To find good hyperparameters,
  we shall use a standard method from machine learning,
  namely MLE (maximum likelihood estimation).

MLE works as follows.
First, we set up an experiment.
The result of the experiment is that we observe an event~$O$.
Next, we compute the \df{likelihood}~$\Pr(O)$ according to the model,
  which is a function of the hyperparameters.
Finally, we pick for hyperparameters values that maximise the likelihood.

The standard challenge in deploying the MLE method is in the last phase:
  the likelihood is typically a complicated function of the hyperparameters.
Often, to maximise the likelihood,
  analytic methods do not exist, and numeric methods could be unstable or inefficient.
This is indeed the case for our model:
  analytic methods do not apply, and many numeric methods are inefficient.
But, we did find one numeric method that is both stable and efficient
  (\autoref{section:experiments-optimisation}).
In addition to the standard challenge, our setting presents an additional difficulty.
%  both of them related to the computation of the likelihood.
%Usually, $O$~has the form $O_1\cap\ldots\cap O_n$,
%  where the events $O_1,\ldots,O_n$ are independent.
%In our setting, the event $O$ does indeed have the form $O_1\cap\ldots\cap O_n$,
%  but the events $O_1,\ldots,O_n$ are not independent.
%We will handle this difficulty
%  by finding a way to compute $\Pr(O)$ other than the factorisation $\prod_{i = 1}^n \Pr(O_i)$.
%Even so, a second difficulty will arise:
The expression of~$\Pr(O)$ is exponentially large
  if the cheap provenance has cycles.
We will handle this difficulty by finding bounds that approximate $\Pr(O)$.
%In short, the difficulties we need to overcome are:
%(1)~there are dependencies among $O_1,\ldots,O_n$, which we need to account for, and
%(2)~the cycles of the cheap provenance make it infeasible to compute the likelihood exactly.

\subsection{Training Experiment}\label{section:learning-experiment} % <<<

For the training experiment, we collect a set of programs.
For the formal development,
  it is convenient to consider the set of programs as one larger program.
We run the analysis on this large training program several times,
  each time under a different abstraction.
The abstractions $a_1,\ldots,a_n$ are chosen randomly, with bias.
In particular,
  they have to be cheap enough so that the analysis terminates in reasonable time.
As a result of running the analysis, we observe the provenances $G^{a_1},\ldots,G^{a_n}$.
To connect these observed provenances to a probabilistic event,
  we shall use the predictability condition~\eqref{eq:compatible}
  together with the following simple fact.

\begin{proposition}\label{prop:hypergraph-slice}
Let $G$ be a hypergraph, and let $T_1$~and~$T_2$ be sets of facts.
If $T_1\subseteq T_2$, then $\reach{G}{T_1}=\reach{G'}{T_1}$,
  where $G'=G[\reach{G}{T_2}]$.
\end{proposition}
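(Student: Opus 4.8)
The plan is to prove the two inclusions separately, using the characterisation of $\reach{G}{T}$ as the least set that contains $T$ and is closed under the arcs of $G$ (i.e.\ if $(h,B)\in G$ and $B\subseteq\reach{G}{T}$ then $h\in\reach{G}{T}$). Throughout, I write $R_2\defeq\reach{G}{T_2}$, so that $G'=G[R_2]$.

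The inclusion $\reach{G'}{T_1}\subseteq\reach{G}{T_1}$ is immediate. Since $G'=G[R_2]$ is obtained from $G$ by discarding arcs, we have $G'\subseteq G$, and \autoref{prop:monotone-reachability}(b) applies directly.

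For the reverse inclusion $\reach{G}{T_1}\subseteq\reach{G'}{T_1}$, I would appeal to the minimality of $\reach{G}{T_1}$: it suffices to show that $\reach{G'}{T_1}$ contains $T_1$ and is closed under the arcs of $G$ (not merely those of $G'$). Containment of $T_1$ is immediate. For closure, take any arc $(h,B)\in G$ with $B\subseteq\reach{G'}{T_1}$; I must derive $h\in\reach{G'}{T_1}$. The crux is to argue that this arc actually survives in $G'$, that is, $(h,B)\in G[R_2]$, which by definition of the induced sub-hypergraph requires both $B\subseteq R_2$ and $h\in R_2$. For the body, combining the easy inclusion just proved with $T_1\subseteq T_2$ and \autoref{prop:monotone-reachability}(a) gives $\reach{G'}{T_1}\subseteq\reach{G}{T_1}\subseteq\reach{G}{T_2}=R_2$, so $B\subseteq R_2$. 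For the head, since $R_2=\reach{G}{T_2}$ is itself a reachability set and hence closed under the arcs of $G$, the facts $(h,B)\in G$ and $B\subseteq R_2$ force $h\in R_2$. Thus $(h,B)\in G'$, and because $\reach{G'}{T_1}$ is closed under the arcs of $G'$ with $B\subseteq\reach{G'}{T_1}$, we obtain $h\in\reach{G'}{T_1}$, as required. Minimality of $\reach{G}{T_1}$ then yields $\reach{G}{T_1}\subseteq\reach{G'}{T_1}$, and the two inclusions together give the claim.

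The only delicate point is the head membership $h\in R_2$: it is not handed to us directly but follows from the fixed-point property of $R_2$. Everything else is routine monotonicity bookkeeping. Conceptually this is the familiar observation that restricting a hypergraph to what is reachable from a larger seed set is a form of sound ``garbage collection'' that cannot affect reachability from a smaller seed set contained in it.
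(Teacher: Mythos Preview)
Your proof is correct. The paper itself does not supply a proof of this proposition: it is introduced as a ``simple fact'' and stated without argument. Your two-inclusion argument, using \autoref{prop:monotone-reachability}(b) for one direction and the least-fixed-point characterisation of $\reach{G}{T_1}$ for the other, is the natural way to fill in the details; the observation that $h\in R_2$ follows from closure of $R_2$ under the arcs of~$G$ is exactly the point that makes the induced sub-hypergraph retain every arc relevant to reachability from~$T_1$.
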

\begin{corollary}\label{corollary:provenance-slice}
Let $a$ be an abstraction for analysis~${\cal A}$.
We have $\reach{G^{\top}}{(P_1(a))}=\reach{G^a}{(P_1(a))}$.
\end{corollary}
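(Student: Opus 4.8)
The plan is to derive the corollary from Proposition~\ref{prop:hypergraph-slice} by applying it twice, in both cases taking the global provenance~$G$ as the ambient hypergraph and $T_1 \defeq P_1(a)$ as the starting set, but with two different choices of the larger set~$T_2$. Each application discards the arcs of~$G$ that are irrelevant to reachability from~$P_1(a)$, and the point is that the two resulting induced sub-hypergraphs $G[\reach{G}{T_2}]$ are exactly $G^{\top}$ and $G^a$.

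For the first application I would take $T_2 \defeq P_1(\top)$. Since $\top$ assigns~$1$ to every parameter, no parameter~$x$ satisfies $\top(x)=0$, so $P_0(\top)=\emptyset$ and hence ${\cal A}(\top)=\reach{G}{(P_0(\top)\cup P_1(\top))}=\reach{G}{(P_1(\top))}$. Therefore $G[\reach{G}{T_2}]=G[{\cal A}(\top)]=G^{\top}$. The hypothesis $T_1\subseteq T_2$ holds because $P_1(a)=\{\,p_1(x)\mid a(x)=1\,\}$ is contained in $P_1(\top)=\{\,p_1(x)\mid x\in P\,\}$. Proposition~\ref{prop:hypergraph-slice} then yields $\reach{G}{(P_1(a))}=\reach{G^{\top}}{(P_1(a))}$.

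For the second application I would take $T_2 \defeq P_0(a)\cup P_1(a)$, so that $G[\reach{G}{T_2}]=G[{\cal A}(a)]=G^a$ directly from the definitions of~${\cal A}(a)$ and of~$G^a$. Here the hypothesis $T_1\subseteq T_2$ is immediate, since $P_1(a)\subseteq P_0(a)\cup P_1(a)$. Proposition~\ref{prop:hypergraph-slice} now gives $\reach{G}{(P_1(a))}=\reach{G^a}{(P_1(a))}$. Chaining the two equalities through the common middle term $\reach{G}{(P_1(a))}$ produces $\reach{G^{\top}}{(P_1(a))}=\reach{G^a}{(P_1(a))}$, which is the claim.

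There is no real obstacle here beyond the bookkeeping of matching each $G[\reach{G}{T_2}]$ with the intended induced sub-hypergraph. The only slightly non-obvious step is observing that $P_0(\top)=\emptyset$, which is what lets me rewrite ${\cal A}(\top)$ as reachability from $P_1(\top)$ alone and thereby recover $G^{\top}$ as an instance of the proposition's~$G'$. Everything else is a direct substitution into Proposition~\ref{prop:hypergraph-slice}.
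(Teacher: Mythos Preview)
Your argument is correct and is exactly the intended derivation: the paper states the result as an immediate corollary of Proposition~\ref{prop:hypergraph-slice} without spelling out a proof, and the two applications you give (through the common middle term $\reach{G}{(P_1(a))}$, using $P_0(\top)=\emptyset$ to identify $G^{\top}$ with $G[\reach{G}{(P_1(\top))}]$) are precisely what that label presumes.
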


Given an efficient way to compute the projection~$\pi$,
  we can compute the sets of facts $R_k \defeq \pi\bigl(\reach{G^{a_k}}{(P_1(a_k))}\bigr)$,
  for each $k\in\{1,\ldots,n\}$.
Using \autoref{corollary:provenance-slice} and condition~\eqref{eq:compatible},
  we have that $R_k=\reach{H}{(\pi(P_1(a_k)))}$,
  for $k\in\{1,\ldots,n\}$.
We define the following events:
\begin{align*}
  O_k \;&\defeq\; \bigl( \reach{\randomH}{(\pi(P_1(a_k)))} = R_k \bigr)
    &&\text{\rm for $k\in\{1,\ldots,n\}$}
\\
  O \;&\defeq\; \bigl( O_1 \cap \ldots \cap O_n \bigr)
\end{align*}
The event $O$ is what we observe.
It is completely described by the pairs $(a_k, R_k)$.
The abstraction $a_k$ is sampled at random.
The set~$R_k$ of facts is easily computed from~$G^{a_k}$.
The provenance $G^{a_k}$ is obtained from the set of instantiated Datalog rules during
the analysis under abstraction~$a_k$, and it records all the reasoning steps of the analysis.

%In practice,
%  we do not analyse all training programs at once, but rather one at a time.
%This optimisation is straightforward, and does not warrant additional explanation.

% >>>
\subsection{Bounds on Likelihood}\label{section:learning-likelihood} % <<<

There appears to be no formula that computes the likelihood~$\Pr(O)$
  and that is not exponentially large.
However, there exist reasonably small formulas that provide lower and upper bounds.
We shall use the lower bound for learning,
  and we shall use both bounds to evaluate the quality of the model.

%To state the main result on likelihood computation, we need to define forward arcs.
One could define different bounds on likelihood.
Our choice relies on the concept of forward arc,
  which leads to several desirable properties we will see later.
Given a hypergraph~$G$,
  we define the \df{distance}~$d^{(G)}_T(h)$ from vertices~$T$ to vertex~$h$
  by requiring $d^{(G)}_T$ to be the unique fixed-point of the following equations:
\begin{align*}
d^{(G)}_T(h) & {} = 0 && \mbox{if $h \in T$}
\\
d^{(G)}_T(h) & {} = \infty && \mbox{if $h \not\in \reach{G}{T}$}
\\
d^{(G)}_T(h) & {} = \min_{e = (h,B) \in G} \max_{b \in B} (d^{(G)}_T(b) + 1) && \mbox{otherwise}
\end{align*}
We omit the superscript when the hypergraph is clear from context.
A \df{forward arc} with respect to~$T$ is an arc $e = (h,B) \in G$
  such that $d_T(h) > d_T(b)$ for every $b \in B$.

\begin{restatable}{theorem}{learningtheorem}\label{theorem:learning}
Consider the probabilistic model associated with the cheap provenance~$G^{\bot}$
  of some analysis~${\cal A}$.
Let $T_1,\ldots,T_n$ and $R_1,\ldots,R_n$ be subsets of vertices of~$G^{\bot}$.
If $h\notin B$ for all arcs $(h,B)$ in~$G^{\bot}$
and $R_k \subseteq \reach{G^\bot}{T_k}$ for all $k$,
  then we have the following lower and upper bounds on
    $\Pr\bigl(\bigcap_{k=1}^n ( \reach{\randomH}{T_k} = R_k)\bigr)$:
\begin{align*}
& \prod_{e\in N} \E \bar{\bf S}_e
  \prod_{\substack{h\\C_h \neq \emptyset}}
  \hskip-0.5em
    \sum_{\substack{
      E_1\\
      E_1\subseteq A_h\\
      \forall k\in C_h,\; E_1\cap {\color{blue}F_k}\ne\emptyset
    }}
    \hskip-1em
      \prod_{e\in E_1} \E {\bf S}_e
      \prod_{e\in A_h\setminus E_1} \E \bar{\bf S}_e
\\%[1ex]
  {} \;\le\;
&
  \Pr \Bigl( \bigcap_{k=1}^n \bigl( \reach{\randomH}{T_k} = R_k \bigr) \Bigr)
\\[1ex]
  \;\le\; {}
&
  \prod_{e\in N} \E \bar{\bf S}_e
  \prod_{\substack{h\\C_h \neq \emptyset}}
  \hskip-0.5em
    \sum_{\substack{
      E_1\\
      E_1\subseteq A_h\\
      \forall k\in C_h,\; E_1\cap {\color{blue}D_k}\ne\emptyset
    }}
    \hskip-1em
      \prod_{e\in E_1} \E {\bf S}_e
      \prod_{e\in A_h\setminus E_1} \E \bar{\bf S}_e
\end{align*}
where
\begin{align*}
  N &\defeq
   \{\,(h',B')\in G^{\bot}\mid
    \text{\rm $B'\subseteq R_{k'}$ and $h'\notin R_{k'}$ for some~$k'$}\,\}
\\
  C_h &\defeq \{\,k'\mid h\in R_{k'}\setminus T_{k'}\,\}
  \qquad
  A_h \defeq \{\,(h,B')\in G^{\bot}\,\} \setminus N
\\
  D_k &\defeq \{\,(h',B')\in G^{\bot}\mid B'\subseteq R_k\,\}
\\
  F_k &\defeq \{\,e=(h',B') \in D_k \mid \text{\rm $e$ is a forward arc w.r.t.~$T_k$}\,\}
\end{align*}
\end{restatable}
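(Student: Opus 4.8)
The plan is to compute the likelihood
$\Pr\bigl(\bigcap_{k=1}^n(\reach{\randomH}{T_k}=R_k)\bigr)$
by decomposing the event into independent per-vertex constraints, and then to relax the resulting exact (but exponentially large) formula into the stated lower and upper bounds. The first step is to rewrite each event $(\reach{\randomH}{T_k}=R_k)$ in terms of the selection variables ${\bf S}_e$. Since $R_k\subseteq\reach{G^\bot}{T_k}$, the event says that the arcs actually selected into~$\randomH$ reach \emph{exactly} $R_k$ from~$T_k$, no more and no less. The ``no more'' part forces certain arcs to be \emph{unselected}: precisely those arcs $(h',B')$ with $B'\subseteq R_{k'}$ but $h'\notin R_{k'}$ for some~$k'$, because such an arc would illegitimately pull~$h'$ into the reachable set. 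This is exactly the set~$N$, so every arc in~$N$ must be unselected, contributing the common factor $\prod_{e\in N}\E\bar{\bf S}_e$ to both bounds. The ``no less'' part requires that each vertex $h\in R_{k}\setminus T_{k}$ (i.e.\ each $h$ with $C_h\ne\emptyset$) actually \emph{be} reached, which must be witnessed by at least one selected arc into~$h$ whose body lies in~$R_k$.

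\textbf{Reducing to a per-head product.}
First I would argue that the joint event factorises over heads~$h$. Arcs with different heads have independent selection variables, and the constraint at each head~$h$ is: the selected arcs among $A_h=\{(h,B')\}\setminus N$ must, for every index $k\in C_h$, include at least one arc that can witness reachability of~$h$ from~$T_k$. Writing $E_1\subseteq A_h$ for the set of selected non-$N$ arcs into~$h$, the probability of a given selection pattern is
$\prod_{e\in E_1}\E{\bf S}_e\,\prod_{e\in A_h\setminus E_1}\E\bar{\bf S}_e$,
and we must sum over those $E_1$ satisfying the witnessing requirement for each~$k\in C_h$. The difficulty is stating the witnessing requirement exactly: whether a selected arc $(h,B')$ truly helps reach~$h$ depends on whether all of~$B'$ is itself reached, which couples the constraints at different heads and is the source of the combinatorial blow-up. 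The two bounds arise from replacing the exact requirement by two simpler, decoupled conditions on~$E_1$.

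\textbf{The two relaxations.}
For the \emph{upper} bound, I would relax the witnessing condition to the weakest necessary one: it suffices to demand that $E_1$ meets $D_k=\{(h',B'):B'\subseteq R_k\}$, i.e.\ contains at least one arc into~$h$ whose body is contained in~$R_k$. This is necessary for $h$ to be reached, but ignores whether the body vertices are \emph{actually} reached along a consistent global witness; dropping that coupling only enlarges the feasible set of~$E_1$, hence gives an over-estimate. For the \emph{lower} bound I would strengthen to the forward arcs~$F_k\subseteq D_k$: here the notion of distance $d^{(G)}_{T_k}$ is essential. A forward arc strictly decreases distance from every body vertex to the head, so selecting one forward witness per reachable vertex yields an acyclic certificate of reachability. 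The key lemma I expect to prove is that if, for every $h$ with $C_h\neq\emptyset$, the selected arcs contain some forward arc into~$h$ w.r.t.~$T_k$, then by induction on distance $d^{(G^\bot)}_{T_k}(h)$ the whole of $R_k$ is genuinely reached in~$\randomH$ — so these configurations are \emph{sufficient}, giving an under-estimate. The forwardness (and the hypothesis $h\notin B$, which rules out self-loops and makes distances well defined) is exactly what breaks the cyclic dependency and lets the induction go through.

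\textbf{Main obstacle.}
The hard part will be the lower-bound sufficiency argument: I must show that requiring a forward witness at each vertex really does reconstruct all of~$R_k$, and — crucially — that these per-head events are \emph{compatible} across the different indices $k$ and different heads, so that their probabilities genuinely multiply without double-counting or hidden correlations. The independence of selection variables handles the probabilistic side, but the combinatorial side (that the union over heads of ``pick a forward arc'' events implies global exact reachability $=R_k$, simultaneously for all~$k$) requires a careful induction on distance together with the observation that forward arcs with respect to different $T_k$ interact only through the shared selection variables already accounted for. Once sufficiency (lower) and necessity (upper) are established, the matching factor $\prod_{e\in N}\E\bar{\bf S}_e$ and the per-head sums combine to give exactly the claimed inequalities.
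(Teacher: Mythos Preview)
Your proposal is correct and captures the essential ideas, but the paper organises the argument differently. The paper first establishes an \emph{exact} expression for the likelihood using the machinery of loop formulas from logic programming: it shows (via an auxiliary ``extended cheap provenance'' construction) that
\[
\Pr\Bigl(\bigcap_k(\reach{\randomH}{T_k}=R_k)\Bigr)
= \prod_{e\in N}\E\bar{\bf S}_e \cdot
  \E\Bigl(\bigwedge_{L\ \text{loop of } G^{\bot}}\ \bigwedge_{k:\,L\subseteq R_k\setminus T_k}\ \bigvee_{\substack{e=(h,B)\in J_{G^\bot}(L)\setminus N\\ B\subseteq R_k}} {\bf S}_e\Bigr),
\]
where $L$ ranges over \emph{all} loops (strongly connected vertex sets), not just singletons. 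The upper bound then drops all non-trivial loop conjuncts, leaving exactly your per-head necessary condition (the hypothesis $h\notin B$ makes $J_{G^\bot}(\{h\})$ consist of all arcs with head~$h$). For the lower bound, the paper shows that the forward-arc formula implies every loop conjunct: for an arbitrary loop $L_0\subseteq R_{k_0}\setminus T_{k_0}$, one picks the vertex $h\in L_0$ of minimum distance $d_{T_{k_0}}(h)$; the selected forward arc at~$h$ has body of strictly smaller distance, hence disjoint from~$L_0$, hence lies in $J_{G^\bot}(L_0)$.

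Your route bypasses loop formulas entirely and argues event implications directly: upper bound by necessity of a witness in $A_h\cap D_k$, lower bound by an induction on $d_{T_k}$ showing that forward witnesses (plus $N$ unselected) force $\reach{H}{T_k}=R_k$ exactly. This is more elementary and perfectly adequate for the theorem as stated; what you lose is the exact formula itself, which the paper obtains as a byproduct and uses to explain why the exponential blow-up is intrinsic (it cites a lower bound on loop-formula size). Your ``main obstacle'' paragraph slightly overstates the difficulty: once you have a forward witness in $A_h\cap F_k$ at every $h\in R_k\setminus T_k$, the per-$k$ inductions are completely decoupled, and cross-$k$ compatibility is automatic because the bound is already written as a product over disjoint arc sets $N$ and $\{A_h\}_h$.
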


%The proof is in
%  \longshort{\autoref{section:lemma-learning-proof}}{the full version}.
Intuitively, the arcs in~$N$ are those arcs that were observed to be not selected;
  thus, the factor $\prod_{e \in N} \E \bar{\bf S}_e$.
For each reachable vertex, there is a factor that requires a justification,
  in terms of other reachable vertices and in terms of selected arcs.
Let us consider a simple example, in which the lower and upper bounds coincide:
  there are four arcs $e_k=(h,\{b_k\})$ for $k\in\{1,2,3,4\}$,
  and we observed $R_1=\{b_1\}$,\; $R_2=\{b_1,b_2,b_4,h\}$, and $R_3=\{b_3,b_4,h\}$.
In $R_1$, vertex $h$ is not reachable but $b_1$ is,
  so ${\bf S}_{e_1}$ must not hold.
In $R_2$, vertex $h$ is reachable and could be justified by one of $e_1,e_2,e_4$,
  so ${\bf S}_{e_1} \lor {\bf S}_{e_2} \lor {\bf S}_{e_4}$ must hold.
In $R_3$, vertex $h$ is reachable and could be justified by one of $e_3,e_4$,
  so ${\bf S}_{e_3} \lor {\bf S}_{e_4}$ must hold.
In all,
\begin{align}
\begin{aligned}
  &\bar{\bf S}_{e_1}
  \land ({\bf S}_{e_1} \lor {\bf S}_{e_2} \lor{\bf S}_{e_4})
  \land ({\bf S}_{e_3} \lor{\bf S}_{e_4})
\\
  &\quad=
  \bar{\bf S}_{e_1}
  \land ({\bf S}_{e_2} \lor{\bf S}_{e_4})
  \land ({\bf S}_{e_3} \lor{\bf S}_{e_4})
\end{aligned}
  \label{eq:example-likelihood}
\end{align}
must hold.
The expectation of this quantity is written in \autoref{theorem:learning} as
$
  \E \bar{\bf S}_{e_1}
  (
  \E \bar{\bf S}_{e_2} \E \bar{\bf S}_{e_3} \E {\bf S}_{e_4}
  + \cdots +
  \E {\bf S}_{e_2} \E {\bf S}_{e_3} \E {\bf S}_{e_4}
  )
$,
where the inner sum enumerates the models of
  $({\bf S}_{e_2} \land {\bf S}_{e_3}) \lor {\bf S}_{e_4}$.

The situation becomes more complicated when the hypergraph has cycles.
In the presence of cycles,
  the recipe from the previous example does not compute the likelihood,
  but it does compute an upper bound.
The reason is that it counts all cyclic justifications as if they were valid.
Indeed, this is the upper bound given in \autoref{theorem:learning}.
For the lower bound,
  we first eliminate cycles by dropping some arcs, thus lowering the likelihood;
  then, we apply the same recipe.
\autoref{theorem:learning} indicates that the arcs which should be dropped
  are the nonforward arcs.
Why is this a good choice?
One might think that we should drop a minimum number of arcs if we want a good lower bound.
However,
(1)~it is NP-hard to find the minimum number of arcs%
  ~\cite[Feedback Arc Set]{karp21}, and
(2)~the set of such arcs is not uniquely determined.
In contrast, we can find the set of nonforward arcs in polynomial time,
  and the solution is unique.

Another nice property of the set of forward arcs is that,
  if for each reachable vertex~$h$ we retain at least one forward arc whose head is~$h$,
  then all reachable vertices remain reachable.
This property is desirable for detecting impossibility
  (see \autoref{lemma:feasible-subset}).
In terms of the lower bound,
  this property means that we never lower bound a positive probability by~$0$.

In the implementation, we sometimes heuristically drop forward arcs,
  in order to keep the size of the formula small.
But, we only choose to drop a forward arc with head~$h$
  if there are more than $8$~forward arcs with head~$h$.
For example, if we drop arc~$e_2$ in our running example,
  the effect is that we lower bound~\eqref{eq:example-likelihood} by
$$
  \bar{\bf S}_{e_1}
  \land {\bf S}_{e_4}
  \land ({\bf S}_{e_3} \lor{\bf S}_{e_4})
$$
We simply drop the corresponding variable ${\bf S}_{e_2}$ from the formula,
  thus making the formula smaller.
%  especially if we further apply algebraic simplifications.
Similarly, we can reduce the size of the formula for the upper bound,
  at the cost of weakening the bond.
This time, we drop clauses rather than variables.
For example, we can upper bound~\eqref{eq:example-likelihood} by
$$
  \bar{\bf S}_{e_1}
  \land ({\bf S}_{e_2} \lor{\bf S}_{e_4})
$$
For each vertex,
  our implementation drops all clauses except for the longest one.

\smallskip
Although the probabilistic model is simple,
  computing the likelihood of an event of the form
  `$\reach{\randomH}{T_1}=R_1$ and $\ldots$ and $\reach{\randomH}{T_n}=R_n$'
  is not computationally easy.
\longshort{\autoref{section:lemma-learning-proof}}{The full version of this paper}
  gives an exact formula
  that has size exponential in the number of vertices of the cheap provenance,
  but also points to evidence that a significantly smaller formula is unlikely to exist%
    \longshort{}{~\cite{razborov:many-loops}}.
The size explosion is caused mainly by the cycles of the cheap provenance.
\autoref{theorem:learning} gives likelihood lower and upper bounds
  that are exponential only in the maximum in-degree of the cheap provenance.
These formulas are still too large to be used in practice.
However, there are simple heuristics that can be applied to reduce the size of the formulas,
  at the cost of weakening the bounds.

We use the lower bound to learn hyperparameters
  (\autoref{section:experiments-optimisation}).
We use the upper bound to measure the quality of the learnt hyperparameters
  (\autoref{section:experiments-prediction}).

% >>>
\subsection{Results}\label{section:learning-results} % <<<

We learnt hyperparameters for a flow insensitive but object sensitive aliasing analysis.
%  (see \autoref{section:experiments} for details).
The aliasing analysis is implemented in $59$~Datalog rules.
All but $5$ rules get a hyperparameter of~$1$.
A rule with a hyperparameter of~$1$
  is a rule that was not observed to be involved in any approximation, in the training set.
For two of the remaining five rules,
  the learnt hyperparameters were essentially random,
  because the likelihood lower bound did not depend on them.
The reason is
  that the training set did not contain enough data,
  or that the lower bound was too weak.
% The two rules that got random hyperparam values:
% reachableT(t) :- reachableCI(_,i), objNewInstIH(i,h), HT(h,t).
% reachableT(t) :- reachableCM(_,m), MputStatFldInst(m,f,_), staticTF(t,f).

For the remaining three rules the hyperparameters were $0.997$, $0.985$, and $0.969$.
These values were robust, in the sense that they varied little when the training subset changed.
% The three rules that <1:
% 0.997:  CVC(c,l,c2) :- CVC_33_0(c,l,f,c1), CFC(c1,f,c2).
% 0.985:  CFC(c1,f,c2) :- CFC_37_0(c,f,r,c1), CVC(c,r,c2).
% 0.969:  CVC(c,u,o) :- DVDV(c,u,d,v), CVC(d,v,o), VCfilter(u,o).
For example, the rule with a hyperparameter of~$0.969$ is
$$
  {\tt CVC}(c,u,o) \gets {\tt DVDV}(c,u,d,v), {\tt CVC}(d,v,o), {\tt VCfilter}(u,o)
$$
Looking briefly at the aliasing analysis implementation we see that
  (a)~${\tt CVC}(c,u,o)$ means `in context $c$, variable~$u$ may point to object~$o$', and
  (b)~the relation ${\tt DVDV}$ is responsible for copying method arguments and returned values.
We interpret this as evidence that the approximations done by the aliasing analysis
  are closely related to approximations of the call graph.

We are not the authors of the aliasing analysis; it is taken from Chord.
Our learning algorithm automatically identified the three rules that are most interesting,
  from the point of view of approximation.

% >>>
% >>>
% vim:spell spelllang=en_gb:fmr=<<<,>>>:

\section{Refinement}\label{sec:refinement} % <<<

The probabilistic model is interesting from a theoretical point of view
  (\autoref{sec:model}).
The learning algorithm is already useful,
  because it lets us find which rules of a static analysis approximate the concrete semantics,
  and by how much
  (\autoref{sec:learning}).
In this section we explore another potential use of the learnt probabilistic model:
  to speed up the refinement of abstractions.

We consider a refinement algorithm that is applicable to analyses implemented in Datalog
  (\autoref{section:refinement-algorithm}).
The key step of refinement is choosing the next abstraction to try.
Abstractions that make good candidates share several desirable properties.
In particular,
  they are likely to answer the posed query (\autoref{section:refinement-termination}),
  and they are likely to be cheap to try (\autoref{section:refinement-balance}).
These two desiderata need to be balanced (also \autoref{section:refinement-balance}).
Once we formalise how desirable an abstraction is,
  the next task is to search for the most desirable one
  (\autoref{section:refinement-maxsat}).

\subsection{Refinement Algorithm}\label{section:refinement-algorithm} % <<<

\begin{figure} % <<< main algo
{\sffamily\small
\begin{alg}
Given: A well formed, monotone analysis ${\cal A}$, and a query $q$.
\proc{Solve}
\0  $a := \bot$ \comment $\bot$ as initial abstraction
\0  repeat
\1    $G^a := G[{\cal A}(a)]$ \comment invokes analysis
\1    if $q \notin {\cal A}(a)$ then return ``yes''
\1    if $q \in \reach{G^a}{(P_1(a))}$ then return ``no''
\1    $a := \proc{ChooseNextAbstraction}(G^a, q, a)$
\end{alg}
}
\caption{
  The refinement algorithm used to solve \autoref{problem}.
}
\label{figure:refinement-algorithm}
\end{figure}
% >>>

The refinement algorithm is straightforward (\autoref{figure:refinement-algorithm}).
It repeatedly obtains the provenance~$G^a$
%RG: i think "analysis A *under* abstraction a" is used in many places
by running the analysis under abstraction~$a$ (line~3),
  checks if one of the two termination conditions holds (lines 4~and~5),
  and invokes $\textsc{ChooseNextAbstraction}$ to update the current abstraction (line~6).
The correctness of this algorithm follows
  from the discussion in \autoref{sec:model-analyses},
  and in particular \autoref{lemma:query-impossible}.

Let $a'$ be the result of $\textsc{ChooseNextAbstraction}(G^a,q,a)$.
For termination, we require that $a'$~is strictly more precise than~$a$.
This is sufficient because the lattice of abstractions is finite.
The next abstraction to try should satisfy two further requirements:
\begin{enumerate}
\item The termination conditions are likely to hold for~$a'$.
\item The estimated runtime of~${\cal A}$ under~$a'$ is small.
\end{enumerate}
Next, we discuss these two requirements in turn.
To some degree, we will make each of them more precise.
But, we caution that from now on the discussion leaves the realm of hard theoretical guarantees,
  and enters the land of heuristic reasoning,
  where discussions about static program analysis are typically found.

% >>>
\subsection{Making Termination Likely}\label{section:refinement-termination} % <<<

The key step of the refinement algorithm (\autoref{figure:refinement-algorithm})
  is the procedure $\textsc{ChooseNextAbstraction}$.
The simplest implementation that would ensure correctness is the following:
  return a random element from the set of feasible abstractions
%\begin{align*}
  $\{\,a'\mid a'>a\,\}$.
%\end{align*}
Note that if $a$~were the most precise abstraction
  then the procedure $\textsc{ChooseNextAbstraction}$ would not be called,
  so the feasible set from above is indeed guaranteed to be nonempty.

One idea to speed up refinement is to restrict the set of feasible solutions
  to those abstractions that are likely to provide a definite answer.
Let $A_{\rm y}$~and~$A_{\rm n}$ be the sets of abstractions
  that will lead the refinement algorithm to terminate on the next iteration
  with the answer `yes' or, respectively, `no':
\begin{align*}
  A_{\rm y} \;&\defeq\; \{\,a'\mid\text{$a'>a$ and $q\notin{\cal A}(a')$}\,\} \\
  A_{\rm n} \;&\defeq\; \{\,a'\mid\text{$a'>a$ and $q\in\reach{G^{a'}}{(P_1(a'))}$}\,\}
\end{align*}
Of course, exactly one of the two sets $A_{\rm y}$~and~$A_{\rm n}$ is nonempty,
  but we do not know which.
More generally, we cannot evaluate these sets exactly without running the analysis.
But, we can approximate them, because $\textsc{ChooseNextAbstraction}$ has access to~$G^a$.
For $A_{\rm y}$ we can compute an upper bound~$A_{\rm y}^{\supseteq}$;
  for $A_{\rm n}$ we use a heuristic approximation~$A_{\rm n}^{\approx}$.
\begin{align*}
  A_{\rm y}^{\supseteq} \;&\defeq\;
    \{\,a'\mid\text{$a'>a$ and $q\notin\reach{G^a}{(P_0(a'))}$}\,\} \\
  A_{\rm n}^{\approx} \;&\defeq\;
    \{\,a'\mid\text{$a'>a$ and $q\in\reach{H}{(T(a,a'))}$}\,\}
\intertext{for some $H \subseteq G^a$, where}
  T(a,a') \;&\defeq\; P_1(a) \cup \pi(P_1(a') \setminus P_1(a))
\end{align*}
It is easy to see why $A_{\rm y}^{\supseteq}\supseteq A_{\rm y}$;
  it is less easy to see why $A_{\rm n}^{\approx}\approx A_{\rm n}$.
Let us start with the easy part.

\begin{lemma}
Let $A_{\rm y}^{\supseteq}$~and~$A_{\rm y}$ be defined as above.
Then $A_{\rm y}^{\supseteq} \supseteq A_{\rm y}$.
\end{lemma}
% XXX
\begin{proof}
Assume that $a'>a$, as in the definitions of $A_{\rm y}^{\supseteq}$~and~$A_{\rm y}$.
Then $P_0(a')\subseteq P_0(a)$.
By \autoref{prop:hypergraph-slice} and \autoref{prop:monotone-reachability},
\begin{align*}
  \reach{G^a}{(P_0(a'))}
  =\reach{G}{(P_0(a'))}
  =\reach{G^{a'}}{(P_0(a'))}
  \subseteq{\cal A}(a')
\end{align*}
The claimed inclusion now follows.
\end{proof}

Let us now discuss the less obvious claim that $A_{\rm n}^{\approx}\approx A_{\rm n}$.
One could wonder why we did not define $A_{\rm n}^{\approx}$ by
\begin{align*}
  \{\,a'\mid\text{$a'>a$ and $q\in\reach{H}{(\pi(P_1(a')))}$}\,\}
\end{align*}
for some $H \subseteq G^{\bot}$.
This definition is simpler and is also guaranteed to be equivalent to $A_{\rm n}$,
  by the predictability condition~\eqref{eq:compatible}.
In the implementation,
  we use the more complicated definition of $A_{\rm n}^{\approx}$ for two reasons.
First,
  we note that \eqref{eq:compatible}~implies $A_{\rm n}^{\approx}=A_{\rm n}$
  if $a=\bot$.
Thus,
  the claim that $A_{\rm n}^{\approx}=A_{\rm n}$
  can be seen as a generalisation of~\eqref{eq:compatible}.
We did not use this generalisation of~\eqref{eq:compatible}
  in the more theoretical parts (\autoref{sec:model} and \autoref{sec:learning})
  because it would complicate the presentation considerably.
For example, instead of one projection~$\pi$,
  we would have a family of projections that compose.
In principle, however, it would be possible to take $A_{\rm n}^{\approx}=A_{\rm n}$
  as an axiom, from the point of view of the theoretical development.
Second, the more complicated definition of~$A_{\rm n}^{\approx}$
  exploits all the information available in~$G^a$.
The simpler version can also incorporate information from~$G^a$
  by conditioning $H$ to be compatible with~$G^a$, via~\eqref{eq:compatible}.
However, this conditioning would only use the projected set of vertices of~$G^a$,
  rather than its full structure.

Furthermore,
  the definition of~$A_{\rm n}^{\approx}$ used in the implementation
  has the following intuitive explanation.
The condition $A_{\rm n}^{\approx}\approx A_{\rm n}$ tells us that
  in order to predict $\reach{G^{a'}}{(P_1(a'))}$ by using~$G^a$ we should do the following:
(i)~split $P_1(a')$ into $P_1(a)$ and $P_1(a') \setminus P_1(a)$;
(ii)~use the facts $P_1(a)$ as they are, because they already appear in~$G^a$;
(iii)~approximate the facts in $P_1(a') \setminus P_1(a)$ by their projections,
  because they do not appear in~$G^a$; and
(iv)~define the predictive provenance~$H$ with respect to~$G^a$,
  because it is the most precise provenance available so far.

We defined two possible restrictions of the feasible set,
  namely $A_{\rm y}^{\supseteq}$~and~$A_{\rm n}^{\approx}$.
The remaining question is now which one should we use,
  or whether we should use some combination of them such as
    $A_{\rm y}^{\supseteq} \cap A_{\rm n}^{\approx}$.
The restriction to~$A_{\rm y}^{\supseteq}$ could be called the optimistic strategy,
  because it hopes the answer will be `yes';
the restriction to~$A_{\rm n}^{\approx}$ could be called the pessimistic strategy,
  because it hopes the answer will be `no'.
The optimistic strategy has been used in previous work~\cite{zhang:pldi14}.
The pessimistic strategy is used in our implementation.
We found that it leads to smaller runtime (\autoref{section:experiments-runtime}).
It would be interesting to explore combinations of the two strategies,
  as future work.

In the optimistic strategy, one needs to check whether $A_{\rm y}^{\supseteq}=\emptyset$.
In this case, it must be that $A_{\rm y}=\emptyset$ and thus the answer is `no'.
In other words,
  the main loop of the refinement algorithm needs to be slightly modified
  to ensure correctness.
In the pessimistic strategy, it is never the case that $A_{\rm n}^{\approx}=\emptyset$,
  and so the main loop of the refinement algorithm is correct as given
  in \autoref{figure:refinement-algorithm}.
The pessimistic restriction $A_{\rm n}^{\approx}$~is nonempty
  because it always contains~$\top$,
  by choosing $H=G^a$ (see \autoref{lemma:feasible-subset}).

The set $A_{\rm n}^{\approx}$ is defined in terms of an unknown predictive provenance~$H$.
Thus, we work in fact with the random variable
\begin{align*}
  {\bf A}_{\rm n}^{\approx} &\defeq
    \{\,a'\mid\text{$a'>a$ and
      $q\in\reach{\randomH}{(T(a,a'))}$}\,\}
\end{align*}
defined in a probabilistic model with respect to~$G^a$, instead of~$G^{\bot}$.
We wish to choose an abstraction~$a'$ that is likely in~${\bf A}_{\rm n}^{\approx}$.
In other words, we want to maximise $\Pr(a' \in {\bf A}_{\rm n}^{\approx})$.
There is no simple expression to compute this probability.
For optimisation, we will use the following lower bound.

\begin{restatable}{lemma}{noproblemma}\label{lemma:noprob-lowerbound}
Let ${\bf A}_{\rm n}^{\approx}$ be defined as above,
  with respect to an analysis~${\cal A}$, an abstraction~$a$, and a query~$q$.
Let $a'$ be some abstraction such that $a'>a$.
Let $H$ be some subgraph of~$G^a$ such that $q \in \reach{H}{(T(a,a'))}$.
Then
\begin{align*}
  \Pr(a' \in {\bf A}_{\rm n}^{\approx}) \;\ge\; \prod_{e \in H} \E {\bf S}_e
\end{align*}
where ${\bf S}_e$~is the selection variable of arc~$e$.
\end{restatable}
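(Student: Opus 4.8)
The plan is to bound the target probability from below by the probability of a single, concrete event, namely that every arc of the given witness subgraph~$H$ survives in the random predictive provenance. The proof should need nothing beyond monotonicity of reachability and independence of the selection variables, so I expect it to be short.

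First I would unfold the definition. Since $a'>a$ is fixed by hypothesis, the event $a'\in{\bf A}_{\rm n}^{\approx}$ is exactly the reachability event $q\in\reach{\randomH}{(T(a,a'))}$, where $\randomH$ is the random predictive provenance built on top of~$G^a$ and ${\bf S}_e=[e\in\randomH]$ for each arc $e\in G^a$. Because $H\subseteq G^a$, every arc of~$H$ has a selection variable in this model.

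Next I would introduce the auxiliary event $\bigcap_{e\in H}({\bf S}_e=1)$, which says precisely that $H\subseteq\randomH$ as sets of arcs. The key step is to show that this auxiliary event is contained in the target event. Indeed, whenever $H\subseteq\randomH$, \autoref{prop:monotone-reachability}(b) gives $\reach{H}{(T(a,a'))}\subseteq\reach{\randomH}{(T(a,a'))}$; since the hypothesis guarantees $q\in\reach{H}{(T(a,a'))}$, we obtain $q\in\reach{\randomH}{(T(a,a'))}$, that is, $a'\in{\bf A}_{\rm n}^{\approx}$. Finally I would compute the probability of the auxiliary event: the selection variables are independent, and $\E{\bf S}_e=\Pr({\bf S}_e=1)$ for a boolean random variable, so the probability that all arcs of~$H$ are selected factorises as $\prod_{e\in H}\E{\bf S}_e$. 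Monotonicity of probability, applied to the inclusion just established, then yields $\Pr(a'\in{\bf A}_{\rm n}^{\approx})\ge\prod_{e\in H}\E{\bf S}_e$, as claimed.

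There is no real obstacle here; the argument simply couples reachability-monotonicity with independence. The only point that needs care is bookkeeping: one must observe that ``$H$ is a subgraph of the sample~$\randomH$'' is literally the conjunction $\bigwedge_{e\in H}({\bf S}_e=1)$, so that independence applies directly and no inclusion--exclusion over the many \emph{other} subgraphs of~$G^a$ that also render~$q$ reachable is required. That abundance of further favourable subgraphs is exactly why the statement is an inequality rather than an equality.
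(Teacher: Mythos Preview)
Your proposal is correct and follows essentially the same approach as the paper: the paper writes $\Pr(a'\in{\bf A}_{\rm n}^{\approx})$ as a sum over all $H'\subseteq G^a$, restricts to those $H'$ with $H\subseteq H'$ (your auxiliary event), drops the indicator using reachability monotonicity, and then collapses the sum to $\prod_{e\in H}\E{\bf S}_e$ by independence. Your event-inclusion phrasing and the paper's sum-restriction phrasing are the same argument in different clothing.
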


Before describing the search procedure (\autoref{section:refinement-maxsat}),
  we must see how to balance
    maximising the probability of termination with minimising the running cost.

% >>>
\subsection{Balancing Probabilities and Costs}\label{section:refinement-balance} % <<<

We are looking for an abstraction that is likely to answer the query but,
  at the same time, is not too expensive.
Most of the time, these two desiderata point in opposite directions:
  expensive abstractions are more likely to provide an answer.
This raises the question of how to balance the two desiderata.
We model the problem as follows.

\begin{definition}[Action Scheduling Problem]\label{problem:scheduling}
Suppose that we have a list of $m \geq 1$ actions, which can succeed or fail.
The success probabilities
of these actions are $p_1,\ldots,p_m \in (0,1]$,
and the costs for executing these actions are
$c_1,\ldots,c_m > 0$. Find a permutation $\sigma$ on $\{1,\ldots,m\}$ that minimises
the cost $C(\sigma)$:
\begin{align*}
  C(\sigma) = \sum_{k=1}^{m} q_k(\sigma) c_{\sigma(k)},
\qquad
  q_k(\sigma) = \prod_{j=1}^{k-1} \bigl(1-p_{\sigma(j)}\bigr).
\end{align*}
Intuitively, $C(\sigma)$ represents the average
cost of running actions according to $\sigma$ until we hit success.
\end{definition}
In the setting of our algorithm,
  the $m$~actions correspond to all the possible next abstractions $a'_1,\ldots,a'_m$.
The $p_i$ is $\Pr(a'_i \in {\bf A}_{\rm n}^{\approx})$,
  and $c_i$~is the cost of running the analysis under abstraction~$a'_i$.
Hence, a solution to this action scheduling problem tells us how we should
combine probability and cost, and select the next abstraction~$a'$.

%We prove that under some independence assumption,
%we can solve the action scheduling problem:
\begin{restatable}{lemma}{greedylemma}\label{lemma:greedy}
Consider an instance of the action scheduling problem (\autoref{problem:scheduling}).
Assume the success probabilities of the actions are independent. A
permutation $\sigma$ has minimum cost~$C(\sigma)$ if and only if
  $p_{\sigma(1)}/c_{\sigma(1)} \ge \cdots \ge p_{\sigma(m)}/c_{\sigma(m)}$.
%for all $1 \leq i,j \leq m$,
%\begin{align}
%\label{eqn:greedy1}
%i \le j \quad\Implies\quad
%  p_{\sigma(i)} / c_{\sigma(i)} \ge p_{\sigma(j)} 
%  \frac{c_{\sigma(i)}}{p_{\sigma(i)}} \le \frac{c_{\sigma(j)}}{p_{\sigma(j)}}.
%\end{align}
\end{restatable}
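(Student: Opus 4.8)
The plan is to prove this classic scheduling result via an exchange argument, showing that the ratio $p_i/c_i$ is the right sorting key. First I would establish the key lemma: swapping two adjacent actions in a permutation decreases (or does not increase) the cost exactly when their ratios are already in the claimed order. This reduces the global optimality statement to a purely local comparison, which is the standard technique for permutation-minimisation problems of this exchange type.

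Concretely, I would fix a permutation $\sigma$ and consider the permutation $\sigma'$ obtained by swapping the actions in positions $k$ and $k+1$. The crucial observation is that the factors $q_j(\sigma)$ for $j \le k-1$ and $j \ge k+2$ are unaffected by the swap, because $q_j$ depends only on the \emph{set} of actions in the first $j-1$ positions when $j-1 < k$ or $j-1 \ge k+1$, and the swap only permutes positions $k$ and $k+1$ among themselves. Hence the cost difference $C(\sigma) - C(\sigma')$ collapses to the contribution of just those two terms. Writing $P = q_k(\sigma) = \prod_{j<k}(1-p_{\sigma(j)})$ for the common prefix probability, and letting the two swapped actions have parameters $(p,c)$ and $(p',c')$, I expect the difference to simplify to
\begin{align*}
  C(\sigma) - C(\sigma') = P\bigl( p'\,c - p\,c' \bigr),
\end{align*}
after cancelling the shared $P$ and the shared tail factor $(1-p)(1-p')$ that multiplies everything beyond position $k+1$. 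Since $P > 0$ (as each $p_i \le 1$, and strict positivity of the product needs $p_i < 1$, which is why the hypothesis $p_i \in (0,1]$ matters only for the prefix), the sign of the swap gain is governed entirely by $p'c - pc'$, i.e. by whether $p/c \ge p'/c'$.

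From this local exchange inequality the theorem follows in the usual way. For the ``if'' direction, any permutation whose ratios are sorted in nonincreasing order admits no beneficial adjacent swap, and since any permutation can be transformed into any other by adjacent transpositions (bubble sort), a sorted permutation is a global minimiser. For the ``only if'' (uniqueness up to ties), if some adjacent pair is out of order with a \emph{strict} ratio inequality, the swap strictly decreases the cost, so the permutation is not optimal; the argument shows that strict ratio ordering forces the unique sorted arrangement, while equal ratios give interchangeable optima. I would state the conclusion as: $\sigma$ is optimal iff $p_{\sigma(1)}/c_{\sigma(1)} \ge \cdots \ge p_{\sigma(m)}/c_{\sigma(m)}$.

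The main obstacle I anticipate is handling the prefix-positivity and the strictness carefully. The clean factorisation $C(\sigma)-C(\sigma') = P(p'c - pc')$ relies on $P \ne 0$; if any earlier $p_{\sigma(j)} = 1$ then $P = 0$ and the swap is cost-neutral, which would threaten the strict part of the ``only if'' claim. This is precisely where the hypothesis $p_i \in (0,1]$ must be read correctly: a probability of exactly $1$ means an action that always succeeds, and once such an action is scheduled nothing after it ever runs, so the tail ordering becomes irrelevant. I would therefore either note that the stated characterisation should be read up to the equivalence induced by such degenerate prefixes, or observe that placing any action with $p_i = 1$ as early as possible is forced by the ratio criterion anyway (its ratio $1/c_i$ tends to dominate), so the sorted order remains optimal and the edge case does not contradict the statement. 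Apart from this bookkeeping, the computation and the bubble-sort reduction are routine.
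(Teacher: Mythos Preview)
Your approach is essentially identical to the paper's: both use the adjacent-transposition exchange argument, compute $C(\sigma)-C(\sigma')=q_k(\sigma)\bigl(p_{\sigma(k)}c_{\sigma(k+1)}-p_{\sigma(k+1)}c_{\sigma(k)}\bigr)$, and conclude via bubble-sort. Your treatment of the $p_i=1$ edge case is in fact more careful than the paper's, which simply asserts $q_k(\sigma)\ne 0$ without justification; as you note, when some earlier action has $p=1$ the tail ordering becomes cost-irrelevant and the ``only if'' direction should be read modulo that degeneracy.
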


%For the proof, see
%  \longshort{\autoref{section:proofs-refinement}}{the submitted supplement}.

\begin{corollary}\label{corollary:greedy-choice}
Under the conditions of \autoref{lemma:greedy},
  if the cost of permutation~$\sigma$ is minimum, then $\sigma(1) \in\argmax_i p_i/c_i$.
\end{corollary}

%>>>
\subsection{\texorpdfstring{$\maxsat$}{MaxSAT} encoding}\label{section:refinement-maxsat} % <<<
% floats <<<
\begin{table}\centering\footnotesize
\begin{tabular}{@{}rccc@{}}
\toprule
Cases & {\tt all-one} & {\tt fine} & {\tt coarse} \\
\midrule
$95.0\%$ & $0$ & $(-0.22,-0.20)$ & $(-0.73,-0.72)$ \\
$3.8\%$ & $-\infty$ & $(-15,-14)$ & $(-33,-32)$ \\
$1.2\%$ & $-\infty$ & $-\infty$ & $(-12,-11)$
\end{tabular}
\caption{
  Bounds on the average log-likelihood, in base~$e$.
% (Bigger is better.)
% The {\tt all-one} model sets all hyperparameters to~$1$.
% The {\tt coarse} model uses one value for all rules.
% The {\tt fine} model uses one value per rule.
% The training set and the evaluation set each has $260$~samples.
}
\label{table:average-likelihood}
\end{table}

\begin{table}\centering\footnotesize
\begin{tabular}{@{}llrrr@{}}
\toprule
\multicolumn{2}{c}{Configuration} & \multicolumn{2}{c}{Solved queries} \\
\cmidrule(r){1-2} \cmidrule(lr){3-4}
Strategy & Optimiser & Ruled out & Impossible & Limit \\
\midrule
optimistic & exact & $6$ & $48$ & $365$ \\
optimistic & approximating & $6$ & $0$ & $413$ \\
pessimistic & exact & $20$ & $82$ & $317$ \\
pessimistic & approximating & $20$ & $82$ & $317$ \\
probabilistic & exact & $20$ & $70$ & $329$ \\
probabilistic & approximating & $16$ & $81$ & $322$ \\
\end{tabular}
\caption{
  Outcomes.
  All queries are assertions that seem to be violated when the cheapest abstraction is used.
  A \df{ruled out} query is an assertion that is shown not to be violated.
  An \df{impossible} query is an assertion that seems violated
    even if the most precise abstraction is used.
  The exact optimiser is MiFuMax~\cite{mifumax}.
  The approximating optimiser is based on MCSls~\cite{mcsls}.
}
\label{table:outcomes}
\end{table}
\begin{figure}
\hbox to \linewidth{\hss\includegraphics[width=\linewidth]{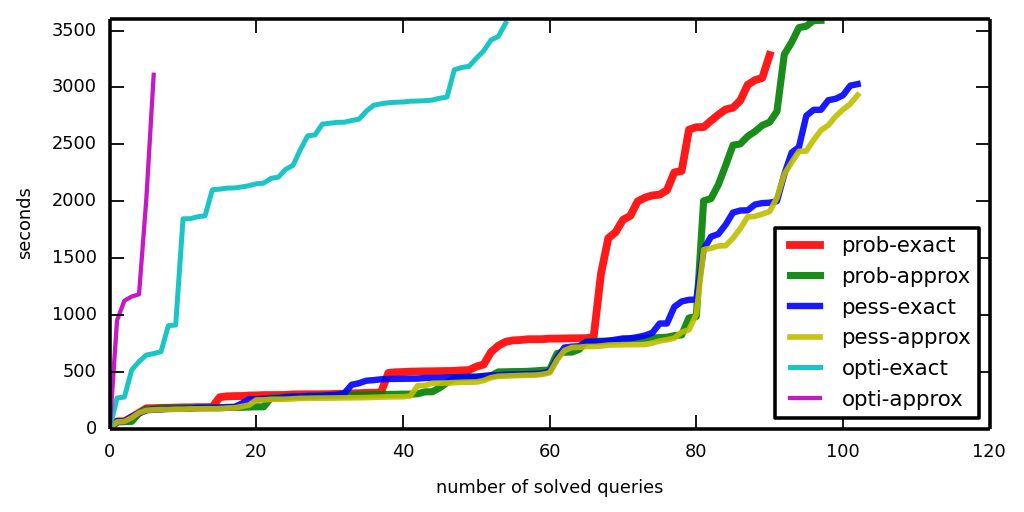}\hss}
\caption{
  Runtime comparison.
}
\label{figure:cactus-all}
\end{figure}
% >>>

We saw a refinement algorithm (\autoref{section:refinement-algorithm})
  whose key step chooses an abstraction to try next.
Then we saw how to estimate whether an abstraction $a'$ is a good choice
  (\autoref{section:refinement-termination} and \autoref{section:refinement-balance}):
  it should have a high ratio between success probability and runtime cost.
But, since the number of abstractions is exponential in the number of parameters,
  it is infeasible to enumerate all in the search for the best one.
Instead of performing a naive exhaustive search,
  we encode the search problem as a $\maxsat$ problem.

Let us summarise the search problem.
Given are a query~$q$, an abstraction~$a$ and its local provenance~$G^a$.
We want to find an abstraction $a'>a$
  that maximises the ratio $\Pr(a' \in {\bf A}_{\rm n}^{\approx})/c(a')$,
  where $c(a')$~is an estimate of the runtime of the analysis under abstraction~$a'$
  (see \autoref{corollary:greedy-choice}).
We will approximate $\Pr(a'\in {\bf A}_{\rm n}^{\approx})$ by a lower bound
  (see \autoref{lemma:noprob-lowerbound}).
Based on empirical observations,
  we estimate the runtime of the analysis to increase exponentially
    with the number $\sum_{x\in P}a(x)$ of precise parameters.
In short, we want to evaluate the following expression:
\begin{align*}
  \argmax_{\substack{a'\\ a'>a}}
  \Biggl(
    \biggl(
      \max_{\substack{H\\ H\subseteq G^a\\q\in\reach{H}{(T(a,a'))}}} \prod_{e\in H} \E {\bf S}_e
    \biggr)
    \bigg/
    \exp\Bigl(\alpha\sum_{x\in P} a'(x)\Bigr)
  \Biggr)
\end{align*}
Or, after absorbing $\max$ in $\argmax$, taking the log of the resulting objective value,
and simplifying the outcome:
\begin{align}\label{eq:maxsat-argmax}
  \argmax_{\substack{
    a', H\\
    a'>a,\ H\subseteq G^a\\
    q\in\reach{H}{(T(a,a'))}
  }}
  \biggl(
    \sum_{e\in H} \log (\E {\bf S}_e) - \sum_{\substack{x\in P\\ a'(x)=1}} \alpha
  \biggr)
\end{align}
We shall evaluate this expression by using a $\maxsat$ solver.
The idea is to encode the range of $\argmax$ as hard constraints,
  and the objective value as soft constraints.

There exist several distinct versions of the $\maxsat$ problem.
We define here a version that is most convenient to our development.
We consider arbitrary boolean formulas, not necessarily in some normal form.
We view assignments as sets of variables; in particular,
\begin{align*}
  M &\models x  &&\text{iff} && x\in M \\
  M &\models \bar{x} &&\text{iff} && x\notin M \\
  M &\models \phi_1 \land \phi_2 &&\text{iff}
    &&\text{$M\models \phi_1$ and $M\models\phi_2$}
\end{align*}
The evaluation rules for other boolean connectives are as expected.
If $M\models \phi$ holds, we say that the assignment~$M$ is a \df{model} of formula~$\phi$.
\begin{problem}[$\maxsat$]\label{problem:maxsat}
Given are a boolean formula~$\Phi$
  and a weight $w(x)$ for each variable~$x$ that occurs in~$\Phi$.
Find a model~$M$ of~$\Phi$ that maximises $\sum_{x\in M} w(x)$.
\end{problem}
We refer to $\Phi$ as the \df{hard constraint}.
\begin{remark}
Technically, \autoref{problem:maxsat} is none of the standard variations of $\maxsat$.
It is easy to see, although we do not prove it here,
  that \autoref{problem:maxsat} is polynomial-time equivalent
  to partial weighted $\maxsat$~\cite{morgado:constraints13,maxsat-survey-1}:
  the reduction in one direction uses the Tseytin transformation,
  while the reduction in the other direction introduces relaxation variables.
\end{remark}

The idea of the encoding is to define the hard constraint~$\Phi$ such that
(i)~the models of~$\Phi$ are in one-to-one correspondence
  with the possible choices of $H$~and~$T$ such that
    $H \subseteq G^a$ and $P_0(a) \subseteq T \subseteq P_0(a)\cup P_1(a)$,
  and moreover
(ii)~each model also encodes the reachable set $\reach{H}{T}$.
To construct a hard constraint~$\Phi$ with these properties,
  we use the same technique as we used for computing the likelihood
  (\autoref{section:learning-likelihood}%
  \longshort{ and \autoref{section:lemma-learning-proof}}{}).
As was the case for likelihood, cycles lead to an exponential explosion.
We again deal with cycles by retaining only forward arcs:
\begin{align*}
  G^a_{\to} \;&\defeq\;
    \{\,e\in G^a\mid\text{$e$ is a forward arc w.r.t. $P_0(a)\cup P_1(a)$}\,\}
\end{align*}
The hard constraint is a formula whose variables correspond to vertices and arcs of~$G^a_{\to}$.
More precisely, its set of variables is $X_V(G^a_{\to})\cup X_E(G^a_{\to})$, where
$$
X_V(G) \defeq \{x_u\mid\text{$u$ vertex of $G$}\}
\quad
X_E(G) \defeq \{x_e\mid\text{$e$ arc of $G$}\}
$$
%\begin{align*}
%  X_V(G) \;&\defeq\; \{\,x_u\mid\text{$u$ vertex of $G$}\,\} \\
%  X_E(G) \;&\defeq\; \{\,x_e\mid\text{$e$ arc of $G$}\,\}
%\end{align*}
We construct the hard constraint $\Phi$ as follows:
\begin{align}
\begin{aligned}
  \Phi \;&\defeq\;
  \mathop{\exists}_{e\in G^a_{\to}}\!\! y_e \;
    \Bigl(\Phi_1 \land \Phi_2 \land \Phi_3\Bigr)
\\
  \Phi_1 \;&\defeq\;
    \hskip-1em
    \bigwedge_{\substack{e=(h,B)\in G^a_{\to}}}
    \hskip-.8em
    \biggl(
        \Bigl(y_e \iff \Bigl(x_e \wedge \bigwedge_{b\in B} x_b \Bigr)\Bigr)
        \wedge
        (y_e \implies x_h)
    \biggr) \\
  \Phi_2 \;&\defeq\;
    \hskip-1em
    \bigwedge_{\substack{h\\\text{vertex of $G^a_{\to}$}\\ h \not\in P_0(a) \cup P_1(a)}}
    \hskip-1em
    \biggl(x_h \implies \Bigl( \bigvee_{e=(h,B)\in G^a_{\to}} y_e \Bigr)
    \biggr) \\
  \Phi_3 \;&\defeq\;
    x_q
    \land \Bigl( \bigwedge_{u \in P_0(a)} x_u \Bigr)
    \land \Bigl( \bigvee_{u \in P_1(a)} x_u \Bigr)
\end{aligned}
  \label{eq:maxsat-phi}
\end{align}
The notation $\mathop{\exists}_{e\in G^a_{\to}} y_e$ stands for several existential quantifiers,
  one for each variable in the set $\{\,y_e\mid e\in G^a_{\to}\,\}$.
  %$\{y_e\}_{e\in G^a_{\to}}$.
Intuitively,
  the constraints $\Phi_1$~and~$\Phi_2$ ensure that the models correspond to reachable sets,
  and the constraint $\Phi_3$ ensures that the query is reachable and that $a'>a$.

The formula~$\Phi$ defined above has several desirable properties:
  its size is linear in the size of the local provenance~$G^a$,
  it is satisfiable, and
  each of its models represents a pair $(a',H)$
    that satisfies the range conditions of~\eqref{eq:maxsat-argmax}.
The satisfiability of~$\Phi$
  is important for the correctness of the refinement algorithm,
  and it follows from how we remove cycles, by retaining forward arcs.
To state these properties more precisely,
  let us denote the range of~\eqref{eq:maxsat-argmax} by~$F(G^a)$ where
\begin{align}
\label{eq:maxsat-F}
  F(G) \defeq \{\,(a',H)\mid
    \text{$a'>a$ and $H\subseteq G$ and $q\in\reach{H}{(T(a,a'))}$}\,\}
\end{align}

\begin{restatable}{lemma}{lemmafeasible}\label{lemma:feasible-subset}
Let $a$ be an abstraction, and let $q$ be a query, for some analysis~${\cal A}$.
Let $F(G)$ and $G^a_{\to}$ be defined as above.
If $a < \top$ and $q\in{\cal A}(a)$, then
$
 (\top,G^a_{\to}) \in F(G^a_{\to}) \subseteq F(G^a)
$.
\end{restatable}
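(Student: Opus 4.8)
The plan is to establish the two parts of the conclusion separately: the inclusion $F(G^a_{\to}) \subseteq F(G^a)$, and the membership $(\top, G^a_{\to}) \in F(G^a_{\to})$. The inclusion is the easy half. Since every forward arc of $G^a$ is in particular an arc of $G^a$, we have $G^a_{\to} \subseteq G^a$; and in the definition~\eqref{eq:maxsat-F} the ambient graph appears only through the condition $H \subseteq G$. Hence any pair $(a',H)$ witnessing $H \subseteq G^a_{\to}$ also witnesses $H \subseteq G^a$, while the remaining conditions $a' > a$ and $q \in \reach{H}{(T(a,a'))}$ are untouched. So $F(G^a_{\to}) \subseteq F(G^a)$ follows immediately.

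For the membership I would check the three clauses of~\eqref{eq:maxsat-F} for the pair $(\top, G^a_{\to})$. The clause $\top > a$ is exactly the hypothesis $a < \top$, and $G^a_{\to} \subseteq G^a_{\to}$ is trivial, so the content lies in showing $q \in \reach{G^a_{\to}}{(T(a,\top))}$. First I would simplify the source set. By injectivity of $p_1$ (well formedness~(iv)) we have $P_1(\top)\setminus P_1(a) = \{\,p_1(x)\mid a(x)=0\,\}$, and by $\pi\circ p_1 = p_0$ (well formedness~(v)) its projection is $\{\,p_0(x)\mid a(x)=0\,\} = P_0(a)$. Therefore $T(a,\top) = P_1(a)\cup P_0(a)$, and writing $T \defeq P_0(a)\cup P_1(a)$ the goal becomes $q \in \reach{G^a_{\to}}{T}$. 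From the hypothesis $q \in {\cal A}(a) = \reach{G}{T}$ together with the slice identity of \autoref{prop:hypergraph-slice} (taking $T_1=T_2=T$, so that $G^a = G[\reach{G}{T}]$), I obtain $q \in \reach{G^a}{T}$.

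The crux is then to pass from $G^a$ to its forward-arc subgraph $G^a_{\to}$ without losing reachability, i.e.\ to prove $\reach{G^a_{\to}}{T} = \reach{G^a}{T}$. The inclusion $\subseteq$ is \autoref{prop:monotone-reachability}(b). For $\supseteq$ I would induct on the distance $d^{(G^a)}_T(h)$: the base case $d^{(G^a)}_T(h)=0$ means $h\in T$, which is reachable by construction since $T \subseteq \reach{G^a_{\to}}{T}$; for $h\in\reach{G^a}{T}\setminus T$, the arc $(h,B)$ attaining the minimum in the recurrence for $d^{(G^a)}_T(h)$ satisfies $d^{(G^a)}_T(b) < d^{(G^a)}_T(h)$ for every $b\in B$, so it is a forward arc and hence lies in $G^a_{\to}$, while each $b\in B$ is already in $\reach{G^a_{\to}}{T}$ by the induction hypothesis, whence so is $h$. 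This forward-arc preservation property — the same one highlighted in the discussion after \autoref{theorem:learning} — is the main obstacle: the delicate points are verifying that the minimizing arc of the distance recurrence is genuinely forward and that the induction is well founded, both of which rest on $d_T$ being the stated fixed-point and on finiteness of $d^{(G^a)}_T(h)$ for reachable~$h$. Combining the three facts yields $q \in \reach{G^a_{\to}}{(T(a,\top))}$ and completes the membership.
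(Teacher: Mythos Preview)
Your proposal is correct and follows essentially the same approach as the paper: both establish the inclusion from $G^a_{\to}\subseteq G^a$, then verify the three clauses of~\eqref{eq:maxsat-F} for $(\top,G^a_{\to})$, with the substantive clause being $q\in\reach{G^a_{\to}}{(T(a,\top))}$ via the fact that removing nonforward arcs preserves reachability from the source set. You supply more detail than the paper does---in particular the explicit computation $T(a,\top)=P_0(a)\cup P_1(a)$ from well-formedness conditions~(iv) and~(v), the use of \autoref{prop:hypergraph-slice} to pass from $\reach{G}{T}$ to $\reach{G^a}{T}$, and the distance induction showing $\reach{G^a_{\to}}{T}=\reach{G^a}{T}$---all of which the paper merely asserts.
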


The conditions $a<\top$ and $q \in {\cal A}(a)$ are guaranteed to hold when
  $\textsc{ChooseNextAbstraction}$ is called on line~6 of \autoref{figure:refinement-algorithm}.

\begin{restatable}{lemma}{bijectionlemma}\label{lemma:maxsat-bijection}
Let $a$ be an abstraction, and let $q$ be a query, for some analysis~${\cal A}$.
Let the hard constraint $\Phi$ be defined as in \eqref{eq:maxsat-phi}:
  let the feasible set $F(G^a_{\to})$ be defined as in \eqref{eq:maxsat-F}.
There is a bijection
  between the models~$M$ of~$\Phi$ and the elements~$(a',H)$ of~$F(G^a_{\to})$.
According to this bijection,
\begin{align*}
  M \cap X_E (G^a_{\to}) &= X_E(H) \\
  M \cap X_V (G^a_{\to}) &= X_V\bigl(\reach{H}{(T(a,a'))}\bigr)
\end{align*}
\end{restatable}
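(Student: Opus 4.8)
The plan is to exhibit the claimed bijection explicitly in both directions and then verify the two displayed identities, with the real work concentrated in showing that $\Phi_1\wedge\Phi_2$ captures reachability exactly on the acyclic graph $G^a_{\to}$.

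First I would define the forward map. Given a model $M$ of $\Phi$, I read off the subgraph $H\defeq\{\,e\in G^a_{\to}\mid x_e\in M\,\}$ directly from the arc variables, and I recover the abstraction $a'$ from the parameter-encoding vertices: a source candidate in $P_0(a)\cup P_1(a)$ contributes to the source set exactly when its vertex variable is true, and these choices determine which parameters are switched, hence $a'$ together with the source set $T(a,a')$. The reverse map sends a feasible pair $(a',H)\in F(G^a_{\to})$ to the assignment $M\defeq X_E(H)\cup X_V\bigl(\reach{H}{(T(a,a'))}\bigr)$, with each auxiliary variable $y_e$ fixed by the biconditional in $\Phi_1$. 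Injectivity of the forward map is immediate, since $H$ and $a'$ are read off $M$ verbatim; the two maps will then be mutually inverse once the identity $M\cap X_V(G^a_{\to})=X_V\bigl(\reach{H}{(T(a,a'))}\bigr)$ is established.

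The crux is this identity. Write $R\defeq\{\,u\mid x_u\in M\,\}$ and $T\defeq R\cap\bigl(P_0(a)\cup P_1(a)\bigr)$. For the inclusion $\reach{H}{(T)}\subseteq R$ I would induct on the reachability fixed-point: the sources lie in $R$ by the definition of $T$ together with $\Phi_3$, and whenever an arc $e=(h,B)\in H$ has $B\subseteq R$, the left-to-right direction of $\Phi_1$ forces $y_e$, and then $y_e\Implies x_h$ puts $h$ in $R$. For the reverse inclusion $R\subseteq\reach{H}{(T)}$ I would induct on the distance $d_T$: a vertex $u\in R$ with $d_T(u)=0$ is a source; otherwise $u\notin P_0(a)\cup P_1(a)$, so $\Phi_2$ supplies a firing incoming arc $e=(u,B)$ with $y_e$ true, whence the right-to-left direction of $\Phi_1$ yields $x_e$ and $x_b$ for every $b\in B$; because $e$ is a forward arc with respect to the source set, each $b$ has strictly smaller distance, so the induction hypothesis gives $B\subseteq\reach{H}{(T)}$ and therefore $u\in\reach{H}{(T)}$.

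Surjectivity then reduces to checking that the assignment produced by the reverse map is a genuine model. The clause $\Phi_1$ holds by construction of the $y_e$, using that a firing arc of $H$ propagates reachability to its head; $\Phi_2$ holds because every non-source vertex of $\reach{H}{(T(a,a'))}$ is, by the definition of reachability, derived by some arc of $H\subseteq G^a_{\to}$, which lies in the scope of the conjunction; and $\Phi_3$ holds because $q\in\reach{H}{(T(a,a'))}$ by feasibility, while the conjunction and disjunction over the parameter vertices record that the active sources are exactly $T(a,a')$ and enforce $a'>a$. I expect the inclusion $R\subseteq\reach{H}{(T)}$ to be the main obstacle, since it is precisely the step that must rule out spurious cyclic self-justifications; this is why the encoding is built over $G^a_{\to}$ rather than $G^a$, as the forward-arc condition makes $d_T$ strictly decrease along every fired arc and so licenses the distance induction. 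The same forward-arc reachability-preservation property exploited in \autoref{theorem:learning} guarantees that restricting to $G^a_{\to}$ discards no feasible witness, keeping the two feasible sets aligned.
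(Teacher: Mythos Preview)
Your argument is correct in substance, but the route differs from the paper's. The paper does not build the bijection by hand. Instead, it manufactures an auxiliary hypergraph $G'$ from $G^a_{\to}$ by attaching to each arc $e=(h,B)$ a fresh source vertex~$s_e$ (so that the arc becomes $(h,\{s_e\}\cup B)$), observes that $G'$ inherits acyclicity from $G^a_{\to}$, and then shows that $\mathop{\exists}_e y_e\,(\Phi_1\land\Phi_2)$ is literally the formula $\phi^{S'\cup P_0(a)\cup P_1(a)}(G')$ from~\eqref{eq:graph-formula-relative}. At that point \autoref{corollary:loop-formulas-flexible} delivers the model--reachability correspondence for free, and $\Phi_3$ is handled separately to enforce $a'>a$ and $q\in\reach{H}{(T(a,a'))}$. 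Your direct approach is more elementary and entirely self-contained; the paper's approach buys reuse of the loop-formula machinery already developed for \autoref{theorem:learning}, and keeps the two proofs visibly parallel.

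One small wrinkle in your inductive step for $R\subseteq\reach{H}{(T)}$: the quantity to induct on should be $d_{P_0(a)\cup P_1(a)}$ (computed in $G^a$), not $d_T$. Forward arcs in $G^a_{\to}$ are defined with respect to $P_0(a)\cup P_1(a)$, so that is the distance they are guaranteed to decrease; for the possibly smaller set $T$, an arc of $G^a_{\to}$ need not be forward, and indeed $d_T(u)$ could a~priori be infinite for some $u\in R$. With $d_{P_0(a)\cup P_1(a)}$ (which is finite on every vertex of $G^a$ since $G^a=G[{\cal A}(a)]$), your induction goes through verbatim: the base case $u\in P_0(a)\cup P_1(a)$ together with $u\in R$ gives $u\in T$, and the step uses exactly the forward-arc property. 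Equivalently, you can induct along any topological order of the acyclic $G^a_{\to}$.
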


The proof of this lemma,
  given in \longshort{\autoref{section:proofs-refinement}}{the submitted supplement},
  relies on techniques very similar to those used to prove \autoref{theorem:learning}.

At this point, we know how to define the hard constraint~$\Phi$,
  so that its models form a subrange of the range of~\eqref{eq:maxsat-argmax}.
It remains to encode the value
  $\sum_{e\in H} \log (\E {\bf S}_e) - \alpha\sum_{x\in P} a'(x)$
  by assigning weights to variables.
This is very easy.
Each arc variable~$x_e$ is assigned the weight $w(x_e)=\log(\E {\bf S}_e)$.
Each vertex variable~$x_u$ corresponding to~$u\in P_0(a)\cup P_1(a)$
  is assigned the weight $w(x_u)=-\alpha$.
All other variables are assigned the weight~$0$.

% >>>
% >>>
% vim:spell spelllang=en_gb:fmr=<<<,>>>:

\section{Empirical Evaluation}\label{section:experiments}

% plan
%   - intro
%   - experimental design
%   - numeric optimisation of likelihood
%   - comparison of models
%   - total time

In the empirical evaluation%~
  \footnote{\url{http://rgrig.appspot.com/static/papers/popl2016experiments.html}}
  we aim to answer three questions:
(a)~Which optimisation algorithm should be used for learning
  (\autoref{section:experiments-optimisation})?
(b)~How well does the probabilistic model predict what the analysis does
  (\autoref{section:experiments-prediction})?
(c)~What is the effect of the new refinement algorithm on the total runtime
  (\autoref{section:experiments-runtime})?

\subsection{Experimental Design}\label{section:experiments-design} % <<<

For experiments, our goal was to improve upon the refinement algorithm of
  Zhang~et~al.~\cite{zhang:pldi14}.
Accordingly, we use the same test suite and the same aliasing analysis.
The test suite consists of 8~Java programs,
  which amount to $0.45\,\rm MiB$ of application bytecode plus $1\,\rm MiB$ of library bytecode.

We try three refinement strategies: optimistic, pessimistic, and probabilistic.
The optimistic strategy uses the baseline refinement algorithm.
The pessimistic strategy uses our refinement algorithm with all hyperparameters set to~$1$.
The probabilistic strategy uses our refinement algorithm with hyperparameters learnt.
We use a time limit of $60$~minutes per query, and a memory limit of $25\,\rm GiB$.

For learning, we observe what the analysis does on a small set of queries and abstractions.
Each observation is essentially an event of the form
  `$\reach{\randomH}{T_1}=R_1$ and $\ldots$ and $\reach{\randomH}{T_n}=R_n$'
  (\autoref{section:learning-experiment}).
From these observations we learn hyperparameters,
  by optimising a lower bound on the likelihood (\autoref{section:learning-likelihood}).
The hyperparameters we use to solve a query
  are learnt only from observations made on the other programs.

% >>>
\subsection{Numeric Optimisation of Likelihood}\label{section:experiments-optimisation} % <<<

First, from the $8$~programs, we chose a random sample of $26$~queries.
Then, for each query, we chose a random sample of~$10$ abstractions
  (\autoref{section:learning-experiment}).
In total, the training set has $260$~samples.

We first tried three numerical optimisers from the SciPy toolkit~\cite{scipy}:
  {\tt tnc}, {\tt slsqp}, and {\tt basinhopping}.
They all fail.
Then we implemented a couple of numeric optimisers ourselves.
We found that the cyclic coordinate ascent method works well on our problem.
In the implementation, we use {\tt basinhopping} and {\tt slsqp} as subroutines,
  for line search.

%We first tried several off the shelf optimisers from the SciPy toolkit~\cite{scipy}:
%  {\tt tnc}, {\tt slsqp}, {\tt basinhopping}.
%Initially, they all failed.
%But, if we weaken the lower bound enough by keeping only one forward arc per vertex
%  (\autoref{section:learning-likelihood}),
%  then {\tt slsqp} obtains good hyperparameters, quickly.
%
%We then implemented two optimisers tailored to our problem: {\tt hill} and {\tt coord}.
%The optimiser {\tt hill} implements gradient ascent with an exponentially decreasing step,
%  and with support for the bounds $[0,1]$.
%The optimiser {\tt coord} implements cyclic coordinate ascent,
%  and uses {\tt basinhopping} for line search.
%The optimiser {\tt basinhopping} uses the Metropolis algorithm,
%  but also improves proposals using a local search algorithm;
%  we used {\tt slsqp} as the local search algorithm.
%The optimiser {\tt slsqp} uses sequential least squares programming.
%
%This time we only weakened the log-likelihood
%  as much as it was necessary for it to fit in memory.
%The optimizer {\tt hill} finds a log-likelihood of $-240$ after half an hour.
%The optimizer {\tt coord} finds a log-likelihood of $-137$ after a few seconds.

%Cyclic coordinate descent works best for our problem.
Intuitively, cyclic coordinate ascent behaves well because
  the likelihood tends to be concave along a coordinate,
  and tends to not be concave along an arbitrary direction.
Concave functions are much easier to optimise than non-concave functions,
  and so the line search algorithm has an easier task when applied along coordinates.

% >>>
\subsection{Predictive Power of the Probabilistic Model}\label{section:experiments-prediction} % <<<

In addition to the $260$~samples used for training,
  we obtain, using the same method, another set of $260$~samples used for evaluation.
Given a model, which is determined by an assignment of values to hyperparameters,
  we can evaluate likelihood bounds for each of the $260$~evaluation samples.
In absolute terms, these numbers are hard to interpret:
  are they good or bad?
To make the numbers more meaningful,
  we consider three models, and we see how good they are relative to each other.

The three models are: {\tt fine}, {\tt coarse}, and {\tt all-one}.
The {\tt fine} model is learnt as described above.
The {\tt coarse} model is also learnt as described above,
  but under the constraint that all hyperparameters have the same value.
The {\tt all-one} model simply assigns value~$1$ to all hyperparameters,
  and thus corresponds to the pessimistic refinement strategy.

\autoref{table:average-likelihood}
  presents the results of the three models on the evaluation set.
For the aliasing analysis we consider,
  it turns out that an abstraction chosen at random does no better than the cheapest abstraction
  in $95\%$~of cases.
The {\tt all-one} model predicts that all abstractions do no better than the cheapest one,
  so it is exactly right in these $95\%$~of cases;
  conversely, it thinks the other $5\%$~of cases cannot happen.
More interestingly,
  the {\tt fine} model thinks that $1.2\%$ samples from the evaluation set cannot happen.
This means that some hyperparameter is~$1$ but should be~$<1$.
We expect that the number of such situations would decrease as the training set grows.
Assuming this is true,
  we can conclude that the {\tt fine} model is better than the {\tt coarse} model.

It is not possible to conclude which of {\tt all-one} and {\tt fine} is better.
One difficulty is that the $95\%$ is a property of the analysis.
It might very well be that for another analysis this percent
  (of cases in which precision helps) is higher or lower.
A lower percentage would favour the {\tt fine} model;
  a high percentage favours the {\tt all-one} model.

% >>>
\subsection{Total Analysis Runtime}\label{section:experiments-runtime}  % <<<

In the $8$~programs there are in total $1450$~queries.
We report results for a random sample of $419$~queries.
The first thing to notice in \autoref{table:outcomes} is that most queries are not solved.
This is in stark contrast with Zhang~et~al.~\cite{zhang:pldi14}
  where all queries are reported as solved.
The difference is explained by several differences between their setup and ours.
(1)~In addition to their PolySite queries, we also include Downcast queries.
The latter are more difficult.
(2)~We used less space and time:
  they used a machine with $128\,\rm GiB$ of memory, whereas we only had $25\,\rm GiB$ available;
  they did not have an explicit time limit, whereas we used $1$~hour as our time limit.
(3)~One of our modifications to the code (unfortunate, with hindsight),
  was that we loaded in memory the results of the Datalog analysis,
  which further increased our memory use.
(4)~They solve multiple queries at once, whereas we solve one at a time.
  By solving one query at a time, we can make a more fine grained comparison.

These differences notwithstanding,
  we stress that the results reported here are for running different algorithms
  under conditions that are as similar as possible.
For example, as much as possible of the implementation is shared.

From the number of solved queries (\autoref{table:outcomes}),
  we see that the refinement strategies, from best to worst, are:
  pessimistic, probabilistic, optimistic.
The pessimistic strategy solves the same set of $102$~queries
  regardless of the optimiser it uses.
The probabilistic strategy solves $101$~queries in total,
  if we take the union over the two optimisers.
There is exactly one query solved by the pessimistic strategy but not by the probabilistic one.
The pessimistic strategy solves this query in four iterations,
  whereas the probabilistic strategy dies in the second iteration.
The exact optimiser times out.
The approximate optimiser increases the precision more than necessary after the first iteration,
  the Datalog solver does cope with the increased precision,
  but an out of memory error happens while Datalog's answer is loaded in memory.

\autoref{figure:cactus-all} compares the six configurations from the point of view of runtime.
We see that both the pessimistic and the probabilistic strategies
  are better than the optimistic strategy.

% >>>
\subsection{Discussion}\label{section:experiments-discussion} % <<<

According to \autoref{table:outcomes} and \autoref{figure:cactus-all},
  setting all hyperparameters to~$1$ works better than using learnt hyperparameters.
Given this, is there any point in learning hyperparameters?
We believe the answer is yes.
Initially we tried only an exact $\maxsat$ solver%
  \footnote{also, at submission time, we had not tried setting all hyperparameters to~$1$}\!.
When the pessimistic strategy succeeds but the probabilistic strategy fails,
  the cause is always that the $\maxsat$ solver times out.
Our encoding in $\maxsat$ is already an approximation, so an approximate answer would do.
We conjectured that replacing the exact solver with an approximate one would improve performance.
We are not aware of an off-the-shelf approximate $\maxsat$ solver,
  so we implemented one.
Comparing {\tt prob-exact} with {\tt prob-approx},
  we see that using an approximate solver does improve the results, but not enough.
However, our approximate solver is so dumb
  that we feel it ought to be possible to do much better.

Another reason to learn hyperparameters is independent of their use for refinement:
  learnt hyperparameters identify interesting parts of an analysis implemented in Datalog
  (\autoref{section:learning-results}).
This is especially useful when one wants to understand an analysis implemented by a third party.

Finally,
  we note that our empirical evaluation of refinement strategies
  shows promise but is not comprehensive.
In future work,
  we intend to try better approximate $\maxsat$ solvers,
  and we intend to evaluate refinement algorithms on more analyses implemented in Datalog.
But, first,
  we need better approximate $\maxsat$ solvers,
  and we need more analyses implemented in Datalog.

% >>>

% vim:spell spelllang=en_gb:fmr=<<<,>>>:

%\overfullrule=10pt

\section{Related and Future Work}\label{sec:related}

The potential of using machine learning techniques or probabilistic
reasoning for addressing challenges in static analysis~\cite{Cousot77,slam:popl02}
has been explored by several researchers in the past ten years.
Three dominant directions so far are: to infer program specifications
automatically using probabilistic models or other inductive learning
techniques~\cite{kremenek:ijcai07,livshits:pldi09,Sankaranarayanan:icse08,beckman:pldi11,Mishne:oopsla12,raychev:pldi14,Raychev-popl15},
to guess candidate program invariants from test data or program traces
using generalisation techniques from machine learning~\cite{sharma:cav12,nori:fse013,loginov:cav05},
and to predict properties of potential or real program errors, such as true positiveness
and cause, probabilistically~\cite{liblit:pldi03,liblit:pldi05,zheng:icml06,yi:ipl07}.
Our work brings a new
dimension to this line of research by suggesting the use of a probabilistic model
for predicting the effectiveness of program abstractions: a probabilistic model can be
designed for predicting how well a parametric static analysis would perform
for a given verification task when it is given a particular abstraction, and
this model can help the analysis to select a good program abstraction for the task
in the context of abstraction refinement. Another important message
of our work is that the derivations computed during each analysis run include a large amount
of useful information, and exploiting this information could lead to more
beneficial interaction between probabilistic reasoning and static analysis.

Machine learning techniques have been used before to speed up abstraction refinement%
  ~\cite{clarke:cav02,gupta2006},
  but in the setting of bounded model checking of hardware.
%The techniques used there (learning decision trees)
%  are different from ours (variational inference).
%Moreover, the machine learning techniques are used with a different goal,
%  to find a small description of a separator.
%Intuitively, in our setting this corresponds to solving the optimisation problem,
%  for which we use MaxSAT solver rather than machine learning techniques.

%A typical bottleneck in combining techniques from probabilistic reasoning
%with techniques from static analysis is that the former are inherently numeric
%while the latter are not. To bridge the gap, one needs to design so called
%features, which essentially translate between the non-numeric world of static analysis
%to the numeric world of probabilities and machine learning. But, designing such features
%is no easy task. Our work shows that it is possible to obtain good results without
%designing any feature at all, provided only that the analysis is implemented in Datalog.

Several probabilistic models for program source code have
been proposed in the past \cite{Hindle-ICSE12,Maddison-ICML14,Allamanis-FSE14,Allamanis-ICML15,raychev:pldi14,Karaivanov-onward14,Raychev-popl15},
and used for extracting natural coding conventions~\cite{Allamanis-FSE14},
helping the correct use of library functions~\cite{raychev:pldi14}, translating programs
between different languages~\cite{Karaivanov-onward14}, and cleaning program source code
and inferring likely properties~\cite{Raychev-popl15}. These models are different from ours
in that they are not designed to predict the behaviours of program analyses
under different program abstractions, the main task of our probabilistic
models.

Our probabilistic models are examples of first-order probabilistic logic
programs studied in the work on statistical relational
learning \cite{Sato-ICLP95,problog-main,srl-book,lfi-problog}.
%Our learning task is an instance of LFI (learning from partial interpretations),
%  for which other algorithms have been proposed~\cite{lfi-problog}.
In our case, models are large, and training data provides only partial information
about the random variable~$\randomH$ used in the models.
To overcome this difficulty, we designed an algorithm tailored to our needs,
  which is based on the idea of variational inference%
  ~\cite{Wainwright-variational08,Jordan-variational99}.
More precisely, we optimised a lower bound on the likelihood.
%Learning hyperparameters in such cases is generally intractable,
%  and we have overcome this intractability
%by analytically deriving the lower bound of probabilities in \autoref{theorem:learning}
%and optimising this lower bound. Using such a proxy during learning is common
%in machine learning, in particular, in the work on variational
%inference~\cite{Wainwright-variational08,Jordan-variational99}.

Our work builds on a large amount of research for automatically finding
good program abstraction, such as CEGAR~\cite{slam:popl02,clarke:cav02,clarke:jacm03,magic,blast:popl04,rybalchenko:vmcai07}, parametric static analysis with parameter search
algorithms~\cite{liang:pldi11,naik:popl12,zhang:pldi13,zhang:pldi14},
and static analysis based on Datalog or Horn solvers~\cite{whaley:aplas05,doop1,doop2,rybal:pldi12,rybal:tacas12}.
The novelty of our work lies in the use of adding a bias in this abstraction search
using a probabilistic model, which predicts the behaviour of the static analysis under
different abstractions.

One future direction would be to find new applications for our probabilistic techniques.
For example, one could try to use our techniques
  in order to improve other, non-probabilistic approaches
  to estimating the impact of abstractions~\cite{smaragdakis:pldi14,oh:pldi14}.
Another future direction would be
  to better characterise the theoretical properties of our refinement algorithm.
For example, if applied in the setting of abstract interpretation,
  how does it interact with the notion of completeness%
  ~\cite{giacobazzi-metaanalysis,giacobazzi-complete}?

%Recently, non-probabilistic approaches for estimating the impacts of
%different program abstractions on a given analysis or verification task have been
%proposed~\cite{smaragdakis:pldi14,oh:pldi14}.
%One interesting future direction is to revisit these approaches from the perspective of probabilistic modelling explained in this paper,
%with the goal of obtaining probabilistic variants of their approaches that
%replace the current hard-coded heuristics for prediction by adaptable ones.

% vim:spell spelllang=en_gb:

\section{Conclusion}\label{sec:conclusions} % <<<

We have presented a new approach to abstraction refinement,
  one that receives guidance from a learnt probabilistic model.
The model is designed to predict how well would the static analysis
  perform for a given verification task under different parameter settings.
The model is fully derived from the specification of the analysis,
  and does not require manually crafted features.
Instead,
  our model's prediction
    is based on all the reasoning steps performed by the analysis in a failed run.
%  which are represented as a hypergraph in the paper.
To make these predictions,
  the model needs to know how much approximation is involved in each Datalog rule
  that implements the static analysis.
We have shown how to quantify the approximation,
  by using a learning algorithm that observes the analysis running on a large codebase.
%The result of the learning algorithm are the so called hyperparameters,
%  which quantify the amount of approximation in each Datalog rule,
%  and which enable our model to make its predictions.
Finally, we have shown how to combine the predictions of the model with a cost measure
  in order to choose an optimal next abstraction to try during refinement.
Our empirical evaluation with an object-sensitive pointer analysis
  shows that our approach is promising.
%  shows the promise of our approach.

%We have presented a new approach for
%abstraction refinement that receives guidance from
%a learnt probabilistic model. Our model is designed to predict how well a parametric
%static analysis would perform for a given
%verification task under different parameter settings.
%It is fully derived from the specification of the analysis, and 
%does not require manually-crafted features. Instead,
%our model's prediction is based on all the reasoning steps performed by 
%the analysis in a failed run, which
%are represented as a hypergraph in the paper. We have described how hyperparameters 
%of the model can be learnt from prior runs of the same analysis, and have presented 
%a method for combining the prediction of the model
%and the cost measure and choosing an optimal  next abstraction to try during
%refinement. Our empirical evaluation with an object-sensitive pointer analysis
%shows the promise of our approach.

% >>>
% vim:spell spelllang=en_gb:fmr=<<<,>>>:

\acks
We thank Mikol\'a\v{s} Janota and Xin Zhang for suggesting that we invert Datalog implications.
We thank Mayur Naik for giving us access to the private parts of Chord~\cite{jchord}.
We thank Yongsu Park for giving us access to a server on which we ran preliminary experiments.
We thank reviewers for their suggestions on how to improve the paper.
This work was supported by EPSRC Programme Grant Resource Reasoning (EP/H008373/2).
Yang was also supported by EPSRC and Institute for Information \& communications Technology Promotion (IITP)
grant funded by the Korea government (MSIP)
(No. R0190-15-2011, Development of Vulnerability Discovery Technologies for IoT Software Security).

{\softraggedright

\balance
\bibliography{db}
\bibliographystyle{plain}

}
\longshort{
  \newpage\appendix

\section{Proof of
  \texorpdfstring%
    {\hyperref[theorem:learning]{Theorem~\ref*{theorem:learning}}}
    {Theorem~\ref*{theorem:learning}}
}
\label{section:lemma-learning-proof}
{\allowdisplaybreaks[1]

We begin by restating in our notation a standard result from logic programming.
A \df{dependency graph} of a hypergraph~$G$ is a directed graph
  that includes an arc $(h,b)$ whenever $(h,B)\in G$ and $b\in B$ for some~$B$.
A \df{loop}~$L$ of a hypergraph~$G$ is a nonempty subset of its vertices
  that induce a strongly connected subgraph of the dependency graph of~$G$.
Note that loops are not required to be maximal.
In particular, sets that contain single vertices are loops, called \df{trivial loops}.
The set~$J_G(L)$ of \df{justifications} for loop~$L$ in~$G$ is defined as follows:
\begin{align*}
  J_G(L) \defeq \{\,(h,B)\in G\mid\text{\rm $h\in L$ and $B\cap L=\emptyset$}\,\}
\end{align*}
For a hypergraph~$G$ we define
  its \df{forward formula}~$\phi_{\rightarrow}(G)$
  and its \df{backward formula}~$\phi_{\leftarrow}(G)$
  as follows:
\begin{align*}
  \phi_{\rightarrow}(G) \;&\defeq\;
    \bigwedge_{e=(h,B)\in G}
      \biggl(
      \Bigl( \bigl( \bigwedge_{b\in B} x_b \bigr) \iff x_e \Bigr)
      \land (x_e \implies x_h)
      \biggr)
\\
  \phi_{\leftarrow}(G) \;&\defeq\;
    \bigwedge_{\substack{L\\\text{\rm loop of~$G$}}}
      \biggl(
        \bigl( \bigwedge_{u\in L} x_u \bigr) \implies \bigl( \bigvee_{e\in J_G(L)} x_e\bigr)
      \biggr)
\end{align*}
Both formulas are defined over the following set of variables:
\begin{align*}
  \{\,x_u\mid\text{\rm $u$ vertex of $G$}\,\}
  \cup
  \{\,x_e\mid\text{\rm $e$ arc of $G$}\,\}
\end{align*}
We define the \df{formula}~$\phi(G)$ of a hypergraph~$G$ by
\begin{align*}
  \phi(G) \;\defeq\;
    \mathop{\exists}_{e\in G}\!\! x_e \;
    \bigl(\phi_{\rightarrow}(G) \land \phi_{\leftarrow}(G)\bigr)
\end{align*}
The notation $\mathop{\exists}_{e\in G} x_e$ stands for several existential quantifiers,
  one for each variable in the set $\{x_e\}_{e\in G}$ indexed by~$G$.
In the definition of $\phi(G)$ from above,
  the existential quantification is not strictly necessary, but convenient:
Because the remaining free variables correspond to vertices,
  sets of variables are isomorphic to sets of~vertices.

We view models~$M$ of a formula~$\varphi$ as sets of variables; that is,
\begin{align*}
  M &\models x  &&\text{iff} && x\in M \\
  M &\models \bar{x} &&\text{iff} && x\notin M \\
  M &\models \varphi_1 \implies \varphi_2 &&\text{iff}
    &&\text{$M\models \varphi_1$ implies $M\models\varphi_2$} \\
  M &\models \exists x\,\varphi &&\text{iff}
    &&\text{$M\models\varphi[x:=0]$ or $M\models\varphi[x:=1]$}
\end{align*}
and so on, in the standard way.
There is an obvious one-to-one correspondence between sets of vertices and models;
  if $S$~is a set of vertices, we write $XS$ for the corresponding model,
  which is a set of variables:
\begin{align*}
  XS \;\defeq\; \{\,x_s\mid s\in S\,\}
\end{align*}
The following result is stated in~\cite[Section~3]{lee:loops},
  in a slightly more general form and with slightly different notations:

\begin{lemma}\label{lemma:loop-formulas-standard}
Let $G$ be a hypergraph, and let $\phi(G)$ be its formula, defined as above.
Then $X\bigl(\reach{G}{\emptyset}\bigr)$ is the unique model of~$\phi(G)$.
\end{lemma}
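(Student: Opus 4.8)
The plan is to prove that the \emph{unique} model's vertex part is exactly $R\defeq\reach{G}{\emptyset}$, by first reformulating what a model is and then establishing two inclusions. Since the arc variables are existentially quantified and the biconditionals in $\phi_{\rightarrow}(G)$ force each $x_e$ to equal $\bigwedge_{b\in B}x_b$, a set $S$ of vertices (i.e.\ the assignment $XS$) satisfies $\phi(G)$ precisely when two conditions hold: (i) $S$ is \emph{closed}, i.e.\ $B\subseteq S$ implies $h\in S$ for every arc $(h,B)\in G$ --- this is the residue of the implications $x_e\implies x_h$; and (ii) $S$ satisfies every \emph{loop formula}, i.e.\ for each loop $L\subseteq S$ there is a justification $(h,B)\in J_G(L)$ with $B\subseteq S$. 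Recall that $R$ is the least fixed-point of the one-step operator $S\mapsto\{\,h\mid(h,B)\in G,\ B\subseteq S\,\}$, hence the least closed set.

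For soundness I would show that $XR$ is a model. Closedness of $R$ is immediate, since $R$ is a fixed-point. For the loop formulas, given a loop $L\subseteq R$, I would use the \emph{stage} at which each vertex first enters $R$ under the iteration of the one-step operator: choose $h\in L$ of minimal stage, say stage~$n$, and pick an arc $(h,B)$ witnessing that $h$ enters at stage~$n$, so that every $b\in B$ enters at a stage strictly below~$n$. Minimality of $h$ then forces $B\cap L=\emptyset$, so $(h,B)\in J_G(L)$, and $B\subseteq R$ supplies the required justification; trivial loops are covered as the special case $L=\{h\}$.

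For uniqueness, let $S$ be any model. The inclusion $R\subseteq S$ is immediate from (i) and the minimality of $R$ among closed sets. For the reverse inclusion I would argue by contradiction, putting $U\defeq S\setminus R$ and assuming $U\neq\emptyset$. The crux is that $U$ is \emph{unfounded}: for any arc $(h,B)$ with $h\in U$ and $B\subseteq S$ we must have $B\cap U\neq\emptyset$, since $B\subseteq R$ would force $h\in R$ by closedness of $R$. I would then build a directed graph~$D$ on~$U$ with an edge $h\to b$ whenever some arc $(h,B)$ satisfies $B\subseteq S$, $h\notin B$, and $b\in B\cap U$. The loop formula for the trivial loop $\{h\}$ guarantees every $h\in U$ has a supporting arc $(h,B)$ with $h\notin B$ and $B\subseteq S$, so unfoundedness makes every vertex of~$U$ have out-degree at least one in~$D$. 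Taking $L$ to be a \emph{sink} strongly connected component of~$D$ yields a loop $L\subseteq U\subseteq S$ all of whose $D$-edges stay inside~$L$; I would then verify that $L$ has no justification with body in~$S$, because any $(h,B)\in J_G(L)$ with $B\subseteq S$ would (by unfoundedness, and since $h\in L$ gives $h\notin B$) produce an edge $h\to b$ of~$D$ with $b\in B\setminus L$, leaving the sink component --- a contradiction. This violates condition~(ii) for~$S$, so $U=\emptyset$ and $S=R$.

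The main obstacle is this reverse inclusion $S\subseteq R$, and specifically the extraction of the offending loop: I must ensure that the chosen~$L$ is at once a genuine loop (nonempty and strongly connected in the dependency graph, which holds because $D$-edges are dependency edges and~$D$ has no self-loops, forcing sink components to be nontrivial) and closed under the supporting-arc edges (which is exactly what a sink component gives, so that every justification of~$L$ is forced to reach outside~$S$). The trivial-loop formulas are what exclude the degenerate case of a vertex in~$U$ with no support at all, and handling hypergraph bodies rather than ordinary graph edges is the only place that needs care.
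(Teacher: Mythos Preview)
Your argument is correct. The paper does not actually prove this lemma: it simply cites \cite{lee:loops} and moves on. What you have written is essentially the standard unfounded-set proof of the loop-formula theorem from that literature, specialised to the paper's hypergraph notation. The two directions are handled exactly as one would expect: soundness via derivation stages to extract an external support for any loop inside $R$, and uniqueness via the observation that $S\setminus R$ is unfounded, followed by extracting a sink SCC of the ``support'' digraph~$D$ to obtain a loop with no external justification. All the details check out, including the use of the trivial-loop clause to guarantee every $h\in U$ has a supporting arc with $h\notin B$, and the verification that a putative justification of the sink loop would yield a $D$-edge leaving it.

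One minor remark: your insistence that sink components of~$D$ are nontrivial is correct (out-degree~$\ge 1$ and no self-loops force it) but unnecessary, since by the paper's definition singletons are already loops. Either way $L$ qualifies, so this does not affect the argument.
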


For the proof, we refer to~\cite{lee:loops}.

\begin{remark}
We note that $\phi_{\rightarrow}(G)$~is linear in the size of~$G$,
  while $\phi_{\leftarrow}(G)$~is exponential in the size of~$G$ in the worst case.
One could wonder whether it is possible to define~$\phi(G)$
  in a way that does not lead to exponentially large formulas
  but \autoref{lemma:loop-formulas-standard} still holds.
It turns out there are reasons to suspect
  that such an alternative definition does not exist~\cite{razborov:many-loops}.
\end{remark}

Here, we shall need a more flexible form of \autoref{lemma:loop-formulas-standard}.
Let $S$ be a distinguished subset of vertices, none of which occurs in the head of an arc.
Define
\begin{align*}
  \phi^S_{\leftarrow}(G) \;&\defeq\;
    \bigwedge_{\substack{L\\\text{\rm loop of~$G$}\\L\cap S=\emptyset}}
      \biggl(
        \bigl( \bigwedge_{u\in L} x_u \bigr) \implies \bigl( \bigvee_{e\in J_G(L)} x_e\bigr)
      \biggr)
\end{align*}
and
\begin{align}
 \phi^S(G)\;\defeq\;\mathop{\exists}_{e\in G} x_e\,
  \bigl(\phi_{\rightarrow}(G) \land \phi^S_{\leftarrow}(G) \bigr)
  \label{eq:graph-formula-relative}
\end{align}

\begin{corollary}\label{corollary:loop-formulas-flexible}
Let $G$ be a hypergraph,
  let $S$ be a subset of vertices such that none of them occurs in the head of an arc,
  and let $\phi^S(G)$ be defined as above.
For each subset~$T$ of~$S$,
  there exists a unique model~$M$ of~$\phi^S(G)$ such that $X^{-1}(M)\cap S=T$,
  namely $M=X\bigl(\reach{G}{T}\bigr)$.
\end{corollary}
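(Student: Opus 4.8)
The plan is to reduce \autoref{corollary:loop-formulas-flexible} to \autoref{lemma:loop-formulas-standard} by augmenting~$G$ with source arcs that encode the seed set~$T$. Concretely, I would set $G' \defeq G\cup\{\,(t,\emptyset)\mid t\in T\,\}$. Since each added arc has an empty body, it contributes no edge to the dependency graph, so $G'$ and~$G$ have the same dependency graph and hence exactly the same loops, with $J_{G'}(L)=J_G(L)\cup\{\,(t,\emptyset)\mid t\in L\cap T\,\}$. Moreover, the hypothesis that no vertex of~$S$ is a head in~$G$ means that the added arcs are the only arcs of~$G'$ whose head lies in~$S$; thus an $S$-vertex is reachable from~$\emptyset$ in~$G'$ exactly when it is one of the seeds, giving both $\reach{G'}{\emptyset}=\reach{G}{T}$ and $\reach{G'}{\emptyset}\cap S=T$. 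Applying \autoref{lemma:loop-formulas-standard} to~$G'$ then shows that $M^{\ast}\defeq X\bigl(\reach{G}{T}\bigr)$ is the unique model of~$\phi(G')$, and its restriction to~$S$ is~$T$, as required.

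It remains to match the models of~$\phi(G')$ with the models of~$\phi^S(G)$ whose $S$-variables are set to exactly~$T$. I would treat models as vertex sets $X^{-1}(M)$ throughout, since the forward biconditional in $\phi_{\rightarrow}$ forces each arc variable to $x_e=[\,B\subseteq M\,]$, and the existentially quantified source-arc variables are forced true. The key bookkeeping is on the loop constraints of $\phi_{\leftarrow}(G')$, and I would split each loop~$L$ into three cases: (i)~if $L\cap T\neq\emptyset$, then a source-arc justification fires and the constraint is trivially met; (ii)~if $L\cap S=\emptyset$, then $J_{G'}(L)=J_G(L)$ and the constraint is literally one of the conjuncts of $\phi^S_{\leftarrow}(G)$; (iii)~if $L\cap T=\emptyset$ but $L\cap(S\setminus T)\neq\emptyset$, then any witness $s\in L\cap(S\setminus T)$ lies outside $T=X^{-1}(M)\cap S$, so $x_s$ is false, the antecedent $\bigwedge_{u\in L}x_u$ collapses, and the constraint is vacuous.

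With this case analysis in hand, existence is immediate: because $T\subseteq S$, the loop conjuncts of $\phi^S_{\leftarrow}(G)$ (those with $L\cap S=\emptyset$) form a subset of the effective constraints of~$\phi_{\leftarrow}(G')$, and the forward parts coincide, so the unique model~$M^{\ast}$ of~$\phi(G')$ also satisfies $\phi^S(G)$. For uniqueness, I would take any model~$M$ of~$\phi^S(G)$ with $X^{-1}(M)\cap S=T$ and verify it models~$\phi(G')$: the forward and source-arc constraints hold since $T\subseteq X^{-1}(M)$, and every loop constraint is discharged by case~(i), by case~(ii) using $M\models\phi^S_{\leftarrow}(G)$, or by case~(iii) using $X^{-1}(M)\cap S=T$. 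By the uniqueness half of \autoref{lemma:loop-formulas-standard}, $M=M^{\ast}=X\bigl(\reach{G}{T}\bigr)$.

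I expect the main obstacle to be case~(iii): one must check that a loop meeting $S\setminus T$, which $\phi^S$ deliberately omits but the plain formula~$\phi(G')$ retains, is automatically satisfied precisely because conditioning on $X^{-1}(M)\cap S=T$ falsifies its antecedent. Lining up the loop sets and their justifications under the inclusion $T\subseteq S$ is the delicate point. The remaining verifications---that source arcs add no dependency edges, that the forward biconditional pins down the arc variables, and that $\reach{G'}{\emptyset}=\reach{G}{T}$---are routine.
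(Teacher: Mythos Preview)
Your proposal is correct and follows essentially the same approach as the paper: construct $G_T=G\cup\{(t,\emptyset)\mid t\in T\}$, apply \autoref{lemma:loop-formulas-standard} to~$G_T$, and then identify the models of~$\phi(G_T)$ with the models of~$\phi^S(G)$ whose $S$-part equals~$T$. The only difference is in how that identification is carried out: where you do a three-way case analysis on loops, the paper observes that since no $s\in S$ is ever a head, $S$-vertices lie only in trivial loops, which collapses your cases (i) and (iii) and yields the clean syntactic equality $\phi(G_T)\equiv\phi^S(G)\land\bigwedge_{t\in T}x_t\land\bigwedge_{s\in S\setminus T}\bar{x_s}$ directly.
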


\begin{proof}
For a fixed but arbitrary $T\subseteq S$, construct the graph
\begin{align*}
  G_T \;\defeq\; G \cup \{\,(t,\emptyset)\mid t\in T\,\}
\end{align*}
It is easy to check that $\reach{G}{T}=\reach{G_T}{\emptyset}$.
From \autoref{lemma:loop-formulas-standard},
  we know that $X\bigl(\reach{G_T}{\emptyset}\bigr)$ is the unique model of~$\phi(G_T)$.
Since the vertices of~$S$ do not occur in the heads of arcs, they appear only in trivial loops.
Thus, we have
\begin{align*}
  \phi_{\rightarrow}(G_T) \;&=\;
    \phi_{\rightarrow}(G) \land \Bigl( \bigwedge_{t\in T} x_t \Bigr)
\\
  \phi_{\leftarrow}(G_T) \;&=\;
    \phi^S_{\leftarrow}(G) \land \Bigl( \bigwedge_{s\in S\setminus T} \bar{x_s} \Bigr)
\end{align*}
(The formulas above eliminate via existential quantification
  the variables corresponding to the dummy arcs $(t,\emptyset)$ of~$G_T$,
  but this is of little consequence.)
And finally
\begin{align*}
\begin{aligned}
  \phi_{\rightarrow}(G_T) \land \phi_{\leftarrow}(G_T)
  \;&=\;
  \phi_{\rightarrow}(G) \land \phi^S_{\leftarrow}(G) \\
  &\qquad  \land \Bigl( \bigwedge_{s\in S\setminus T} \bar{x_s} \Bigr)
    \land \Bigl( \bigwedge_{t\in T} x_t \Bigr)
\end{aligned}
\end{align*}
This concludes the proof.
\end{proof}

We now take a special case of \autoref{corollary:loop-formulas-flexible}.

\begin{corollary}\label{corollary:loop-formulas-sources}
Let $G$ be a hypergraph.
Let $(S,V)$ be a partition of its vertices
  such that no vertex in~$S$ occurs as the head of an arc.
Let $\phi^S(G)$ be defined as above.
Let $R$ be a subset of~$V$.
Define
\begin{align*}
  \phi^{S,R}(G) \;\defeq\;
  \mathop{\exists}_{u\in V}\!\! x_u\;
    \biggl(
      \phi^S(G)
      \land \Bigl(\bigwedge_{u\in R} x_u\Bigr)
      \land \Bigl(\bigwedge_{u\in V\setminus R} \bar{x_u}\Bigr)
    \biggr)
\end{align*}
For all $T\subseteq S$,
  we have that $XT$ is a model of~$\phi^{S,R}(G)$ if and only if $\reach{G}{T}=T\cup R$.
\end{corollary}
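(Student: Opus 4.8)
The plan is to reduce the statement entirely to \autoref{corollary:loop-formulas-flexible}, carefully tracking which variables are bound and which remain free. First I would unfold the definition of $\phi^{S,R}(G)$ and observe that its only free variables are the $S$-variables $\{\,x_s\mid s\in S\,\}$: the arc variables are already bound inside $\phi^S(G)$, and the vertex variables $x_u$ for $u\in V$ are bound by the outer quantifier. Hence it is meaningful to test whether $XT$ is a model, for a given $T\subseteq S$.

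The key observation is that the two conjuncts $\bigl(\bigwedge_{u\in R}x_u\bigr)\land\bigl(\bigwedge_{u\in V\setminus R}\bar{x_u}\bigr)$ are unit clauses that pin down the $V$-variables completely: they are satisfiable only by the assignment setting $x_u$ true exactly when $u\in R$. Therefore the outer existential over the $V$-variables has at most one witness, namely the assignment whose $V$-part is exactly~$R$. Consequently $XT\models\phi^{S,R}(G)$ holds if and only if the single combined assignment whose $S$-part is~$T$ and whose $V$-part is~$R$, that is $X(T\cup R)$, is a model of $\phi^S(G)$.

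At this point I would invoke \autoref{corollary:loop-formulas-flexible}: for the given $T\subseteq S$ there is a \emph{unique} model of $\phi^S(G)$ whose $S$-part is~$T$, and it is $X\bigl(\reach{G}{T}\bigr)$. Here I would record the small fact that $\reach{G}{T}\cap S=T$, which holds because no vertex of $S$ is the head of an arc, so the least fixed-point computation can never introduce an element of~$S$ beyond those already in~$T$. Thus $X(T\cup R)$ is a model of $\phi^S(G)$ precisely when it coincides with this unique model, that is, when $T\cup R=\reach{G}{T}$.

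Finally I would tidy up the equivalence. Since $T\subseteq S$ and $R\subseteq V$ are disjoint and $\reach{G}{T}=T\cup\bigl(\reach{G}{T}\cap V\bigr)$ with the union disjoint, the equality $T\cup R=\reach{G}{T}$ is equivalent to $R=\reach{G}{T}\cap V$, and hence to $\reach{G}{T}=T\cup R$ as stated. The argument is almost entirely bookkeeping; the only point needing a moment's care is the treatment of the existential quantifiers, namely confirming that the added unit clauses force the $V$-part uniquely so that \autoref{corollary:loop-formulas-flexible} applies verbatim. I do not expect any genuine obstacle beyond this.
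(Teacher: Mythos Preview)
Your proposal is correct and follows essentially the same route as the paper's proof: reduce $XT\models\phi^{S,R}(G)$ to $X(T\cup R)\models\phi^S(G)$ via the unit clauses, then invoke \autoref{corollary:loop-formulas-flexible}. The paper's proof is terser (three sentences), glossing over the bookkeeping about $\reach{G}{T}\cap S=T$ that you spell out, but the argument is the same.
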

\begin{proof}
Let $T$ be a subset of $S$. Then, $XT$ is a model of~$\phi^{S,R}(G)$ 
if and only if $X(T \cup R)$ is a model of $\phi^S(G)$. But
by \autoref{corollary:loop-formulas-flexible}, this is equivalent to
$\reach{G}{T}=T\cup R$.
\end{proof}

The key idea of our proof is to use \autoref{corollary:loop-formulas-sources}
  in such a way that subsets of~$S$ correspond to predictive provenances~$H$.
To this end,
  we define the \df{extended cheap provenance}~$G^{\bot}_T$
  with respect to the set~$T$ of vertices by
\begin{align*}
  G^{\bot}_T \defeq
    \{\,(h,B\cup\{s_e\})\mid e=(h,B)\in G^{\bot}\,\}
    \cup
    \{\,(t,\emptyset)\mid t\in T\,\}
\end{align*}
Recall our notation $G^{\bot}$ for the cheap provenance.
For a predictive provenance $H\subseteq G^{\bot}$, let us write $SH$ for $\{\,s_e\mid e\in H\,\}$.
All the vertices of $S G^{\bot}$ are fresh: they appear in $G^{\bot}_T$ but not in~$G^{\bot}$.
The extended cheap provenance has the property that
\begin{align}
  \reach{G^{\bot}_T}(SH) = (SH)\cup\reach{H}{T}
  \label{eq:Gtt-property}
\end{align}
for all predictive provenances $H\subseteq G^{\bot}$ and all sets of vertices~$T$.

Suppose the cheap provenance~$G^{\bot}$ and two subsets $T$~and~$R$ of its vertices are given.
The following lemma shows how to construct a boolean formula whose models
  are in one-to-one correspondence with the cheap provenances $H\subseteq G^{\bot}$
  for which $R=\reach{H}{T}$.

\begin{lemma}\label{lemma:predictive-provenance-as-model}
Let $G^{\bot}$ be a cheap provenance,
  and let $R$~and~$T$ be two subsets of its vertices.
Define the extended cheap provenance $G^{\bot}_T$ with respect to~$T$ as above.
We have that $R=\reach{H}{T}$ if and only if
  $X(SH)$~is a model of~$\phi^{SG^{\bot},R}(G^{\bot}_T)$.
\end{lemma}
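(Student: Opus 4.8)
The plan is to derive the lemma as a direct instance of \autoref{corollary:loop-formulas-sources}, applied to the extended cheap provenance $G^{\bot}_T$. First I would fix the instantiation: take the hypergraph in the corollary to be $G^{\bot}_T$, the distinguished set $S$ to be $SG^{\bot}=\{s_e\mid e\in G^{\bot}\}$, and the complementary set $V$ to be the original vertices of $G^{\bot}$. I would then check that this is a legitimate instance, namely that $(SG^{\bot},V)$ partitions the vertices of $G^{\bot}_T$ and that no selector $s_e$ ever occurs as the head of an arc of $G^{\bot}_T$. Both facts are immediate from the definition of $G^{\bot}_T$: each selector $s_e$ appears only in the body of the arc $(h,B\cup\{s_e\})$, while the heads of $G^{\bot}_T$ are either original heads $h$ or facts $t\in T$, all of which lie in $V$; moreover the selectors are fresh, hence disjoint from $V$. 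Since $R\subseteq V$ by hypothesis, the formula $\phi^{SG^{\bot},R}(G^{\bot}_T)$ is well defined.

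With the instance fixed, I would invoke \autoref{corollary:loop-formulas-sources}, taking its generic source subset of $S$ to be $SH$ (note $SH\subseteq SG^{\bot}$, because $H\subseteq G^{\bot}$). The corollary then yields that $X(SH)$ is a model of $\phi^{SG^{\bot},R}(G^{\bot}_T)$ if and only if $\reach{G^{\bot}_T}{(SH)}=(SH)\cup R$.

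Next I would rewrite the left-hand side using the stated property~\eqref{eq:Gtt-property}, which gives $\reach{G^{\bot}_T}{(SH)}=(SH)\cup\reach{H}{T}$. Hence the model condition becomes $(SH)\cup\reach{H}{T}=(SH)\cup R$. It remains to strip the common summand: since the selectors in $SH$ are fresh, they are disjoint from both $\reach{H}{T}$ and $R$ (each is a set of original vertices of $G^{\bot}$, as $T,R\subseteq V$ and $H\subseteq G^{\bot}$). A routine disjointness argument — intersecting both sides with the complement of $SH$ — then shows that $(SH)\cup\reach{H}{T}=(SH)\cup R$ is equivalent to $\reach{H}{T}=R$, closing the chain of equivalences.

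The only genuinely delicate point is bookkeeping rather than mathematics: the symbol $T$ is overloaded, serving both as the fixed source set of the lemma and as the generic subset of $S$ in \autoref{corollary:loop-formulas-sources}, so I would be careful to instantiate the corollary's generic set by $SH$ and not by $T$. The substantive content — the fixed-point identity \eqref{eq:Gtt-property} relating reachability in $G^{\bot}_T$ to reachability in $H$ — is already available, so no fresh fixed-point reasoning is needed. Thus the main obstacle is simply verifying the side conditions of the corollary (the partition property and that selectors are never heads) and handling the freshness and disjointness cleanly.
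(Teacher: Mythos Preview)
Your proposal is correct and follows essentially the same route as the paper: instantiate \autoref{corollary:loop-formulas-sources} with $G:=G^{\bot}_T$, $S:=SG^{\bot}$, and source set $SH$, then combine with~\eqref{eq:Gtt-property} and use freshness of the selector vertices to cancel $SH$. You are simply more explicit than the paper about verifying the side conditions and the overloading of~$T$, which is fine.
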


\begin{proof}
In \autoref{corollary:loop-formulas-sources},
  set $S \defeq S G^{\bot}$ and $T \defeq SH$ and $G \defeq G^{\bot}_T$.
We obtain that
\begin{align*}
X(SH) &\models \phi^{S G^{\bot},R}(G^{\bot}_T)
  &&\text{iff}
& \reach{G^{\bot}_T}{(SH)} = (SH) \cup R
\end{align*}
Combining this with \eqref{eq:Gtt-property} we obtain
\begin{align*}
X(SH) &\models \phi^{S G^{\bot},R}(G^{\bot}_T)
  &&\text{iff}
& (SH)\cup\reach{H}{T} = (SH) \cup R
\end{align*}
Finally, since all the vertices in $SH$ are fresh, we are done.
\end{proof}

What remains to be done is to make explicit the formula $\phi^{SG^{\bot},R}(G^{\bot}_T)$
  mentioned in \autoref{lemma:predictive-provenance-as-model}.
This is only a matter of calculation.
We begin by unfolding the definition of $\phi^{SG^{\bot},R}(G^{\bot}_T)$,
  and then that of~$\phi^{SG^{\bot}}(G^{\bot}_T)$.
Below, the notation $\varphi[x_R:=v]$ means that in $\varphi$
  we substitute the variable $x_u$ with value~$v$ for all indices $u\in R$.
Also, we write $V$ for the vertex set of $G^\bot$.
  
\begin{align*}
&\phi^{SG^{\bot},R}(G^{\bot}_T) 
\\&\quad=
  \mathop{\exists}_{u\in V}\!\! x_u\;
    \biggl(
      \phi^{SG^{\bot}}(G^{\bot}_T)
      \land \Bigl(\bigwedge_{u\in R} x_u\Bigr)
      \land \Bigl(\bigwedge_{u\in V\setminus R} \bar{x_u}\Bigr)
    \biggr)
\\&\quad=
  \phi^{SG^{\bot}}(G^{\bot}_T)
    [x_R:=1][x_{V\setminus R}:=0]
\\&\quad=
  \mathop{\exists}_{e\in G^{\bot}_T}\!\! x_e\,
    \bigl(\phi_{\rightarrow}(G^{\bot}_T) \land \phi^{SG^{\bot}}_{\leftarrow}(G^{\bot}_T) \bigr)
    [x_R:=1][x_{V\setminus R}:=0]
\\&\quad=
  \mathop{\exists}_{e\in G^{\bot}_T}\!\! x_e\,
    \bigl(\Psi_{\rightarrow} \land \Psi_{\leftarrow}\bigr)
\end{align*}
where
\begin{align*}
  \Psi_{\rightarrow} \;&\defeq\;
    \phi_{\rightarrow}(G^{\bot}_T)
    [x_R:=1][x_{V\setminus R}:=0]
\\
  \Psi_{\leftarrow} \;&\defeq\;
    \phi^{SG^{\bot}}_{\leftarrow}(G^{\bot}_T)
    [x_R:=1][x_{V\setminus R}:=0]
\end{align*}
Now we calculate $\Psi_{\rightarrow}$~and~$\Psi_{\leftarrow}$, in turn.
We begin with~$\Psi_{\rightarrow}$.
First we unfold the definition of $\phi_{\rightarrow}(G^{\bot}_T)$,
  then we unfold the definition of~$G^{\bot}_T$,
  and finally we apply the substitutions.
During the calculation, we identify $x_{s_e}$ with ${\bf S}_e$.
This is partly notational convenience (to avoid double subscripts),
  but it will also allow us to weigh models according to the probabilistic model.
\begin{align*}
\Psi_{\rightarrow} &=
  \phi_{\rightarrow}(G^{\bot}_T)
  [x_R:=1][x_{V\setminus R}:=0]
\\&=
  \hskip-0.5em
  \bigwedge_{\substack{e\in G^{\bot}_T\\ e=(h,B)}}
  \hskip-0.5em
    \biggl(
    \Bigl( \bigl( \bigwedge_{b\in B} x_b \bigr) \iff x_e \Bigr)
    \land (x_e \implies x_h)
    \biggr)
  \genfrac{[}{]}{0pt}{}{x_R:=1}{x_{V\setminus R}:=0}
\\&=
  \biggl(
  \hskip-1em
  \bigwedge_{\substack{e'\in G^{\bot}\\ e'=(h,B)\\ e=(h,B\cup\{s_{e'}\})}}
  \hskip-0.5em
    \biggl(
    \Bigl( \bigl( \bigwedge_{b\in B\cup\{s_{e'}\}} x_b \bigr) \iff x_e \Bigr)
    \land (x_e \implies x_h)
    \biggr)
\\&\qquad
  {} \land
  \bigwedge_{\substack{t\in T\\ e=(t,\emptyset)}}
  \hskip-0.5em
    (x_e \land x_t)
    \biggr)
  \genfrac{[}{]}{0pt}{}{x_R:=1}{x_{V\setminus R}:=0}
\\&=
  \hskip-1.5em
  \bigwedge_{\substack{e'\in G^{\bot}\\ e'=(h,B)\\ e=(h,B\cup\{s_{e'}\})}}
  \hskip-1.8em
    \biggl(
    \Bigl( \bigl(S_{e'}\land[B\subseteq R]\bigr) \iff x_e \Bigr)
    \land (x_e \implies [h\in R])
    \biggr)
\\&\qquad
  {} \land
  \bigwedge_{\substack{t\in T\\ e=(t,\emptyset)}}
  \hskip-0.5em
    (x_e \land [t\in R])
\end{align*}
If $T\not\subseteq R$, then $\Psi_{\rightarrow}=0$;
otherwise,
\begin{align}
\begin{aligned}
\Psi_{\rightarrow} =
  &\biggl(
  \hskip-2em
  \bigwedge_{\substack{%
    e'=(h,B)\in G^{\bot}\\
    e=(h,B\cup\{s_{e'}\})\\
    \text{\rm $B\subseteq R$ and $h\in R$}
  }}
    \hskip-1.5em
    \bigl(S_{e'}\iff x_e \bigr)
  \biggr)
\land
  \biggl(
  \hskip-2em
  \bigwedge_{\substack{%
    e'=(h,B)\in G^{\bot}\\
    \text{\rm $B\subseteq R$ and $h\not\in R$}
  }}
    \hskip-1.5em
    \bar{S_{e'}}
  \biggr)
\\
  &{} \land
  \biggl(
  \hskip-2em
  \bigwedge_{\substack{%
    e'=(h,B)\in G^{\bot}\\
    e=(h,B\cup\{s_{e'}\})\\
    \text{\rm $B\not\subseteq R$}
  }}
    \hskip-1.2em
    \bar{x_e} \biggr)
  \land
  \biggl(
  \bigwedge_{\substack{%
    t\in T\\
    e=(t,\emptyset)
  }}
    x_e
  \biggr)
\end{aligned}
  \label{eq:psi->}
\end{align}
Next, we calculate $\Psi_{\leftarrow}$.
\begin{align*}
\Psi_{\leftarrow} &=
  \phi^{SG^{\bot}}_{\leftarrow}(G^{\bot}_T)
  [x_R:=1][x_{V\setminus R}:=0]
\\&=
  \bigwedge_{\substack{L\\\text{\rm loop of~$G^{\bot}_T$}\\L\cap SG^{\bot}=\emptyset}}
    \biggl(
      \bigl( \bigwedge_{u\in L} x_u \bigr)
      \implies
      \bigl( \bigvee_{e\in J_{G^{\bot}_T}(L)} x_e\bigr)
    \biggr)
  \genfrac{[}{]}{0pt}{}{x_R:=1}{x_{V\setminus R}:=0}
\\&=
  \bigwedge_{\substack{L\\\text{\rm loop of~$G^{\bot}$}}}
    \biggl(
      \bigl( \bigwedge_{u\in L} x_u \bigr)
      \implies
      \bigl( \bigvee_{e\in J_{G^{\bot}_T}(L)} x_e\bigr)
    \biggr)
  \genfrac{[}{]}{0pt}{}{x_R:=1}{x_{V\setminus R}:=0}
\\&=
  \bigwedge_{\substack{L\\\text{\rm loop of~$G^{\bot}$}}}
    \biggl(
      [L\subseteq R]
      \implies
      \bigl( \bigvee_{e\in J_{G^{\bot}_T}(L)} x_e\bigr)
    \biggr)
\\&=
  \bigwedge_{\substack{L\\\text{\rm loop of~$G^{\bot}$}\\L\subseteq R}}
  \biggl(
    \Bigl(
      \hskip-1em
      \bigvee_{\substack{
        e'\\
        e'=(h,B)\in J_{G^{\bot}}(L)\\
        e=(h,B\cup\{s_{e'}\})
      }}
      \hskip-2em
      x_e
    \Bigr)
    \lor
    \Bigl(
      \bigvee_{\substack{
        t\in T\cap L\\
        e=(t,\emptyset)
      }}
      x_e
    \Bigr)
  \biggr)
\end{align*}
When we calculate $\Psi_{\rightarrow}\land\Psi_{\leftarrow}$ we see that
  $\Psi_{\rightarrow}$ fixes the values of all the variables $x_e$ corresponding to arcs.
\begin{align*}
\hskip2em&\hskip-2em
\phi^{SG^{\bot},R}(G^{\bot}_T)
\;=\;
  \mathop{\exists}_{e\in G^{\bot}_T}\! x_e\, (\Psi_{\rightarrow} \land \Psi_{\leftarrow})
\\&=\;
  [T\subseteq R]
  \land
  \biggl(
  \hskip-2em
  \bigwedge_{\substack{%
    e'=(h,B)\in G^{\bot}\\
    \text{\rm $B\subseteq R$ and $h\not\in R$}
  }}
    \hskip-1.5em
    \bar{S_{e'}}
  \biggr)
%\\&\qquad
  \land
  \Psi_{\leftarrow}
  \left[\substack{\
    \text{\rm $x_e:=S_{e'}$ for $(e,e')\in S$}\\
    \text{\rm $x_e:=0$ for $e\in O$}\\
    \text{\rm $x_e:=1$ for $e\in I$}
  }
  \right]
\end{align*}
where $S$, $O$, and $I$ stand for corresponding ranges in~\eqref{eq:psi->}.
More precisely,
  letting $e'=(h,B)$ range over~$G^{\bot}$
  and letting $e$ be its corresponding arc $(h,B\cup\{s_{e'}\})$ in~$G^{\bot}_T$,
  we have
  $S \;\defeq\;\{\,(e,e')\mid\text{\rm $B\subseteq R$ and $h\in R$}\}$
  and
  $O \;\defeq\;\{\,e\mid B\not\subseteq R\,\}$.
Also, $I$~contains all the dummy arcs of the form $(t,\emptyset)$, for all $t\in T$.
Now we apply these three substitutions to~$\Psi_{\leftarrow}$, one by one.
The first line just introduces a shorthand notation for each of the three kinds of substitutions.
\begin{align*}
\hskip2em&\hskip-2em
  \Psi_{\leftarrow}
  \left[
  \begin{aligned}
  &\text{\rm $x_e:=S_{e'}$ for $(e,e')\in S$}\\
  &\text{\rm $x_e:=0$ for $e\in O$}\\
  &\text{\rm $x_e:=1$ for $e\in I$}
  \end{aligned}
  \right]
=
  \Psi_{\leftarrow}
  \left[
  \begin{aligned}
  &{\cal S} \\ &{\cal O} \\ &{\cal I}
  \end{aligned}
  \right]
\\&=\;
  \bigwedge_{\substack{L\\\text{\rm loop of~$G^{\bot}$}\\L\subseteq R}}
  \biggl(
    \Bigl(
      \hskip-1em
      \bigvee_{\substack{
        e'\\
        e'=(h,B)\in J_{G^{\bot}}(L)\\
        e=(h,B\cup\{s_{e'}\})
      }}
      \hskip-2em
      x_e
    \Bigr)
    \lor
    \Bigl(
      \bigvee_{\substack{
        t\in T\cap L\\
        e=(t,\emptyset)
      }}
      x_e
    \Bigr)
  \biggr)
  \left[
  \begin{aligned}
  &{\cal S} \\ &{\cal O} \\ &{\cal I}
  \end{aligned}
  \right]
\\&=\;
  \bigwedge_{\substack{L\\\text{\rm loop of~$G^{\bot}$}\\L\subseteq R\setminus T}}
  \bigvee_{\substack{
    e'\\
    e'=(h,B)\in J_{G^{\bot}}(L)\\
    e=(h,B\cup\{s_{e'}\})
  }}
  \hskip-2em
    x_e
  \left[
  \begin{aligned}
  &{\cal S} \\ &{\cal O}
  \end{aligned}
  \right]
\\&=\;
  \bigwedge_{\substack{L\\\text{\rm loop of~$G^{\bot}$}\\L\subseteq R\setminus T}}
  \bigvee_{\substack{
    e'\\
    e'=(h,B)\in J_{G^{\bot}}(L)\\
    e=(h,B\cup\{s_{e'}\})\\
    B\subseteq R
  }}
  \hskip-2em
    x_e
  [{\cal S}]
\\&=\;
  \bigwedge_{\substack{L\\\text{\rm loop of~$G^{\bot}$}\\L\subseteq R\setminus T}}
  \bigvee_{\substack{
    e'\\
    e'=(h,B)\in J_{G^{\bot}}(L)\\
    B\subseteq R
  }}
  \hskip-2em
    S_{e'}
\end{align*}

Finally, we conclude that
\begin{align}
\begin{aligned}
  &\phi^{SG^{\bot},R}(G^{\bot}_T) \;=\;
  \\&\qquad
    [T\subseteq R]
    \land
    \biggl(
      \hskip-1em
      \bigwedge_{\substack{e=(h,B)\in G^{\bot}\\B\subseteq R,\  h\notin R}}
      \hskip-1em
        \bar{\bf S}_e
    \biggr)
    \land
    \biggl(
      \bigwedge_{\substack{L\\\text{\rm loop in $G^{\bot}$}\\L\subseteq R\setminus T}}
      \bigvee_{\substack{e=(h,B)\\ e\in J_{G^{\bot}}(L)\\ B\subseteq R}}
        {\bf S}_e
    \biggr)
\end{aligned}
  \label{eq:formula-one-obs}
\end{align}

Now observe that
\begin{align}
\Pr\biggl(\bigcap_{k=1}^n \Bigl( R_k=\reach{\randomH}(T_k)\Bigr) \biggr)
  \;=\;
  \E \biggl(\bigwedge_{k=1}^n \phi^{SG^{\bot},R_k}(G^{\bot}_{T_k}) \biggr)
  \label{eq:conjoining-obs}
\end{align}

Putting together \eqref{eq:formula-one-obs}~and~\eqref{eq:conjoining-obs},
  we obtain the following lemma.

\begin{lemma}\label{lemma:likelihood}
Consider the probabilistic model associated with the cheap provenance~$G^{\bot}$
  of an analysis~${\cal A}$.
Let $T_1,\ldots,T_n$ and $R_1,\ldots,R_n$ be subsets of the vertices of~$G^{\bot}$.
If $T_k\subseteq R_k$ for all~$k$, then
\begin{align*}
\begin{aligned}
&\Pr\biggl( \bigcap_{k=1}^n \bigl( R_k = \reach{\randomH}{T_k} \bigr) \biggr)
  =
\\&\qquad
  \prod_{e\in N} \E \bar{\bf S}_e
  \cdot
  \E \biggl(
  \bigwedge_{\substack{L\\\text{\rm loop of~$G^{\bot}$}}}
  \bigwedge_{\substack{k\\ L\subseteq R_k\setminus T_k}}
  \bigvee_{\substack{e=(h,B)\\ e\in J_{G^{\bot}}(L)\setminus N\\ B\subseteq R_k}}
    {\bf S}_e
  \biggr)
\end{aligned}
\end{align*}
where
\begin{align*}
N \;\defeq\;
  \{\,(h,B)\in G^{\bot}\mid\text{\rm $B\subseteq R_k$ and $h\notin R_k$ for some~$k$}\,\}
\end{align*}
\end{lemma}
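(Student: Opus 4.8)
The plan is to translate each reachability event into a boolean formula over the selection variables, pass from probability to expectation, and then exploit the independence of the selection variables to factor the result. Throughout I would treat \autoref{corollary:loop-formulas-sources} and \autoref{lemma:predictive-provenance-as-model} as already established, since they are proved above.

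First I would fix a single index~$k$ and render the event $R_k=\reach{\randomH}{T_k}$ as a formula. By \autoref{lemma:predictive-provenance-as-model}, this event holds exactly when $X(S\randomH)$ is a model of $\phi^{SG^{\bot},R_k}(G^{\bot}_{T_k})$. The next step is to make that formula explicit: unfold $\phi^{SG^{\bot}}$, unfold the extended cheap provenance~$G^{\bot}_{T_k}$, and apply the substitutions $[x_{R_k}:=1][x_{V\setminus R_k}:=0]$. Splitting into the forward part~$\Psi_{\rightarrow}$ and the loop part~$\Psi_{\leftarrow}$, the forward part pins down every arc variable~$x_e$ in terms of the corresponding selection variable (equal to ${\bf S}_{e'}$ when $B\subseteq R_k$ and $h\in R_k$, equal to~$0$ when $B\not\subseteq R_k$, equal to~$1$ for the dummy source arcs), and in addition forces $\bar{\bf S}_{e'}$ for each arc $e'=(h,B)$ with $B\subseteq R_k$ but $h\notin R_k$. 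Feeding these values into the loop part collapses it to a conjunction, over loops $L\subseteq R_k\setminus T_k$, of disjunctions $\bigvee{\bf S}_{e'}$ ranging over justifications $e'\in J_{G^{\bot}}(L)$ with $B\subseteq R_k$. This is exactly the single-observation formula~\eqref{eq:formula-one-obs}.

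I would then observe that the probability of the intersection equals the expectation of the conjunction of these indicators, which is~\eqref{eq:conjoining-obs}: the probability of an event is the expectation of its boolean indicator, and the indicator of an intersection is the conjunction of the indicators. Conjoining~\eqref{eq:formula-one-obs} over all~$k$ produces, on one side, the collection of forced negative literals $\bar{\bf S}_e$, whose union over~$k$ is precisely the set~$N$, and on the other side, the conjunction of the loop disjunctions across all observations. The hypothesis $T_k\subseteq R_k$ is what guarantees every indicator factor $[T_k\subseteq R_k]$ equals~$1$, so no observation annihilates the product.

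The step needing the most care is separating the $N$-factor cleanly, and this is where I expect the main subtlety to lie. On any assignment satisfying $\bigwedge_{e\in N}\bar{\bf S}_e$, every selection variable~${\bf S}_e$ with $e\in N$ is~$0$, hence contributes nothing to any disjunction in which it appears; so each disjunction $\bigvee_{e\in J_{G^{\bot}}(L),\,B\subseteq R_k}{\bf S}_e$ may be restricted to $J_{G^{\bot}}(L)\setminus N$ without changing the value of the full conjunction. After this restriction the loop formula mentions no variable indexed by~$N$, and since the selection variables are independent the expectation factors as $\prod_{e\in N}\E\bar{\bf S}_e$ times the expectation of the loop formula, which is exactly the claimed expression. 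The one thing to check carefully here is that an arc forced negative by one observation can still be a justification for a loop arising from another observation, so the exclusion of~$N$ must be performed \emph{after} conjoining all~$k$, not observation by observation; the argument above handles this because the restriction is justified on the common event $\bigwedge_{e\in N}\bar{\bf S}_e$.
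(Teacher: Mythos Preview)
Your proposal is correct and follows essentially the same approach as the paper: derive the explicit single-observation formula~\eqref{eq:formula-one-obs} via \autoref{lemma:predictive-provenance-as-model}, conjoin over~$k$ using~\eqref{eq:conjoining-obs}, collect the forced negative literals into~$N$, strip~$N$ from the loop disjunctions, and factor by independence. Your remark that the exclusion of~$N$ must be performed after conjoining all observations matches the order of steps in the paper's calculation.
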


\begin{proof}
We assume that $T_k\subseteq R_k$. 
Using \eqref{eq:formula-one-obs}~and~\eqref{eq:conjoining-obs},
we transform $\bigwedge_{k=1}^n \phi^{SG^{\bot},R_k}(G^{\bot}_{T_k})$ as follows:
\begin{align*}
\hskip2em&\hskip-2em
  \bigwedge_{k=1}^n \phi^{SG^{\bot},R_k}(G^{\bot}_{T_k})
\\&=\;
  \bigwedge_{k=1}^n
  \Biggl(
    \biggl(
      \hskip-0.5em
      \bigwedge_{\substack{e=(h,B)\in G^{\bot}\\B\subseteq R_k,\  h\notin R_k}}
      \hskip-1em
        \bar{\bf S}_e
    \biggr)
    \land
    \biggl(
      \bigwedge_{\substack{L\\\text{\rm loop in $G^{\bot}$}\\L\subseteq R_k\setminus T_k}}
      \bigvee_{\substack{e=(h,B)\\ e\in J_{G^{\bot}}(L)\\ B\subseteq R_k}}
        {\bf S}_e
    \biggr)
  \Biggr)
\\&=\;
  \biggl(
    \bigwedge_{e\in N}
      \bar{\bf S}_e
  \biggr)
  \land
  \biggl(
    \bigwedge_{k=1}^n
    \bigwedge_{\substack{L\\\text{\rm loop in $G^{\bot}$}\\L\subseteq R_k\setminus T_k}}
    \bigvee_{\substack{e=(h,B)\\ e\in J_{G^{\bot}}(L)\\ B\subseteq R_k}}
      {\bf S}_e
  \biggr)
\\&=\;
  \biggl(
    \bigwedge_{e\in N}
      \bar{\bf S}_e
  \biggr)
  \land
  \biggl(
    \bigwedge_{k=1}^n
    \bigwedge_{\substack{L\\\text{\rm loop in $G^{\bot}$}\\L\subseteq R_k\setminus T_k}}
    \bigvee_{\substack{e=(h,B)\\ e\in J_{G^{\bot}}(L)\setminus N\\ B\subseteq R_k}}
      {\bf S}_e
  \biggr)
\\&=\;
  \biggl(
    \bigwedge_{e\in N}
      \bar{\bf S}_e
  \biggr)
  \land
  \biggl(
    \bigwedge_{\substack{L\\\text{\rm loop in $G^{\bot}$}}}
    \bigwedge_{\substack{k\in\{1,\ldots,n\}\\ L\subseteq R_k\setminus T_k}}
    \bigvee_{\substack{e=(h,B)\\ e\in J_{G^{\bot}}(L)\setminus N\\ B\subseteq R_k}}
      {\bf S}_e
  \biggr)
\end{align*}
The conclusion of the lemma now follows from the result of this calculation
and the fact that ${\bf S}_e$ and ${\bf S}_{e'}$ are independent whenever $e \neq e'$.
\end{proof}

We can finally prove \autoref{theorem:learning}.
Recall its statement:

\learningtheorem*

If $T_k \not\subseteq R_k$ for some~$k$,
  then the probability and both of its bounds are all~$0$.
In what follows,
  we shall invoke \autoref{lemma:likelihood},
  thus silently assuming that $T_k\subseteq R_k$ for all~$k$.
We first prove the claim about an upper bound, and then show the claim about a lower bound.

\begin{proof}[Proof of the Upper Bound in \autoref{theorem:learning}]
We start with a short calculation which shows what happens if we consider only trivial loops.
Recall the assumption that $h\notin B$ for all arcs $(h,B)$.
\begin{align}
\begin{aligned}
& {}\E \biggl(
  \bigwedge_{\substack{L\\\text{\rm loop of~$G^{\bot}$}}}
  \bigwedge_{\substack{k\\ L\subseteq R_k\setminus T_k}}
  \bigvee_{\substack{e=(h,B)\\ e\in J_{G^{\bot}}(L)\setminus N\\ B\subseteq R_k}}
    {\bf S}_e
  \biggr)
\\&\qquad\qquad {} \;\le\;
  \E \biggl(
  \bigwedge_{\substack{h\\\text{\rm vertex of~$G^{\bot}$}}}
  \bigwedge_{\substack{k\\ h\in R_k\setminus T_k}}
  \bigvee_{\substack{e=(h,B)\\ e\notin N, B\subseteq R_k}}
    {\bf S}_e
  \biggr)
\\&\qquad\qquad {} \;=\;
  \E \biggl(
  \bigwedge_{\substack{h\\C_h \neq \emptyset}}
  \bigwedge_{\substack{k \in C_h}}
  \bigvee_{\substack{e=(h,B)\\ e\notin N, B\subseteq R_k}}
    {\bf S}_e
  \biggr)
\\&\qquad\qquad {} \;=\;
  \prod_{\substack{h\\ C_h \neq \emptyset}}
  \E \biggl(
  \bigwedge_{\substack{k \in C_h}}
  \bigvee_{\substack{e=(h,B)\\ e\notin N, B\subseteq R_k}}
    {\bf S}_e
  \biggr)
\\&\qquad\qquad {} \;=\;
  \prod_{\substack{h\\ C_h \neq \emptyset}}
  \E \biggl(
  \bigwedge_{\substack{k \in C_h}}
  \bigvee_{\substack{e=(h,B) \in A_h \\ B\subseteq R_k}}
    {\bf S}_e
  \biggr)
\end{aligned}
\label{ineq:keep-trivial-loops}
\end{align}
The expression above has the form $\prod_h \E\Psi_h$.
We rewrite $\Psi_h$, by essentially enumerating all of its models
and checking if they satisfy~$\Psi_h$.
The result is the following equivalent form:
\begin{align*}
  \bigvee_{\substack{
    E_1\\
    E_1\subseteq A_h\\
    \forall k\in C_h,\; E_1\cap D_k\ne\emptyset
  }}
  \Biggl(
    \biggl(
      \bigwedge_{e\in E_1} {\bf S}_e
    \biggr)
    \land
    \biggl(
      \bigwedge_{e\in A_h \setminus E_1} \bar{\bf S}_e
    \biggr)
  \Biggr)
\end{align*}
%\HY{I cannot figure out why the above formula is the same as $\Psi_h$.}
and so
\begin{align}
\E \Psi_h
\;=\;
  \hskip-1.5em
  \sum_{\substack{
    E_1\\
    E_1\subseteq A_h\\ 
    \forall k\in C_h,\; E_1\cap D_k\ne\emptyset
  }}
  \hskip-1em
      \prod_{e\in E_1} \E {\bf S}_e
      \prod_{e\in A_h \setminus E_1} \E \bar{\bf S}_e
\label{eq:one-vertex-justified-prob}
\end{align}
Finally, we multiply the inequality \eqref{ineq:keep-trivial-loops}
  on both sides by $\prod_{e\in N} \E\bar{\bf S}_e$,
  plug in \eqref{eq:one-vertex-justified-prob},
  and use \autoref{lemma:likelihood}.
\end{proof}

Note that the upper bound is tight
  if $G^{\bot}$~has no cycles and therefore all loops are trivial.

\begin{proof}[Proof of the Lower Bound in \autoref{theorem:learning}]
By \autoref{lemma:likelihood}, 
\begin{align*}
\begin{aligned}
&\Pr\biggl( \bigcap_{k=1}^n \bigl( R_k = \reach{\randomH}{T_k} \bigr) \biggr)
  =
\\&\qquad
  \prod_{e\in N} \E \bar{\bf S}_e
  \cdot
  \E \biggl(
  \bigwedge_{\substack{L\\\text{\rm loop of~$G^{\bot}$}}}
  \bigwedge_{\substack{k\\ L\subseteq R_k\setminus T_k}}
  \bigvee_{\substack{e=(h,B)\\ e\in J_{G^{\bot}}(L)\setminus N\\ B\subseteq R_k}}
    {\bf S}_e
  \biggr)
\end{aligned}
\end{align*}
Thus, the main part of the lemma follows if we show that
\begin{align}
&
  \bigwedge_{\substack{h \\ C_h \neq \emptyset}}
  \bigvee_{\substack{
     E_1 
     \\ E_1 \subseteq A_h
     \\ \forall k \in C_h,\, E_1 \cap F_k \neq \emptyset}}
  \biggl(
    \Bigl(\bigwedge_{e \in E_1} {\bf S}_e\Bigr)
    \wedge
    \Bigl(\bigwedge_{e \in A_h \setminus E_1} \bar{\bf S}_e\Bigr)
  \biggr)
  \label{eq:likelihood-lbound}
\intertext{implies}
&
  \bigwedge_{\substack{L\\\text{\rm loop of~$G^{\bot}$}}}
  \bigwedge_{\substack{k\\ L\subseteq R_k\setminus T_k}}
  \bigvee_{\substack{e=(h,B)\\ e\in J_{G^{\bot}}(L)\setminus N\\ B\subseteq R_k}}
    {\bf S}_e
\label{eq:likelihood-lbound-exact}
\end{align}
To show this implication,
  we will show that a fixed but arbitrary conjunct of~\eqref{eq:likelihood-lbound-exact} holds,
  assuming that \eqref{eq:likelihood-lbound}~holds.
A conjunct of~\eqref{eq:likelihood-lbound-exact}
  is determined by a loop~$L_0$ and an index~$k_0$.
The idea is to show that
  loop $L_0$ is justified via its vertex that is closest to~$T_{k_0}$.

Since $L_0$~and~$k_0$ determine a conjunct of~\eqref{eq:likelihood-lbound-exact},
  we know that  $L_0 \subseteq R_{k_0} \setminus T_{k_0}$.
We need to find an arc $e = (h,B)$ such that
\begin{equation}
\label{eq:lboundlearning1a}
e \in J_{G^\bot}(L_0) \setminus N,
\quad
B \subseteq R_{k_0},
\quad\mbox{and}\quad
{\bf S}_e = 1.
\end{equation}
Since $L_0$ is not empty and $L_0 \subseteq R_{k_0} \subseteq \reach{G^\bot}{T_{k_0}}$,
we can choose $h \in L_0$ such that $d_{T_{k_0}}(h)$ is minimum.
Since $h \in L_0 \subseteq R_{k_0} \setminus T_{k_0}$, we have that
$$k_0 \in C_h.$$
This lets us instantiate~\eqref{eq:likelihood-lbound} with $h$,
and derive that for some subset $E_1$ of $A_h$,
\begin{equation}
\label{eq:lboundlearning2}
\text{$E_1 \cap F_k \neq \emptyset\,$ for all $k \in C_h$}
\quad\mbox{and}\quad \text{${\bf S}_e = 1\,$ for all $e \in E_1$} 
\end{equation}
Since $k_0 \in C_h$, the first conjunct implies that $E_1 \cap F_{k_0} \neq \emptyset$.
Thus, there exists an arc $e_0 = (h_0,B_0)$ in $E_1 \cap F_{k_0}$, and it satisfies
the following conditions:
\begin{enumerate}
\item the head $h_0$ of $e_0$ is $h$; 
\item $e_0$ is not in $N$; \RG{Why in $N$?}
\item $B_0 \subseteq R_{k_0}$; and
\item $e_0$ is a forward arc with respect to $T_{k_0}$.
\end{enumerate}
Since $e_0$ is a forward arc w.r.t. $T_{k_0}$ and $h$ has
the minimal distance from $T_{k_0}$ among all the vertices in $L_0$,
$$
e_0 \in J_{G^\bot}(L_0)
$$
Also, by the second conjunct in \eqref{eq:lboundlearning2},
$$
S_{e_0} = 1
$$
From what we have just shown follows that $e_0$ is the desired arc; it 
satisfies the requirements in \eqref{eq:lboundlearning1a}.
\end{proof}

Note that the lower bound and the upper bound coincide
  if $D_k \cap A_h = F_k \cap A_h$ for all $k$~and~$h$.
In this case, both bounds are tight.

}%\allowdisplaybreaks[1]

  \section{Proofs for Results in \autoref{sec:refinement}}
\label{section:proofs-refinement}

\noproblemma*
\begin{proof}{\allowdisplaybreaks[1]
The proof is a straightforward calculation.
\begin{align*}
  \Pr(a' \in {\bf A}_{\rm n}^{\approx})
&{}\;=\;
  \sum_{\substack{H'\\ H'\subseteq G^a}} [q \in \reach{H'}{(T(a,a'))}] \Pr(H')
\\
&{}\;\ge\;
  \sum_{\substack{H'\\ H\subseteq H' \subseteq G^a}}
    [q \in \reach{H'}{(T(a,a'))}] \Pr(H')
\\
&{}\;=\;
  \sum_{\substack{H'\\ H\subseteq H' \subseteq G^a}} \Pr(H')
  \;=\;
  \prod_{e\in H} \E {\bf S}_e
\end{align*}
The second equality uses two facts:
(i)~$q\in \reach{H}{(T(a,a'))}$,
and (ii)~$\reach{H}{(T(a,a'))} \subseteq \reach{H'}{(T(a,a'))}$ for all $H' \supseteq H$.
}\end{proof}

\greedylemma*
\begin{proof}
Another way of saying that the sequence $\{p_{\sigma(i)}/c_{\sigma(i)}\}_i$ is nonincreasing
  is to require that for all $1 \le i,j \le m$,
\begin{align}
i \le j \quad\Implies\quad
  \frac{c_{\sigma(i)}}{p_{\sigma(i)}} \le \frac{c_{\sigma(j)}}{p_{\sigma(j)}}
\label{eqn:greedy1}
\end{align}

Pick an arbitrary permutation $\sigma$.
We will study the effect of one transposition $(i\leftrightarrow i+1)$ on the cost.
Let $\sigma'=\sigma\circ(i\leftrightarrow i+1)$; in other words
\begin{align*}
  \sigma'(j) =
    \begin{cases}
    \sigma(i+1) &\text{if $j=i$} \\
    \sigma(i) &\text{if $j=i+1$} \\
    \sigma(j) &\text{otherwise}
    \end{cases}
\end{align*}
Observe that $q_k(\sigma)$ and $q_k(\sigma')$ differ for only \emph{one} value of~$k$:
\begin{align*}
  q_k(\sigma') &=
    \begin{cases}
    q_i(\sigma)(1-p_{\sigma(i+1)}) &\text{if $k=i+1$} \\
    q_k(\sigma) &\text{otherwise}
    \end{cases}
\end{align*}
Also notice that $q_k(\sigma) \neq 0$ and $q'_k(\sigma) \neq 0$ for all $k$.
The difference in cost between $\sigma'$ and $\sigma$ is
\begin{align*}
C(\sigma')-C(\sigma)
& {} =
\begin{aligned}[t]
& q_i(\sigma')c_{\sigma'(i)} + q_{i+1}(\sigma')c_{\sigma'(i+1)}
\\
& {} - q_i(\sigma)c_{\sigma(i)} - q_{i+1}(\sigma)c_{\sigma(i+1)}
\end{aligned}
\\[1ex]
& {} =
\begin{aligned}[t]
& q_i(\sigma)c_{\sigma(i+1)} + q_i(\sigma)(1 - p_{\sigma(i+1)})c_{\sigma(i)}
\\
&
{} - q_i(\sigma)c_{\sigma(i)} - q_i(\sigma)(1 - p_{\sigma(i)})c_{\sigma(i+1)}
\end{aligned}
\\[1ex]
& {} =
q_i(\sigma)(p_{\sigma(i)}c_{\sigma(i+1)} - p_{\sigma(i+1)}c_{\sigma(i)}).
\end{align*}
Thus,
$$
  \frac{C(\sigma')-C(\sigma)}{q_i(\sigma)} = p_i c_{i+1}-p_{i+1}c_i
$$
where $p_i$ denotes $p_{\sigma(i)}$, and $c_i$ denotes $c_{\sigma(i)}$,
for the fixed permutation~$\sigma$.

All that remains is to interpret the result of these calculations.
For the left-to-right direction,
assume that $\sigma$ has the minimal cost.
Also, for the sake of contradiction, suppose that
there exist $i$~and~$j$ such that
  $i \leq j$ and $c_i/p_i>c_j/p_j$.
Then, there must also exist an $i$ such that $c_i/p_i>c_{i+1}/p_{i+1}$,
which is equivalent to
$$
p_i c_{i+1} - p_{i+1} c_i < 0.
$$
Thus, the previous calculation shows that $\sigma'$ would have a lower
cost than $\sigma$.
This contradicts the assumption that $\sigma$ has the minimal cost.

For the right-to-left direction, pick $\sigma$ and $\sigma'$ that satisfy the RHS
of \eqref{eqn:greedy1}. Then, we can convert $\sigma$ to $\sigma'$ by composing
$\sigma$ with a sequence of transpositions $i \leftrightarrow i+1$
for $i$ such that
$$
\frac{c_i}{p_i} = \frac{c_{i+1}}{p_{i+1}}.
$$
Then the previous computation
shows that such composition leaves the cost unchanged.
Thus, $\sigma$ and $\sigma'$ have the same cost. But by what we have already shown,
there should be at least one $\sigma''$ that satisfies the RHS of
\eqref{eqn:greedy1} and have the minimal cost. This implies that
all of $\sigma$, $\sigma'$ and $\sigma''$ are optimal.
\end{proof}

\lemmafeasible*
\begin{proof}
The inclusion $F(G^a_{\to}) \subseteq F(G^a)$ follows 
from $G^a_{\to} \subseteq G^a$. We have $(\top,G^a_{\to}) \in F(G^a_{\to})$ because
(a)~$\top>a$ by assumption,
(b)~$G^a_{\to}\subseteq G^a_{\to}$ trivially, and
(c)~$q \in \reach{G^a_{\to}}{(T(a,\top))}$. To see why (c) holds,
notice that
  removing nonforward arcs with respect to $T(a,\top)=P_0(a) \cup P_1(a)$
  preserves distances and reachability from $T(a,\top)$,
  and so $\reach{G^a_{\to}}{(T(a,\top))} = \reach{G^a}{(T(a,\top))}$.
\end{proof}

\bijectionlemma*
%\RG{Maybe fill in the details of the proof if there is time.}
\begin{proof}[Proof sketch]
Let
\begin{align*}
G' &{}\;\defeq\; \{\,(h,{e}\cup B) \mid e = (h,B) \in G^a_{\to}\,\}
\\
S' &{}\;\defeq\;\{\,e \mid e \in G^a_{\to}\,\}
\end{align*}
Because $G^a_{\to}$ has no cycles by construction,
$G'$ does not have cycles, either. We have that 
$$
\Bigl(\mathop{\exists}_{e\in G^a_{\to}}\!\! y_e \; (\Phi_1 \land \Phi_2)\Bigr)
\;\;\Iff\;\;
\Bigl(\phi^{S' \cup P_0(a)\cup P_1(a)}(G')\Bigr)
$$
where the latter uses the definition in~\eqref{eq:graph-formula-relative}.
Thus, we can apply \autoref{corollary:loop-formulas-flexible}.
Finally, note that $\Phi_3$ ensures that $a'>a$ and $q\in\reach{H}{(T(a,a'))}$.
\end{proof}

% vim:spell:

}{}
\end{document}